\DeclareMathOperator*{\argmax}{argmax}
\newtheorem{theorem}{Theorem}
\newtheorem{definition}{Definition}
\newtheorem{remark}{Remark}
\begin{document}
%
% paper title
% can use linebreaks \\ within to get better formatting as desired
\title{Decentralized Computation Offloading and Resource Allocation in Heterogeneous Networks with Mobile Edge Computing}

%
%
% author names and IEEE memberships
% note positions of commas and nonbreaking spaces ( ~ ) LaTeX will not break
% a structure at a ~ so this keeps an author's name from being broken across
% two lines.
% use \thanks{} to gain access to the first footnote area
% a separate \thanks must be used for each paragraph as LaTeX2e's \thanks
% was not built to handle multiple paragraphs
%
%
%\IEEEcompsocitemizethanks is a special \thanks that produces the bulleted
% lists the Computer Society journals use for "first footnote" author
% affiliations. Use \IEEEcompsocthanksitem which works much like \item
% for each affiliation group. When not in compsoc mode,
% \IEEEcompsocitemizethanks becomes like \thanks and
% \IEEEcompsocthanksitem becomes a line break with idention. This
% facilitates dual compilation, although admittedly the differences in the
% desired content of \author between the different types of papers makes a
% one-size-fits-all approach a daunting prospect. For instance, compsoc 
% journal papers have the author affiliations above the "Manuscript
% received ..."  text while in non-compsoc journals this is reversed. Sigh.

%\author{Quoc-Viet Pham, Tuan LeAnh, Nguyen H. Tran, and~Choong Seon Hong% <-this % stops a space
%\thanks{\IEEEcompsocthanksitem Q.-V Pham, Tuan LeAnh, N. H. Tran and C. S. Hong are with the Department of Computer Science and Engineering, Kyung Hee University, Republic of Korea. E-mail: vietpq90@gmail.com and \{latuan, nguyenth, cshong\}@khu.ac.kr.}%
%\thanks{This work was supported by ...}}

\author{Quoc-Viet Pham, Tuan LeAnh, Nguyen H. Tran, and~Choong Seon Hong
\IEEEcompsocitemizethanks{\IEEEcompsocthanksitem The authors are all with the Department of Computer Science and Engineering, Kyung Hee University, Republic of Korea. Dr. Choong Seon Hong is the corresponding author.\protect\\
% note need leading \protect in front of \\ to get a newline within \thanks as
% \\ is fragile and will error, could use \hfil\break instead.
E-mail:  \{vietpq90, latuan, nguyenth, cshong\}@khu.ac.kr}% <-this % stops an unwanted space
%\thanks{Manuscript received December 31, 2017; revised 2018.}
}

% The paper headers
\markboth{IEEE Transactions on Mobile Computing}{Q.V. PHAM \MakeLowercase{\textit{et al.}}: Decentralized Computation Offloading and Resource Allocation in HetNets with MEC}

% use for special paper notices
%\IEEEspecialpapernotice{(Invited Paper)}

\IEEEtitleabstractindextext{%
\begin{abstract}
We consider a heterogeneous network with mobile edge computing, where a user can offload its computation to one among multiple servers. In particular, we minimize the system-wide computation overhead by jointly optimizing the individual computation decisions, transmit power of the users, and computation resource at the servers. The crux of the problem lies in the combinatorial nature of multi-user offloading decisions, the complexity of the optimization objective, and the existence of inter-cell interference. Then, we decompose the underlying problem into two subproblems: i) the offloading decision, which includes two phases of user association and subchannel assignment, and ii) joint resource allocation, which can be further decomposed into the problems of transmit power and computation resource allocation. To enable distributed computation offloading, we sequentially apply a many-to-one matching game for user association and a one-to-one matching game for subchannel assignment. Moreover, the transmit power of offloading users is found using a bisection method with approximate inter-cell interference, and the computation resources allocated to offloading users is achieved via the duality approach. The proposed algorithm is shown to converge and is stable. Finally, we provide simulations to validate the performance of the proposed algorithm as well as comparisons with the existing frameworks.
\end{abstract}

% Note that keywords are not normally used for peerreview papers.
\begin{IEEEkeywords}
Heterogeneous Networks, Matching Theory, Mobile Edge Computing, Resource Allocation.
\end{IEEEkeywords}}

% make the title area
\maketitle
\IEEEdisplaynontitleabstractindextext
\IEEEpeerreviewmaketitle

\section{Introduction}	
\label{Sec:Introduction}
\IEEEPARstart{W}{ith} the radically increasing popularity of mobile terminals such as smart phones and tablet computers, a wide-range of mobile applications are constantly emerging, including real-time online gaming, augmented reality, natural language processing, and ultra-high-definition video streaming. These new mobile applications usually have stringent requirements of real-time communication, high energy efficiency, and intensive computation. However, mobile devices are often constrained with limited battery capacity and computation capability. To tackle these issues, mobile cloud computing (MCC) has been successfully developed for the last decade \cite{Chen2015Decentralized}. In MCC, a mobile user exploits remote cloud data centers, which are enormously powerful in terms of computation and storage resources, by offloading its computation tasks and data through the core of the wireless networks \cite{Fernando2013Mobile}. The achievable advantages of MCC include extension of the battery life time, provision of a high storage pool for mobile users, and the ability to deploy new sophisticated applications in mobile devices \cite{Mach2017Mobile}. However, there are serious limitations of MCC including high latency, low scalability, and high burden on fronthaul links. To address the drawbacks of MCC, a new trend called mobile edge computing (MEC) has been proposed that moves the cloud services and functions to the edge of the mobile networks. In this paper, we consider an MEC system with \textit{multiple MEC servers} and investigate an efficient scheme of \textit{distributed computation offloading} and resource (\textit{computation resource} and \textit{communication resource}) allocation.

\subsection{Prior Work and Motivation}
Different aspects of MEC systems have been thoroughly reviewed in surveys \cite{Mach2017Mobile, Mao2017_aSurveyMEC, Abbas2018Mobile}. In \cite{Mach2017Mobile}, the authors first reviewed different MEC concepts for the integration of cloud functionalities at the edge of the mobile networks, e.g., small cell clouds, mobile micro clouds, and fast moving personal clouds. Then, computation offloading in MEC systems was reviewed from the viewpoint of offloading decisions, full offloading and partial offloading. With this track, the authors in \cite{Mao2017_aSurveyMEC} presented a survey of MEC systems from the perspective of wireless communications. The literature \cite{Abbas2018Mobile} focused on a survey of emerging application scenarios and privacy and security issues in MEC systems. From the above surveys, computation offloading is a major part of any MEC system. A computation offloading scheme is generally used to decide whether mobile users should offload their computation tasks to the MEC servers or not. Moreover, a computation offloading scheme depends on many factors such as application models and requirements, computation capabilities of the mobile users and remote MEC server, radio resources, and backhaul capacity. For example, if computation tasks and data of a mobile application are allowed to be partitioned/parallelized to different parts, partial computation offloading to multiple MEC servers are available. 

Recent years have seen a large number of research literature on computation offloading in MEC systems. From the perspective of a single user, computation offloading has been considered in \cite{Zhang2013Energy, Wang2016Mobile, Dinh2017Offloading}. Considering a cloud computing model, in order to preserve energy of mobile devices, Zhang \textit{et al.} in \cite{Zhang2013Energy} proposed dynamically adjusting the CPU frequency for mobile execution and scheduling the data transmission rate for cloud execution. In addition, the authors derived an optimal threshold policy for computation offloading decisions i.e., mobile execution or cloud execution. The studies \cite{Wang2016Mobile, Dinh2017Offloading} implemented the dynamic voltage scaling technique with computation offloading for different objectives. Specifically, the authors in \cite{Wang2016Mobile} optimized the operating frequency and transmit power of the mobile devices as well as an offloading ratio for local and remote computing, in a single-task MEC system. Unlike \cite{Wang2016Mobile}, Dinh \textit{et al.} in \cite{Dinh2017Offloading} considered an MEC system with multiple access points (APs), where the computation tasks of mobile users are independent and each one can be executed either remotely by an AP or locally by the mobile device.

A number of studies have also been devoted to computation offloading with multiple users in MEC systems \cite{Wang2017Joint, Wang2017Computation, Bi2017Computation, Chen2016Efficient, CWang2017Joint, CWang2017Computation}. For instance, results on the integration of wireless power transfer and mobile edge computing were developed in \cite{Wang2017Joint, Wang2017Computation, Bi2017Computation}. Wang \textit{et al.} in \cite{Wang2017Joint} considered a time-division multiple access based MEC system where a multi-antenna access point transmits energy beamforming to charge multiple users, and the formulated a method to minimize the energy consumption at the access point. Similarly, the weighted sum of computation rate maximization problem was considered in \cite{Wang2017Computation} and solved by optimizing the transmit beamforming of the access point, the computation task partition for offloading and local computing, and the time allocation among users, which was accomplished using the Lagrange dual technique. As opposed to \cite{Wang2017Joint, Wang2017Computation} where partial offloading was considered, a computation rate maximization problem with binary offloading was formulated in \cite{Bi2017Computation}, and then solved by either a decomposition technique using the coordinated descent method or a joint algorithm using the alternating direction method of multipliers (ADMM) approach. The authors in \cite{Chen2016Efficient} first showed that finding the maximum number of offloading users is NP-Hard, and adopted a game theoretic approach to find the computation offloading decision in a distributed manner. In \cite{CWang2017Joint}, since both computation offloading and interference management are interdependent and jointly affect the network performance, a framework of computation offloading and interference management in heterogeneous networks (HetNets) was considered.  The authors in \cite{CWang2017Computation} studied a framework of computation offloading, resource allocation, and caching in HetNets, which consisted of two steps: convexifying the original problem and applying the ADMM method to propose a distributed algorithm. 

Notwithstanding numerous studies on computation offloading and resource allocation in multi-user MEC systems, these works generally considered only one MEC server. Since (ultra-dense) HetNets have been considered as important parts of 5G networks, it seems quite possible that there are multiple MEC servers (each one is connected to and collocated with a small base station) over a specific area to provide connectivity and services to multiple users. Moreover, due to the dynamics and unplanned deployment of HetNets, as well as the possibility of missing a central entity, it is necessary to design distributed computation offloading approaches. There are few existing studies on computation offloading in multi-cell heterogeneous networks with mobile edge computing \cite{Sardellitti2015Joint, Wang2016Mobile, Tuyen2017Joint, Sato2017Radio, Emara2017MEC}. Considering a HetNet where multiple SeNBs connect to a common cloud server and assuming that the sets of offloading users and non-offloading users are given, the authors in \cite{Sardellitti2015Joint} jointly optimized the transmit precoding of users and computation resources of the cloud server so as to minimize the total energy consumption of mobile users and guarantee the latency constraints. Besides single cloud servers, the authors in \cite{Wang2016Mobile} also extended their work to multiple cloud servers; however, the problem was considered with only one mobile device. Additionally, a partial computation offloading policy was assumed and the computation resources at the MEC servers were not considered. In a single cloud server, Lyu \textit{et al.} in \cite{Lyu2017Multiuser} found the optimal offloading decision by a heuristic algorithm. The work in \cite{Lyu2017Multiuser} was extended to multiple cloud servers in \cite{Tuyen2017Joint}, where the offloading decision problem was solved by heuristically performing either a remove operation or an exchange operation at each step. Sato and Fujii in \cite{Sato2017Radio} proposed two approaches for computation offloading; the first and second use the radio environment map to predict connectivity and the received signal power to estimate distance between the offloading user and MEC servers, respectively. However, only one user is considered and user association with the MEC servers is based on the conventional concept, i.e., reference signal received power (RSRP) based user association. The authors in \cite{Emara2017MEC} considered a multi-tier HetNet, where an MEC server is placed at each tier, illustrating that the proposed user association method is superior to the conventional user association scheme. Nevertheless, in both \cite{Sato2017Radio} and \cite{Emara2017MEC}, the offloading decision, uplink transmit power of the mobile users, and computation resources at the MEC servers are not taken into consideration

\subsection{Contributions of this paper}
We aim to solve the \textit{computation offloading decision} problem in a \textit{distributed} manner and efficiently optimize the \textit{radio and computation resource allocation} in \textit{multi-cell heterogeneous networks} with mobile edge computing. It has been discussed in \cite{Kamel2016Ultra} that the concept of (ultra-dense) HetNets illustrates a new paradigm shift in next-generation networks, where a large number of small cells are deployed in the hotspots. With the concept of (ultra-dense) HetNets and recent advancements in computing hardware, there may exist multiple MEC servers, where each one is connected to and collocated with an SeNB and is able to execute multiple computation tasks. In HetNets with mobile edge computing, a mobile user can either handle its computation locally or send a request to one among multiple MEC servers for computation offloading. In particular, we are interested in minimizing the system-wide computation overhead by jointly optimizing the individual \textit{computation decision}, \textit{transmit power} of mobile users, and \textit{computation resources} for offloading users at the MEC servers. The considered problem represents difficulties caused by the combinatorial nature of multi-user computation offloading decisions, the complexity of the optimization objective, and the existence of inter-cell interference among offloading users. While most existing studies for computation offloading focus on either centralized heuristic algorithms \cite{Lyu2017Multiuser, Tuyen2017Joint} or decentralized approaches for computation offloading in single-server systems using game-theoretic concepts \cite{Chen2015Decentralized, Chen2016Efficient}, they are not applicable to (ultra-dense) HetNets with mobile edge computing. In this paper, matching theory, a powerful tool to design distributed algorithms for a large number of resource allocation problems in wireless communication, including heterogeneous networks, device-to-device (D2D) communications, cognitive radio networks, and physical layer security \cite{Gu2015Matching, Bayat2016Matching, Han2017Matching}, is adopted to provide the distributed computation offloading decision in multi-user multi-server HetNets. Our contributions can be summarized, as follows:
\begin{itemize}
\item In terms of the system model and problem formulation, we consider a network scenario with multiple SeNBs collocated with the corresponding MEC servers and multiple users, and define the objective function as the system-wide computation overhead. Then, an optimization problem is formulated subject to constraints on the MEC server and subchannel selections, maximum transmit power of mobile devices, and maximum computation resources at the MEC servers. After that, the underlying problem is decomposed into two independent parts: 1) the computation offloading decision problem, which includes two phases of user association and subchannel assignment, and 2) resource allocation, which can be further decomposed into the transmit power of mobile users and computation resource allocation at the MEC servers. 

\item In terms of the mathematical framework, we adopt two matching games to design algorithms for user association and subchannel assignment. Accordingly, a decentralized approach is investigated to determine the offloading decision. With the proposed computation offloading scheme, 1) users decide to offload their computation tasks if and only if computation offloading is advantageous to the offloading users and 2) mobile users and MEC servers make the offloading decision in a distributed and autonomous fashion. In addition, we approximate the inter-cell interference and find the transmit power of mobile users using a bisection method, and then solve the computation resource allocation problem via the Lagrange dual approach. 

\item In terms of the performance evaluation, we validate the performance of the proposed algorithm through extensive numerical experiments. Furthermore, we compare our proposed algorithm with four existing solutions: local computing only, offloading only, the heuristic offloading decision algorithm (HODA) proposed in \cite{Lyu2017Multiuser}, and the heuristic joint task offloading and resource allocation (hJTORA) proposed in \cite{Tuyen2017Joint}. The results illustrate that our proposed algorithm can achieve a performance improvement from computation offloading in terms of the number of offloading users and the system-wide computation overhead.
\end{itemize}

Our paper is organized as follows. The system model and optimization problem are explained in Section~\ref{Sec:System_Model}. In Section~\ref{Sec:Algorithm}, we apply the decomposition technique to decompose the underlying problem into subproblems and propose efficient methods to solve the problems. Simulation results are presented and discussed in Section~\ref{Sec:Simulation}. Finally, conclusions and future works are drawn in Section~\ref{Sec:Conclusion}.

\section{System Model and Problem Formulation}	
\label{Sec:System_Model}
\subsection{Network Model}
We consider a multi-cell MEC system as illustrated in Fig.~\ref{Fig:SystemModel_Example}. In the considered network, each MEC server is assumed to be collocated with an SeNB and each cell can be a small cell such as a femtocell or a picocell\footnote{From that point, we use ``MEC server" when referring to the related concepts of computation resource and computation offloading, while using ``BS" or ``SeNB" when mentioning interference, radio resource, and user association.}. In general, each computing server deployed by the network operator has a moderate computing capability and has wireless channel connections to mobile users through the corresponding BS. Small cells operate in an overlaid manner, i.e., each small cell is able to reuse the whole spectrum of the macro cell and interference among small cells exists. In addition, the spectrum in a cell is divided into subchannels and is orthogonally assigned to mobile users, and thus intra-cell interference can be fully mitigated. To enable tractable analysis and obtain useful insights, we employ a quasi-static network scenario where mobile users remain unchanged during the computation offloading period while they change across different periods. The general scenario, where users leave or join dynamically during the offloading period, is not our focus in this paper and can be considered as future work. 

\begin{figure}[!t]
\centering
\includegraphics[width=0.80\linewidth]{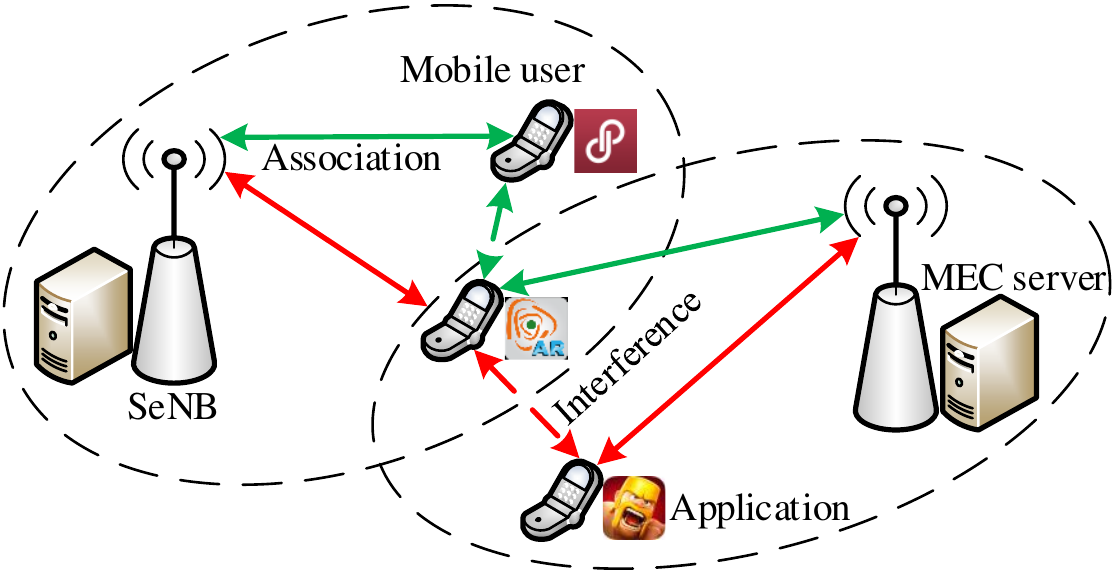}
\caption{An MEC system with two SeNBs, each with an MEC server, and three users, each with a distinct application.}
\label{Fig:SystemModel_Example}
\end{figure}

Denote by $\mathcal{M} = \left\lbrace 1,2,...,M \right\rbrace$ the set of SeNBs and by $m$ the index of the  $m$th SeNB. The set of mobile users is denoted by $\mathcal{N}= \left\lbrace 1,2,...,N \right\rbrace$ and $n$ is referred to as the $n$th mobile user. In this paper, we use Orthogonal Frequency-Division Multiple Access (OFDMA) as the multiple access scheme in the uplink. Assume that there are $S$ subchannels in a small cell, then the set of subchannels is denoted $\mathcal{S} = \left\lbrace 1,2,...,S \right\rbrace$ and $s$ is used to refer to the $s$th subchannel. Each mobile user is assigned to at most one subchannel and a subchannel is assigned to at most one mobile user. The general case that a mobile user is assigned to multiple subchannels and a subchannel is assigned to multiple mobile users will be considered in future work.

\subsection{Communication Model}
For every small cell, an MEC server is collocated with the corresponding SeNB; therefore, a mobile user can offload its computation task to the MEC server via the SeNB. We define the offloading decision profile as $\boldsymbol{A} = \left\lbrace a_{nm}^{s} | n \in \mathcal{N}, m \in \mathcal{M}, s \in \mathcal{S} \right\rbrace$. Specifically, $a_{nm}^{s} = 1$ if the user $n$ offloads its computation task to the MEC server $m$ on the subchannel $s$, and $a_{nm}^{s} = 0$ otherwise. Since each computation task can be either computed locally or remotely, we have the following constraint:
\begin{equation} \label{Eq:Constr_UE_OffloadingDecision}
\sum\nolimits_{m \in \mathcal{M}}\sum\nolimits_{s \in \mathcal{S}}a_{nm}^{s} \leq 1, \forall n \in \mathcal{N}.
\end{equation}
Each SeNB assigns a subchannel to at most one mobile user, so the following constraint must be satisfied:
\begin{equation} \label{Eq:Constr_MEC_OffloadingDecision}
\sum\nolimits_{n \in \mathcal{N}}a_{nm}^{s} \leq 1, \forall m \in \mathcal{M}, s \in \mathcal{S}.
\end{equation}
In addition, since each MEC server is collocated with a SeNB, which often has limited hardware capability \cite{Chandrasekhar2008Femtocell}, the number of mobile users offloading to an MEC server should be constrained by 
\begin{equation} \label{Eq:Constr_Quota}
\sum\nolimits_{n \in \mathcal{N}}\sum\nolimits_{s \in \mathcal{S}}a_{nm}^{s} \leq q_{m}, \forall m \in \mathcal{M},
\end{equation}
where $q_{m}$ is called a \textit{quota}, which represents the maximum number of mobile users the MEC server $m$ can serve. In this paper, the quota $ q_{m} $ corresponds equally to the number of subchannels in each cell.

Given the offloading decision profile $\boldsymbol{A}$, the uplink data rate of the mobile user $n$ when it offloads the computation task $I_{n}$ to the MEC server $m$ over the subchannel $s$ is defined as $ R_{nm}^{s}\left( \boldsymbol{A}^{s}, \boldsymbol{P} \right) = B_{s} \log_{2}\left( 1 + \Gamma_{nm}^{s}\left( \boldsymbol{A}^{s}, \boldsymbol{P} \right) \right), \forall n, m, s, $ where $B_{s}$ is the bandwidth of the subchannel $s$, $\boldsymbol{A}^{s} = \left\lbrace a_{nm}^{s} | n \in \mathcal{N}, m \in \mathcal{M} \right\rbrace$ is the offloading decision profile on the subchannel $s$, and $\Gamma_{nm}^{s}$ is the signal-to-interference-plus-noise ratio (SINR) of the mobile user $n$ that offloads to the MEC server $m$ on the subchannel $s$, which can be written as 
\begin{equation} \label{Eq:SINR}
\Gamma_{nm}^{s}\left( \boldsymbol{A}^{s}, \boldsymbol{P} \right) = \frac{p_{n}^{s}h_{nm}^{s}}{n_{0} + \sum\limits_{j \neq m}\sum\limits_{k \in \mathcal{N}_{j}}a_{kj}^{s}p_{k}h_{km}^{s}}.
\end{equation}
Here, $n_{0}$ is the power spectral density of additive white Gaussian noise which is identical for all mobile users, $\boldsymbol{P} = \left\lbrace \boldsymbol{p}_{1},..., \boldsymbol{p}_{n},...,\boldsymbol{p}_{N} \right\rbrace$ is the transmit power vector of all mobile users, $\boldsymbol{p}_{n} = \lbrace p_{n}^{1},...,p_{n}^{S} \rbrace$ is the transmit power vector of mobile user $n$ with $p_{n}^{s}$ being the transmit power (in Watts) of mobile user $n$ on subchannel $s$, and $h_{nm}^{s}$ is the uplink channel gain from the mobile user $n$ to the SeNB $m$ on the subchannel $s$. The second term of the denominator in~\eqref{Eq:SINR} is the total interference from other mobile users offloading to other MEC servers on the same subchannel $s$. Correspondingly, the data rate of mobile user $n$ with the SeNB $m$ is given by $ R_{nm}\left( \boldsymbol{A}, \boldsymbol{P} \right) = \sum\nolimits_{s \in \mathcal{S}}a_{nm}^{s}R_{nm}^{s}\left( \boldsymbol{A}^{s}, \boldsymbol{P} \right), \forall n \in \mathcal{N}, m\in \mathcal{M}. $

\subsection{Computation Model of Mobile Devices}
Each mobile user $n$ has a computation task $I_{n} = \left\lbrace \alpha_{n}, \beta_{n}, \omega_{n} \right\rbrace$ \cite{Chen2016Efficient, Mao2017_aSurveyMEC}, where $\alpha_{n}$ is the computation input data size (in bits), $\beta_{n}$ is the number of CPU cycles required to complete the task $I_{n}$, i.e., computation workload or computation intensity, and $\omega_{n}$ is the computational result, i.e., output data (in bits). Each computation task can be executed either locally or remotely\footnote{Generally, there are two types of computation offloading: binary offloading and partial offloading. In the former case, as considered in our work, an integrated or a simple task can not be partitioned into sub-tasks and then must be executed either locally  at the mobile user or remotely at the MEC server. In the meanwhile, in partial offloading a task can be arbitrarily divided into sub-tasks, which can be executed at multiple MEC servers \cite{Mao2017_aSurveyMEC}.}. 

For local computing, the computation task $I_{n}$ is executed by the mobile user $n$. We denote $f_{n}^{l}$ as the computational capability (in CPU cycles per second) of the mobile user $n$, where the superscript $ l $ stands for \textit{local}. Due to the heterogeneity of the mobile devices, different mobile users can have different computational capabilities. Let $t_{n}^{l}$ be the completion time of the task $I_{n}$ by the mobile user $n$, which can be computed as $ t_{n}^{l} = \frac{\beta_{n}}{f_{n}^{l}}. $ To compute the energy consumption $E_{n}^{l}$ (in Joules) of the mobile user when the task is executed locally, we adopt the model in \cite{Mao2017_aSurveyMEC, Lyu2017Multiuser, Wang2017_JointComputation}. Specifically, $E_{n}^{l}$ can be derived as $ E_{n}^{l} = \kappa_{n} \beta_{n} \left( f_{n}^{l} \right)^{2}, $ where $\kappa_{n}$ is a coefficient relating to the chip's hardware architecture. According to the measurements in \cite{Wang2017_JointComputation}, we set $k_{n} = 5 \times 10^{-27}$. It is worth noting that $t_{n}^{l}$ and $E_{n}^{l}$ depend on unique features of the mobile user $n$ and the running application; therefore, they can be computed in advance. 

The computation overhead by the local computing approach  in terms of the computational time and energy consumption is computed as $ Z_{n}^{l} = \lambda_{n}^{t}t_{n}^{l} + \lambda_{n}^{e}E_{n}^{l}, $ where $\lambda_{n}^{t} \in \left[ 0,1 \right]$ and $\lambda_{n}^{e} + \lambda_{n}^{t} = 1$ are respectively weighted parameters of the computational time and energy consumption of the mobile user $n$. Similar to the heterogeneous computation tasks of mobile users, different mobile users may have different values of $\lambda_{n}^{t}$ and $\lambda_{n}^{e}$. The weighted parameters can affect the offloading decisions of mobile users. Consider a network scenario with three mobile users as an example, where the first mobile user with a latency-sensitive application sets $\lambda_{n}^{t} = 1$ and $\lambda_{n}^{e} = 0$, the second mobile user with an energy-hungry application and low battery state can set the weighted parameters $\lambda_{n}^{t} = 0$ and $\lambda_{n}^{e} = 1$, and the third mobile user can set $0 < \lambda_{n}^{t}, \lambda_{n}^{e} < 1$ if it takes both computational time and energy consumption into consideration of the offloading decision.

\subsection{Computation Model of MEC Servers}
In the case where a mobile user cannot execute the computation task due to a limited battery or application requirements, the mobile user will offload the computation task to the designated MEC server. To offload the computation task, a mobile user incurs extra overhead in terms of the time and energy consumption. The extra overhead in time is composed of the transmission time of the computation input data to the MEC server, the execution time of the computation task at the MEC server, and the transmission time of the computational result back to the mobile user. The extra overhead in energy consumption includes the energy consumption for computation offloading, execution of the computation task, and transmission of the computational result back to the mobile user. Since the focus of our work is on the perspective of mobile users and since MEC servers are generally powered by cable power supply \cite{CWang2017Joint, Chen2016Efficient}, we ignore the energy consumption for remote execution of the computation task. Moreover, the computational result is relatively small compared to the input data, so the time and energy consumption for transmission of the computational result back to the mobile user are therefore neglected. 

The time and energy costs for offloading the computation task $I_{n}$ are, respectively, computed as 
\begin{equation} \label{Eq:overhead_r_offloadingtime}
t_{n}^{\text{off}}\left( \boldsymbol{A}, \boldsymbol{P} \right) = \sum\limits_{m \in \mathcal{M}}\frac{\alpha_{n}a_{nm}}{R_{nm}}, \forall n \in \mathcal{N},
\end{equation}
where $a_{nm} = \sum\nolimits_{s \in \mathcal{S}}a_{nm}^{s}$, and 
\begin{equation} \label{Eq:overhead_r_offloadingenergy}
E_{n}^{\text{off}}\left( \boldsymbol{A}, \boldsymbol{P} \right) = p_{n}t_{n}^{\text{off}} = \frac{p_{n}}{\zeta_{n}} \alpha_{n}\sum\limits_{m \in \mathcal{M}}\frac{a_{nm}}{R_{nm}}, \forall n \in \mathcal{N},
\end{equation}
where $p_{n} = \boldsymbol{p}_{n}^{T}\boldsymbol{1}$ with $ \boldsymbol{X}^{T} $ being the normal transpose $ \boldsymbol{X} $, $\zeta_{n}$ is the power amplifier efficiency of the mobile user $n$. The execution time of the computation task $I_{n}$ is given by $ t_{n}^{\text{exe}}\left( \boldsymbol{A}, \boldsymbol{F}_{n} \right) = \sum\nolimits_{m \in \mathcal{M}}\frac{a_{nm}\beta_{n}}{f_{nm}}, \forall n \in \mathcal{N}, $ where $f_{nm}$ is the computational capability (in CPU cycles per second) that is assigned to mobile user $n$ by the MEC server $m$ in order to accomplish the task $I_{n}$. Here, $\boldsymbol{F} = \left\lbrace \boldsymbol{F}_{n} | n \in \mathcal{N} \right\rbrace$ is the computation resource profile, where $\boldsymbol{F}_{n} = \left\lbrace f_{nm} | \forall m \in \mathcal{M} \right\rbrace$ is the computation resource vector of the mobile user $n$. 

Since we assume that an MEC server is collocated with an SeNB in a small cell, the computational capability of an MEC server is often limited. Therefore, for each MEC server, the total computation resources assigned to all offloading users cannot exceed its maximum computational capability $f_{m}^{\max}$, i.e., the constraint, $ \sum\nolimits_{n \in \mathcal{N}}f_{nm} \leq f_{m}^{\max}, \forall m \in \mathcal{M} $, must be satisfied.

Similar to the computation overhead due to the local computing approach, overhead of the remote computing approach can be computed as $ Z_{n}^{r}\left( \boldsymbol{A}, \boldsymbol{P}, \boldsymbol{F}_{n} \right) = \lambda_{n}^{t}\left( t_{n}^{\text{off}} + t_{n}^{\text{exe}} \right) + \lambda_{n}^{e}E_{n}^{\text{off}}. $

\subsection{Problem Formulation}
Since our focus is to minimize the system-wide computation overhead in terms of the computational time and energy consumption, the objective function is defined as $Z\left( \boldsymbol{A}, \boldsymbol{P}, \boldsymbol{F} \right) = \sum\nolimits_{n \in \mathcal{N}} Z_{n}\left( \boldsymbol{A}, \boldsymbol{P}, \boldsymbol{F}_{n} \right)$, where $Z_{n}$ is given by $ Z_{n} = \left( 1 - \sum\nolimits_{m \in \mathcal{M}}a_{nm} \right) Z_{n}^{l} + \left( \sum\nolimits_{m \in \mathcal{M}}a_{nm} \right)Z_{n}^{r}, \forall n \in \mathcal{N}. $

For a given offloading decision profile $\boldsymbol{A}$, power allocation $\boldsymbol{P}$, and computation resource $\boldsymbol{F}$, as well as the objective function $Z\left(\cdot\right)$, we have the optimization formulation problem of joint computation offloading decision and resource allocation (OPT-JCORA) as follows:
\begin{align} 
& \min_{\left\lbrace \boldsymbol{A}, \boldsymbol{P}, \boldsymbol{F} \right\rbrace} Z\left( \boldsymbol{A}, \boldsymbol{P}, \boldsymbol{F} \right) \label{Eq:OPT_ObjectiveFunction} \\
& \text{s.t.} 
\: \text{C1:} \; 0 < p_{n}^{s} \leq p_{n}^{\max}, \forall s \in \mathcal{S}, \forall n \in \mathcal{N}_{\text{off}} \label{Eq:OPT_Constr_PowerBudget} \\
& 
\quad\; \text{C2:} \; a_{nm}^{s} = \left\lbrace 0, 1 \right\rbrace, \forall n \in \mathcal{N}, m \in \mathcal{M}, s \in \mathcal{S} \label{Eq:OPT_Constr_OffloadingDecision} \\ 
& 
\quad\; \text{C3:} \; \sum\limits_{m \in \mathcal{M}}\sum\limits_{s \in \mathcal{S}}a_{nm}^{s} \leq 1, \forall n \in \mathcal{N} \label{Eq:OPT_Constr_UE_OffloadingDecision} \\
& 
\quad\; \text{C4:} \; \sum\limits_{n \in \mathcal{N}}a_{nm}^{s} \leq 1, \forall m \in \mathcal{M}, s \in \mathcal{S} \label{Eq:OPT_Constr_MEC_OffloadingDecision} \\ 
& 
\quad\; \text{C5:} \; \sum\limits_{n \in \mathcal{N}}\sum\limits_{s \in \mathcal{S}}a_{nm}^{s} \leq q_{m}, \forall m \in \mathcal{M} \label{Eq:OPT_Constr_Quota} \\
& 
\quad\; \text{C6:} \; f_{nm} > 0, \forall n \in \mathcal{N}_{m}, \forall m \in \mathcal{M} \label{Eq:OPT_Constr_ComputingVariable} \\
& 
\quad\; \text{C7:} \; \sum\limits_{n \in \mathcal{N}}f_{nm} \leq f_{m}^{\max}, \forall m \in \mathcal{M}, \label{Eq:OPT_Constr_ComputingCapability}
\end{align}
where $p_{n}^{\max}$ is the maximum transmit power of the mobile user $n$, $\mathcal{N}_{m} = \left\lbrace n \in \mathcal{N} | a_{nm} = 1 \right\rbrace$ is the set of mobile users that offload their computation tasks to the MEC server $m$ and $\mathcal{N}_{\text{off}} = \bigcup\nolimits_{m \in \mathcal{M}}\mathcal{N}_{m}$ is the set of offloading mobile users that are not able to compute their tasks locally. In the optimization formulation~\eqref{Eq:OPT_Constr_UE_OffloadingDecision}, if $n \notin \mathcal{N}_{\text{off}}$, $p_{n} = 0$, i.e., the mobile user $ n $ executes the task locally. In addition, if $n \notin \mathcal{N}_{m}$, $f_{nm} = 0$, i.e., the MEC server $m$ does not assign any computation resource to the mobile user $n$. The constraint C1 makes sure that the transmit power of mobile user $n$ does not exceed the maximum value. The constraint C2 denotes that $\boldsymbol{A}$ is a binary vector. The constraints C3, C4, and C5 guarantee that a computation task is executed either locally or remotely, a subchannel is assigned to at most one mobile user, and the maximum number of mobile users $q_{m}$ can offload to an MEC server $m$, as explained in~\eqref{Eq:Constr_UE_OffloadingDecision}, \eqref{Eq:Constr_MEC_OffloadingDecision}, and~\eqref{Eq:Constr_Quota}, respectively. The constraint C6 denotes that the assigned computation resource from an MEC server to a mobile user is positive, and the constraint C7 makes sure that the total computation resource of an MEC server assigned to offloading users does not exceed its maximum value $f_{m}^{\max}$. 

The considered problem~\eqref{Eq:OPT_Constr_UE_OffloadingDecision} is difficult to solve due to the following reasons:
\begin{itemize}
\item There exist relationships among $\boldsymbol{A}$, $\boldsymbol{P}$, and $\boldsymbol{F}$. In addition, the data rate of mobile users in the denominator in~\eqref{Eq:overhead_r_offloadingtime} and~\eqref{Eq:overhead_r_offloadingenergy} makes the objective function highly complicated. Therefore, the objective function $U\left( \boldsymbol{A}, \boldsymbol{P}, \boldsymbol{F} \right)$ is not a convex function.
\item There are three set of optimization variables: offloading decision $\boldsymbol{A}$, power allocation $\boldsymbol{P}$, and computation resource $\boldsymbol{F}$. $\boldsymbol{P}$ and $\boldsymbol{F}$ are continuous variables while $\boldsymbol{A}$ is a binary variable; therefore, the feasible solution set of the problem~\eqref{Eq:OPT_Constr_UE_OffloadingDecision} is not convex.
\end{itemize}
To enable distributed computation offloading, in the next section, we will decompose problem~\eqref{Eq:OPT_Constr_UE_OffloadingDecision} into two parts: the computation offloading decision problem including the two phases of user association and subchannel assignment, and the resource allocation problem including the transmit power of mobile users and the computation resource allocation at MEC servers. 

\section{Proposed Algorithm}
\label{Sec:Algorithm}
\subsection{Problem Decomposition} \label{SubSec:ProblemDecomposition}
The OPT-JCORA is a mixed-integer and non-linear optimization problem since the offloading decision $\boldsymbol{A}$ is an binary vector and $\boldsymbol{P}$ and $\boldsymbol{F}$ are continuous vectors. In addition, the OPT-JCORA problem is NP-Hard \cite{Lyu2017Multiuser, Pochet2006Production}. As a result, it is difficult to obtain an optimal solution to the underlying problem~\eqref{Eq:OPT_Constr_UE_OffloadingDecision}. Observe from the OPT-JCORA problem~\eqref{Eq:OPT_Constr_UE_OffloadingDecision} that the resource constraints C1, C6, and C7 are decoupled from the computation offloading constraints C2, C3, C4, and C5; therefore, it is possible to decompose the OPT-JCORA problem into two subproblems: one for joint computation and communication resource allocation (JCCRA) and one for the computation offloading (CO) decision. The JCCRA subproblem is written as follows: 
\begin{align} 
& \min_{ \boldsymbol{P}, \boldsymbol{F} } \sum\limits_{n \in \mathcal{N}_{\text{off}}} Z_{n}\left( \boldsymbol{A}, \boldsymbol{P}, \boldsymbol{F}_{n} \right) \label{Eq:JCCRA_ObjectiveFunction} \\
& \text{s.t.} 
\: \text{C1, C6, C7}.
\end{align}
The objective value to the problem~\eqref{Eq:JCCRA_ObjectiveFunction}, defined as $Z(\boldsymbol{A})$, is a function of the offloading decision vector $\boldsymbol{A}$. Then, the CO subproblem is formulated as
\begin{align} 
& \min_{ \boldsymbol{A} } Z\left( \boldsymbol{A} \right) \label{Eq:CO_ObjectiveFunction} \\
& \text{s.t.} 
\: \text{C2, C3, C4, C5}.
\end{align}
By solving the two subproblems~\eqref{Eq:JCCRA_ObjectiveFunction} and~\eqref{Eq:CO_ObjectiveFunction} sequentially in each iteration until convergence, the solution to the underlying problem~\eqref{Eq:OPT_Constr_UE_OffloadingDecision} can be finally obtained. The proposed framework for solving the OPT-JCORA problem is summarized in Fig.~\ref{Fig:Proposed_Framework}.

\subsection{Computation Offloading As a Matching Problem} \label{SubSec:ComputationOffloading}
We consider the offloading decision problem for a given $\boldsymbol{P}$ and $\boldsymbol{F}$ by solving the following optimization problem:
\begin{align} 
& \min_{\boldsymbol{A}} Z\left( \boldsymbol{A} \right) \label{Eq:OPT_CO_ObjectiveFunction} \\
& \text{s.t.} 
\: a_{nm}^{s} = \left\lbrace 0, 1 \right\rbrace, \forall n \in \mathcal{N}, m \in \mathcal{M}, s \in \mathcal{S} \label{Eq:OPT_CO_Constr_OffloadingDecision} \\ 
& 
\quad\; \sum\limits_{m \in \mathcal{M}}\sum\limits_{s \in \mathcal{S}}a_{nm}^{s} \leq 1, \forall n \in \mathcal{N} \label{Eq:OPT_CO_Constr_UE_OffloadingDecision} \\
& 
\quad\; \sum\limits_{n \in \mathcal{N}}a_{nm}^{s} \leq 1, \forall m \in \mathcal{M}, s \in \mathcal{S} \label{Eq:OPT_CO_Constr_MEC_OffloadingDecision}  \\
&
\quad\; \sum\limits_{n \in \mathcal{N}}\sum\limits_{s \in \mathcal{S}}a_{nm}^{s} \leq q_{m}, \forall m \in \mathcal{M} \label{Eq:OPT_CO_Constr_Quota} \\ 
& 
\quad\;  p_{n}^{s} = p_{n}^{\max}/S, \forall n \in \mathcal{N} \label{Eq:OPT_CO_Constr_UniformPower}  \\
& 
\quad\;  f_{nm} = f_{m}^{\max}/q_{m}, \forall n \in \mathcal{N}, m \in \mathcal{M}.  \label{Eq:OPT_CO_Constr_UniformComputationResource}
\end{align}
In order to evaluate the average contribution of each mobile user to the objective $Z(\boldsymbol{A})$, it is assumed that the total transmit power of each mobile user is divided equally among subchannels and the total computation resources of each MEC server is uniform among the maximum $q_{m}$ offloading users, i.e., $p_{n}^{s} = p_{n}^{\max}/S$ and $f_{nm} = f_{m}^{\max}/q_{m}$, as illustrated in constraints~\eqref{Eq:OPT_CO_Constr_UniformPower} and~\eqref{Eq:OPT_CO_Constr_UniformComputationResource}, respectively.

\begin{figure}[!ht]
	\centering
	\includegraphics[width=0.95\linewidth]{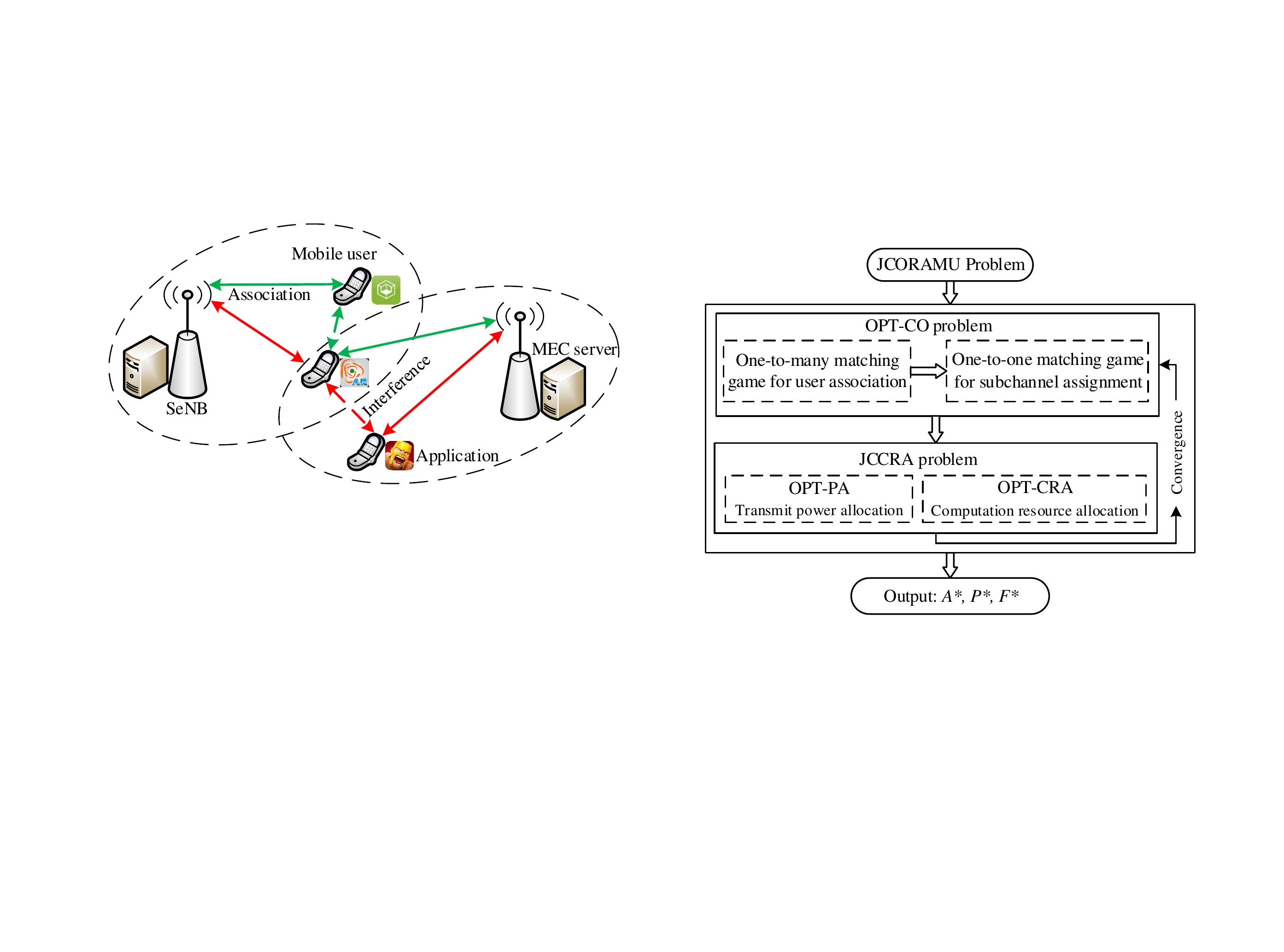}
	\caption{Proposed framework.}
	\label{Fig:Proposed_Framework}
\end{figure}

The OPT-CO subproblem is a mixed integer nonlinear programming (MINP) as well as an NP-Hard problem \cite{Lyu2017Multiuser, Pochet2006Production}. To solve the OPT-CO problem, the objective function~\eqref{Eq:OPT_CO_ObjectiveFunction} is boiled down to the following:
\begin{align} \label{Eq:RewrittenObjective}
& Z\left( \boldsymbol{A} \right) = \sum\limits_{n \in \mathcal{N}}Z_{n}^{l} + \\
& \sum\limits_{n \in \mathcal{N}}a_{n}\left[ \sum\limits_{m \in \mathcal{M}}a_{nm} \left( \frac{\lambda_{n}^{t} \alpha_{n} + \lambda_{n}^{e}p_{n}\alpha_{n}\zeta_{n}^{-1}}{R_{nm}} + \frac{\lambda_{n}^{t}\beta_{n}}{f_{nm}} \right) - Z_{n}^{l} \right],
\end{align}
where $a_{n} = \sum\nolimits_{m \in \mathcal{M}}a_{nm}$, whose value denotes the offloading decision of mobile user $n$, i.e., $a_{n} = 1$ ($a_{n} = 0$) indicates that the mobile user $n$ offloads (executes locally). Recall that this paper considers binary offloading such that a task cannot be partitioned into subtasks. To be executed remotely at the MEC server, a computation offloading has to be profitable to the mobile user in terms of the execution latency and/or energy consumption. It is therefore observed from~\eqref{Eq:RewrittenObjective} that the mobile user $n$ executes the task locally if the following condition holds
\begin{equation} \label{Eq:Condition_Offloading}
\Upsilon_{n} \vcentcolon= \left( \frac{\lambda_{n}^{t} \alpha_{n} + \lambda_{n}^{e}p_{n}^{\min}\alpha_{n}\zeta_{n}^{-1}}{R_{nm}^{\max}} + \frac{\lambda_{n}^{t}\beta_{n}}{f_{0}} \right) - Z_{n}^{l} \geq 0,
\end{equation}
where $R_{nm}^{\max} = B_{s}\log_{2}\left( 1+ p_{n}\max\nolimits_{m \in \mathcal{M},s \in \mathcal{S}}\left\lbrace h_{nm}^{s} \right\rbrace / n_{0} \right)$ and $f_{0} = \max\nolimits_{m \in \mathcal{M}}\left\lbrace f_{m}^{\max} \right\rbrace$. Let $\mathcal{N}_{\text{loc}} = \left\lbrace n \in \mathcal{N} | \Upsilon_{n} \geq 0 \right\rbrace$ be the set of mobile users that execute their tasks locally, and $\mathcal{N}_{\text{pof}} = \left\lbrace n \in \mathcal{N} | \Upsilon_{n} < 0 \right\rbrace$ be the set of mobile users that potentially offload their tasks to the MEC servers. To find an offloading solution for the remaining mobile users, we temporarily assume that $a_{n} = 1, \forall n \in N_{\text{pof}}$, ignore the fixed parts, and rewrite the objective function in~\eqref{Eq:RewrittenObjective}, as follows: 
\begin{equation} \label{Eq:RewrittenObjective_Matching}
Z\left( \boldsymbol{A} \right) = \sum\limits_{n = 1}^{\left| \mathcal{N}_{\text{pof}} \right|}\sum\limits_{m = 1}^{M}a_{nm} \left( \frac{\lambda_{n}^{t} \alpha_{n} + \lambda_{n}^{e}p_{n}\alpha_{n}\zeta_{n}^{-1}}{ \sum\limits_{s = 1}^{S}a_{nm}^{s}B_{s}\log_{2}\left( 1 + \Gamma_{nm}^{s} \right) } + \frac{\lambda_{n}^{t}\beta_{n}}{f_{nm}} \right).
\end{equation}
From the new objective~\eqref{Eq:RewrittenObjective_Matching}, the computation offloading decision problem can now be further decomposed into two subproblems: 1) \textit{which MEC server does a  mobile user offload to} and 2) \textit{which subchannel does a mobile user utilize to offload the task}. Let $y_{nm}^{s} = a_{nm}^{s}, \forall n \in \mathcal{N}_{\text{pof}}, m \in \mathcal{M}, s \in \mathcal{S}$ and $\boldsymbol{Y} = \left[ y_{nm}^{s} \right]_{\left| \mathcal{N}_{\text{pof}} \right| \times M \times S}$. For a given $\boldsymbol{Y}$, the first subproblem finds the matching $\boldsymbol{X} = \left[ a_{nm} \right]_{\left| \mathcal{N}_{\text{pof}} \right| \times M}$ between $\left| \mathcal{N}_{\text{pof}} \right|$ mobile users and $M$ MEC servers, in order to minimize the objective function $Z(\boldsymbol{A})$. For a given $\boldsymbol{X}$, the second subproblem is to determine the matching between mobile users $\mathcal{N}_{m}$ that offload their tasks to the same MEC server $m$ and subchannels $\mathcal{S}$, with the objective of maximizing the achievable offloading rate $R_{nm}^{y}$.

In what follows, we propose distributed processes based on a one-to-many matching game to find which MEC server a mobile user offloads to, i.e, the user association, and based on a one-to-one matching game to find which subchannel a mobile user utilizes to offload the task, i.e., the subchannel assignment, \cite{Bayat2016Matching, Gu2015Matching}. In the first matching game, there are two types of players: mobile users and MEC servers. The strategy of mobile users is to select the best MEC server to maximize the benefit of computation offloading and requesting computation offloading, and the strategy of MEC servers is to either accept or reject the offloading requests from mobile users. From the constraints, each mobile user can offload the task to at most one MEC server according to~\eqref{Eq:OPT_CO_Constr_UE_OffloadingDecision} and each MEC server can execute multiple tasks from mobile users according to~\eqref{Eq:OPT_CO_Constr_Quota}. The two types of players in the second matching game are mobile users and subchannels. The strategy of mobile users is to select the most preferred subchannel in order to maximize the offloading rate, and that of the subchannels is to make a decision on either accepting or rejecting the bids from mobile users. In this second game, each mobile user is assumed to offload on at most one subchannel and each subchannel can be matched with at most one offloading user. 

\subsubsection{Matching Game for User Association}
The one-to-many matching game is defined formally below. 
\begin{definition}
Given two disjoint sets of players, $\mathcal{M}$ and $\mathcal{N}_{\text{pof}}$, a one-to-many matching function $\Psi:\mathcal{N}_{\text{pof}} \rightarrow \mathcal{M}$ is defined such that for all $n \in \mathcal{N}_{\text{pof}}$ and $m \in \mathcal{M}$
\begin{enumerate}
\item $\Psi(m) \subseteq \mathcal{N}_{\text{pof}}$ and $|\Psi(m)| \leq q_{m}$;
\item $\Psi(n) \subseteq \mathcal{M}$ and $|\Psi(n)| \in \lbrace 0, 1 \rbrace$;
\item $m = \Psi(n) \leftrightarrow n = \Psi(m)$.
\end{enumerate}
\end{definition}
This matching game is defined by a tuple $\left( \mathcal{M}, \mathcal{N}_{\text{pof}}, \mathcal{q} \right)$, with $\mathcal{q} = \lbrace q_{m} \vert \forall m \in \mathcal{M} \rbrace$ being the MEC servers' quota vector. The first condition implies that each MEC server $m$ can execute at most $q_{m}$ computation tasks as in~\eqref{Eq:OPT_CO_Constr_Quota}, the second condition indicates that each mobile user can offload the task to at most one MEC server, and the third condition implies that if the mobile user $n$ is matched with the MEC server $m$, then the MEC server $m$ is also matched with the mobile user $n$. The output of this game is a user association mapping $\Psi$ between mobile users and MEC servers. 

Next, we define $\phi_{n,\text{UA}}(m)$ as the preference of mobile user $n$ for MEC server $m$ and $\phi_{m,\text{UA}}(n)$ as the preference of MEC server $m$ for mobile user $n$. We also define $\succ_{n,\text{UA}}$ and $\succ_{m,\text{UA}}$ as the preference relations of mobile user $n$ and MEC server $m$, respectively. The notation $m_{1} \succ_{n,\text{UA}} m_{2}$ implies that the mobile user $n$ prefers the MEC server $m_{1}$ to $m_{2}$, i.e., $\phi_{n,\text{UA}}(m_{1}) > \phi_{n,\text{UA}}(m_{2})$. Similarly, the notation $n_{1} \succ_{m,\text{UA}} n_{2}$ implies that the MEC server $m$ prefers the mobile user $n_{1}$ to $n_{2}$ if the computation overhead with $n_{1}$ is smaller than that with $n_{2}$, i.e., $\phi_{m,\text{UA}}(n_{1}) < \phi_{m,\text{UA}}(n_{2})$. For the association matching game, the preference lists of mobile users and MEC servers are defined, as follows.

\noindent \textbf{Preference of the mobile user}: For user association problems in wireless multicell networks, the average SINR over all subchannels is considered as one of the most common criteria \cite{Bayat2014Distributed, Liu2016User, LeAnh2017Matching}. In this paper, the preference value of the mobile user $n$ when it offloads the task $I_{n}$ to the MEC server $m$ is defined as 
\begin{equation} \label{Eq:Preference_UA_MobileUser}
\phi_{n,\text{UA}}(m) = \varphi_{\text{UA}}\alpha_{n}^{-1}\log_{2} \left( 1 + \sum\limits_{s \in \mathcal{S}}\Gamma_{nm}^{s} \right) + \varepsilon_{\text{UA}}\beta_{n}^{-1}f_{nm},
\end{equation}
where $\varphi_{\text{UA}}$ and $\varepsilon_{\text{UA}}$ are two weighted parameters, $\Gamma_{nm}^{s}$ is specified in~\eqref{Eq:SINR} with the set $\mathcal{N}$ replaced by $\mathcal{N}_{\text{pof}}$, and $f_{nm} = f_{m}^{\max}/q_{m}$. Intuitively, the user $n$ prefers the MEC server $m_{1}$ to $m_{2}$ if the mobile user $n$ has a higher offloading rate and computation resource with $m_{1}$ than with $m_{2}$. 

\noindent \textbf{Preference of the MEC server}: We define the preference value of the MEC server $m$ when it executes the computation task from mobile user $n$ as the computation overhead, which is expressed as  
\begin{equation} \label{Eq:Preference_UA_MECServer}
\phi_{m,\text{UA}}(n) = \frac{\lambda_{n}^{t} \alpha_{n} + \lambda_{n}^{e}p_{n}\alpha_{n}\zeta_{n}^{-1}}{R_{nm}} + \frac{\lambda_{n}^{t}\beta_{n}}{f_{nm}},
\end{equation}
where $R_{nm} = B_{s}\log_{2} \left( 1 + \sum\nolimits_{s \in \mathcal{S}}\Gamma_{nm}^{s} \right)$. We say that the MEC server $m$ prefers the mobile user $n_{1}$ to $n_{2}$ (when $n_{1}$ and $n_{1}$ select the same MEC server $m$) if the mobile user $n_{2}$ has lower computation overhead than the mobile user $n_{2}$. 

\subsubsection{Distributed Algorithm for User Association}
Now, we propose a distributed algorithm to find the matching between mobile users and MEC servers while minimizing the computation overhead. The specific details of the proposed algorithm are given in Alg.~\ref{Alg:User_MECServer}, with new notations defined and described below.
\begin{algorithm}
\caption{Users - MEC servers matching algorithm}\label{Alg:User_MECServer}
\begin{algorithmic}[1]
\State \textbf{Initialization} 
\begin{enumerate}
\item $\mathcal{M}$, $\mathcal{N}_{\text{pof}}$, $b_{nm}^{\text{UA}} = 0, \forall n \in \mathcal{N}_{\text{pof}}$.
\item Construct the preference for all mobile users in  $\mathcal{N}_{\text{pof}}$ via~\eqref{Eq:Preference_UA_MobileUser} and the preference list $\mathcal{M}_{n} = \mathcal{M}, \forall n \in \mathcal{N}_{\text{pof}}$, set $\mathcal{N}_{\text{unmatched}} = \mathcal{N}_{\text{pof}}$, and initialize the list of requested users $\mathcal{N}_{m}^{\text{req}} = \emptyset$ and the list of rejected users $\mathcal{N}_{m}^{\text{rej}} = \emptyset, \forall m \in \mathcal{M}$.
\end{enumerate}
\State \textbf{Find a stable matching $\Psi^{*}$}
\While {$\sum\nolimits_{m \in \mathcal{M}}\sum\nolimits_{n \in \mathcal{N}_{\text{pof}}}b_{nm}^{\text{UA}} \neq 0$}
\For {$n = 1$ to $|\mathcal{N}_{\text{unmatched}}|$}
\State Find $m = \argmax\limits_{m \in \mathcal{M}_{n}}\phi_{n,\text{UA}}(m)$.
\State Send a request to the server $m$ by setting $b_{nm}^{\text{UA}} = 1$.
\EndFor
\For {$m = 1$ to $M$}
\State Update $\mathcal{N}_{m}^{\text{req}} \leftarrow \lbrace n: b_{nm}^{\text{UA}} = 1, \forall n \in \mathcal{N}_{\text{pof}}\rbrace$.
\State Construct the preference according to~\eqref{Eq:Preference_UA_MECServer}.
\If {$|\mathcal{N}_{m}^{\text{req}}| \leq q_{m}$}
\State Update $\mathcal{N}_{m} \leftarrow \mathcal{N}_{m}^{\text{req}}$.
\Else
\Repeat
\State Accept $n = \argmax\limits_{n \in \succ_{m,\text{UA}}} \sum\limits_{n \in \mathcal{N}_{m}^{\text{req}}} \phi_{m,\text{UA}}(n)$.
\State Update $\mathcal{N}_{m} \leftarrow \mathcal{N}_{m} \cup n$.
\Until{$|\mathcal{N}_{m}| = q_{m}$}
\EndIf
\State Update $\mathcal{N}_{m}^{\text{rej}} \leftarrow \lbrace \mathcal{N}_{m}^{\text{req}} \setminus N_{m} \rbrace$.
\State Update $\mathcal{M}_{n} \leftarrow \lbrace \mathcal{M}_{n} \setminus m \rbrace, \forall n \in \mathcal{N}_{m}^{\text{rej}}$.
\EndFor
\State Update $\mathcal{N}_{\text{unmatched}} \leftarrow \mathcal{N}_{\text{unmatched}} \cap \lbrace \mathcal{N}_{1}^{\text{rej}} \cup ... \cup \mathcal{N}_{M}^{\text{rej}}\rbrace \rbrace$.
\EndWhile
\State \textbf{End of the algorithm}: output is a stable matching $\Psi^{*}$.
\end{algorithmic}
\end{algorithm}

First, the list of potential MEC servers for each mobile user $n$ is defined as $\mathcal{M}_{n}$ and initialized as $\mathcal{M}$, and the sets of unmatched users is $\mathcal{N}_{\text{unmatched}} = \mathcal{N}_{\text{pof}}$, the set of rejected and requested users for each MEC server $m$ are defined as empty sets $\mathcal{N}_{m}^{\text{rej}}$ and $\mathcal{N}_{m}^{\text{req}}$, respectively. In addition, each mobile user constructs its preference over all potential MEC servers according to Eq.~\eqref{Eq:Preference_UA_MobileUser}. 

Next, each mobile user $n$ decides the best MEC server, which has the largest preference among $\mathcal{M}_{n}$, by using its preference relation in line 5 and sends a bit request $b_{nm}^{\text{UA}}$ to the MEC server $m$. The bid function $b_{nm}^{\text{UA}}$ is $1$ if the user $m$ offloads to the MEC server $m$ and $0$ otherwise (line 6). 

After bidding of all mobile users, each MEC server collects the bid requests from mobile users and updates the list of requested users (line 9) and constructs the preference over all requested users according to Eq.~\eqref{Eq:Preference_UA_MECServer}. The MEC server $m$ is able to accept all of the requested users if the number of requested users is smaller than its quota $q_{m}$ (lines 12), otherwise it will select $q_{m}$ among $\left| \mathcal{N}_{m}^{\text{req}} \right|$ requested users (lines 15 and 16). Unmatched users are then inserted into the set of rejected users of each MEC server. Meanwhile, each MEC server is removed from the preference list of its rejected users. Finally, the list of unmatched users is updated (line 22). Once there is no further requested user (condition checking at line 3), the algorithm stops. 

If a stable matching does not exist, it is computationally difficult to find the solution. Fortunately, the outcome of Alg.~\ref{Alg:User_MECServer} is a stable matching $\Psi^{*}$. To explain how the matching Alg.~\ref{Alg:User_MECServer} achieves a stable matching, we present the definitions of a blocking pair and a stable matching in Definitions~\ref{Def:BlockingPair} and~\ref{Def:StableMatching}, respectively \cite{Bayat2016Matching, Li2017Matching}.
\begin{definition}[Blocking Pair] \label{Def:BlockingPair}
The pair $\left( m_{0},n_{0} \right)$ is a blocking pair for the matching $\Psi$, only if $m_{0} \succ_{n,\text{UA}} m$, $m \in \Psi(n_{0})$ and $n_{0} \succ_{m,\text{UA}} n$, $n \in \Psi(m_{0})$, for $m_{0} \notin \Psi(n_{0})$ and $n_{0} \notin \Psi(m_{0})$. In other words, there exists a partnership $\left( m_{0},n_{0} \right)$ such that $ m_{0} $ and $ n_{0} $ are not matched with each other under the current matching $\Psi$ but prefer to be matched with each other.
\end{definition}
\begin{definition}[Stable Matching] \label{Def:StableMatching}
A matching $\Psi$ is said to be stable if it admits no blocking pair.
\end{definition}
\begin{theorem}
The matching $\Psi^{*}$ generated by Alg.~\ref{Alg:User_MECServer} is stable and guarantees a local optimal solution to the underlying problem.
\end{theorem}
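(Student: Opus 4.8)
The statement has two parts: (i) $\Psi^{*}$ admits no blocking pair, and (ii) $\Psi^{*}$ is a local optimum of the offloading subproblem. For part (i) the plan is the classical deferred-acceptance argument, adapted to the one-to-many structure. First I would observe that in Alg.~\ref{Alg:User_MECServer} each mobile user $n$ proposes to MEC servers in strictly decreasing order of $\phi_{n,\text{UA}}(\cdot)$, and an MEC server $m$ is removed from $\mathcal{M}_{n}$ only after it rejects $n$. Second, I would argue by contradiction: suppose $(m_{0},n_{0})$ is a blocking pair, so $m_{0}\succ_{n_{0},\text{UA}} \Psi^{*}(n_{0})$. Since $n_{0}$ ends matched to a server strictly less preferred than $m_{0}$ (or unmatched), $n_{0}$ must have proposed to $m_{0}$ at some earlier iteration and been rejected. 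Third, I would use the monotonicity of the acceptance rule: once $m_{0}$ fills its quota $q_{m_{0}}$, it only ever replaces an accepted user by a strictly more preferred one (in the $\succ_{m_{0},\text{UA}}$ sense, i.e.\ strictly lower computation overhead $\phi_{m_{0},\text{UA}}$), so the ``worst'' accepted user only improves over time. Hence at termination every $n\in\Psi^{*}(m_{0})$ satisfies $n\succeq_{m_{0},\text{UA}} n_{0}$, contradicting $n_{0}\succ_{m_{0},\text{UA}} n$ for some $n\in\Psi^{*}(m_{0})$. This closes part (i); along the way one also gets termination, since each user's preference list $\mathcal{M}_{n}$ strictly shrinks on every rejection and $|\mathcal{M}|=M$ is finite, so the while-loop runs at most $|\mathcal{N}_{\text{pof}}|\cdot M$ rounds and the bid indicators $b_{nm}^{\text{UA}}$ eventually all vanish.

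For part (ii) — local optimality for problem~\eqref{Eq:OPT_CO_ObjectiveFunction} — the plan is to connect stability to a one-sided deviation argument on the objective $Z(\boldsymbol{A})$ in the rewritten form~\eqref{Eq:RewrittenObjective_Matching}. The neighborhood I would use is the set of offloading profiles reachable from $\Psi^{*}$ by a single user changing its assigned server (with the target server's quota respected, possibly ejecting its currently worst user). I would show that if such a swap strictly decreased $Z$, it would have to strictly decrease the contribution $\phi_{m,\text{UA}}(n)=\frac{\lambda_{n}^{t}\alpha_{n}+\lambda_{n}^{e}p_{n}\alpha_{n}\zeta_{n}^{-1}}{R_{nm}}+\frac{\lambda_{n}^{t}\beta_{n}}{f_{nm}}$ summed over the reassigned users, and this is exactly the quantity the servers' preferences are built to compare. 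A strictly improving single-user reassignment would then exhibit a pair $(m_{0},n_{0})$ with $m_{0}\succ_{n_{0},\text{UA}}\Psi^{*}(n_{0})$ (the user's preference~\eqref{Eq:Preference_UA_MobileUser} being aligned with higher offloading rate / computation share, hence lower overhead) and $n_{0}\succ_{m_{0},\text{UA}} n$ for the displaced user $n$, i.e.\ a blocking pair — contradicting stability from part (i). Therefore no single-user reassignment improves $Z$, which is the claimed local optimality.

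I would expect the main obstacle to be the second part, specifically the fact that the user-side preference~\eqref{Eq:Preference_UA_MobileUser} and the server-side preference~\eqref{Eq:Preference_UA_MECServer} are \emph{not the same functional} — the user ranks by a weighted sum of $\log_2(1+\sum_s\Gamma_{nm}^s)$ and $f_{nm}$, while the server ranks by the overhead terms $1/R_{nm}$ and $1/f_{nm}$. I would need to argue that for a \emph{fixed} user $n$, both orderings over servers agree (both are monotone in the same underlying quantities $\sum_s\Gamma_{nm}^{s}$ and $f_{nm}$, so "higher rate and more compute" $\Leftrightarrow$ "lower overhead"), which is what makes the blocking-pair contradiction go through cleanly; the weighted-parameter mismatch only affects comparisons \emph{across different users}, which is not needed. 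I would also need to be slightly careful that the SINR $\Gamma_{nm}^{s}$ here is evaluated under the fixed uniform power/resource assumption of~\eqref{Eq:OPT_CO_Constr_UniformPower}–\eqref{Eq:OPT_CO_Constr_UniformComputationResource} so that a user's move does not perturb the quantities other users are ranked by — i.e., the interference coupling is frozen during this matching phase — and to state the local-optimality claim relative to that frozen model, consistent with how the subproblem~\eqref{Eq:OPT_CO_ObjectiveFunction} was posed.
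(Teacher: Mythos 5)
Your proposal matches the paper's proof in both structure and substance: the paper likewise disposes of stability by identifying Alg.~\ref{Alg:User_MECServer} with the deferred-acceptance algorithm for the two-sided matching game, and obtains local optimality by observing that a blocking pair would correspond to a strict decrease of $Z$, so the absence of blocking pairs at the final matching rules out any improving single-user reassignment. The preference-alignment obstacle you flag --- that the user-side preference \eqref{Eq:Preference_UA_MobileUser} and the server-side preference \eqref{Eq:Preference_UA_MECServer} are different functionals of the same underlying quantities --- is real, but the paper's own one-line argument passes over it silently, so your treatment is, if anything, more explicit than the original.
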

\begin{proof}
For a given transmit power allocation, computation resource allocation, and subchannel assignment, the preference of each mobile user and MEC server is fixed. Therefore, Alg.~\ref{Alg:User_MECServer} is known as the deferred acceptance algorithm in the two-sided matching problem between mobile users and MEC servers, which guarantees a stable matching \cite{Gu2015Matching}. The first part is proved. 

At each iteration $ t $, the outcome of Alg.~\ref{Alg:User_MECServer} maps to a user-server association $ \boldsymbol{X}^{\left(t\right)} $, which captures to the objective 
\begin{equation}
Z\left( \boldsymbol{X}^{\left(t\right)} \right) = \sum\limits_{n = 1}^{\left| \mathcal{N}_{\text{pof}} \right|}\sum\limits_{m = 1}^{M}x_{nm}^{(t)} \left( \frac{\lambda_{n}^{t} \alpha_{n} + \lambda_{n}^{e}p_{n}\alpha_{n}\zeta_{n}^{-1}}{R_{nm}} + \frac{\lambda_{n}^{t}\beta_{n}}{f_{nm}} \right).
\end{equation}
Assume there exists a blocking pair $\left( m_{0}, n_{0} \right)$ at iteration $ t $ such that the preference of the MEC server $m$ and mobile user $n$ can be improved when $\left( m_{0}, n_{0} \right)$ is added to the current matching $\Psi$. Accordingly, $ Z( \boldsymbol{X}^{\left(t\right)} ) > Z( \boldsymbol{X}^{\left(t+1\right)} )$, i.e., the computation overhead is reduced. According to Definition~\ref{Def:StableMatching}, there is no blocking pair at the final matching of the algorithm. As a result, the matching algorithm converges to a local optimal solution to the underlying problem. The second part is proved.
\end{proof}

\subsubsection{Matching Game for Subchannel Assignment}
After determining the user association mapping $\Psi$ or $\mathcal{N}_{m}, m \in \mathcal{M}$, the one-to-one matching game is defined to find the subchannel assignment\footnote{Here, we omit the subscript of MEC server $m$ and consider the matching definition for the set of mobile users $\mathcal{N}_{m}$ associated with MEC server $m$ and the set of subchannel $\mathcal{S}$.}. From~\eqref{Eq:RewrittenObjective_Matching}, the problem for subchannel assignment of SeNB $m$ can be expressed as
\begin{align} 
& \max_{\boldsymbol{Y}} \sum\limits_{s = 1}^{S}\left[ R_{nm}^{s} = B_{s}\log_{2}\left( 1 + \Gamma_{nm}^{s} \right) \right] \label{Eq:OPT_CA_ObjectiveFunction} \\
& \text{s.t.} 
\: y_{nm}^{s} = \left\lbrace 0, 1 \right\rbrace, \forall n \in \mathcal{N}_{m}, s \in \mathcal{S} \label{Eq:OPT_CA_Constr_OffloadingDecision} \\ 
& 
\quad\; \sum\limits_{s \in \mathcal{S}}y_{nm}^{s} \leq 1, \forall n \in \mathcal{N}_{m} \label{Eq:OPT_CA_Constr_UE_OffloadingDecision} \\
& 
\quad\; \sum\limits_{n \in \mathcal{N}_{m}}y_{nm}^{s} \leq 1, s \in \mathcal{S} \label{Eq:OPT_CA_Constr_MEC_OffloadingDecision}  \\
& 
\quad\;  p_{n}^{s} = p_{n}^{\max}/S, \forall n \in \mathcal{N}_{m}. \label{Eq:OPT_CA_Constr_UniformPower} 
\end{align} 
\begin{definition}
Given two disjoint sets of players, $\mathcal{N}_{m}$ and $\mathcal{S}$, a one-to-one matching function $\Omega:\mathcal{N}_{m} \rightarrow \mathcal{S}$ is defined such that for all $n \in \mathcal{N}_{m}$ and $s \in \mathcal{S}$
\begin{enumerate}
\item $\Omega(s) \subseteq \mathcal{N}_{m}$ and $|\Omega(s)| \in \lbrace 0, 1 \rbrace$;
\item $\Omega(n) \subseteq \mathcal{S}$ and $|\Omega(n)| \in \lbrace 0, 1 \rbrace$;
\item $n = \Omega(s) \leftrightarrow s = \Omega(n)$.
\end{enumerate}
\end{definition}
The first two conditions ensure that each mobile user can utilize at most one subchannel and a subchannel is assigned to at most one mobile user, as illustrated in~\eqref{Eq:OPT_CA_Constr_UE_OffloadingDecision} and~\eqref{Eq:OPT_CA_Constr_MEC_OffloadingDecision}, respectively, and condition 3 implies that if mobile user $n$ is matched with subchannel $s$, then subchannel $s$ is also matched with mobile user $n$. The outcome of this one-to-one matching game is the association mapping $\Omega$ between the set of mobile users $\mathcal{N}_{m}$ and the set of subchannels $\mathcal{S}$. Similar to the matching definition for user association, we define $\phi_{n,\text{CA}}(s)$ as the preference of mobile user $n$ for subchannel $s$ and $\phi_{s,\text{CA}}(n)$ as the preference of subchannel $s$ for mobile user $n$. Then, the notation $s_{1} \succ_{n,\text{CA}} s_{2}$ denotes that mobile user $n$ prefers subchannel $s_{1}$ to $s_{2}$, i.e., $\phi_{n,\text{CA}}(s_{1}) > \phi_{n,\text{CA}}(s_{1})$, and the notation $n_{1} \succ_{s,\text{CA}} n_{2}$ indicates that subchannel $s$ prefers mobile user $n_{1}$ to $n_{2}$, i.e., $\phi_{s,\text{CA}}(n_{1}) > \phi_{s,\text{CA}}(n_{2})$. For the subchannel assignment game, the preference lists of mobile users and subchannels are defined as follows: 

\noindent \textbf{Preference of the mobile user}: After determining the MEC server selection, the mobile user $m$ achieves the following preference when it accesses to the subchannel $s$
\begin{equation} \label{Eq:Preference_CA_User}
\phi_{n,\text{CA}}(s) = R_{nm}^{s}.
\end{equation}
The preference in~\eqref{Eq:Preference_CA_User} implies that 1) subchannel selection of a user only affects the achievable offloading rate, which in turn determines the offloading time, as illustrated in~\eqref{Eq:overhead_r_offloadingtime}, and 2) each mobile user prefers to offload over the subchannel that offers a higher offloading rate. 

\noindent \textbf{Preference of the MEC server for subchannels}:
The preference of the MEC server $m$ on subchannel $s$ to be matched with mobile user $n$ can be written as
\begin{equation} \label{Eq:Preference_CA_Subchannel}
\phi_{s,\text{CA}}(n) = \varphi_{\text{CA}}R_{nm}^{s} - \sum\nolimits_{m' \in \mathcal{M}, m' \neq m}\delta_{m'}^{s}g_{nm'}^{s}p_{n}^{s},
\end{equation}
where $\varphi_{\text{CA}}$ and $\delta_{m}^{s}$ are two weighted coefficients. The preference in~\eqref{Eq:Preference_CA_Subchannel} implies that the SeNB $m$ assigns the subchannel $s$ to the mobile user $n$ so as to maximize the achievable offloading rate of that user and minimize the aggregated interference to other SeNBs. 

\subsubsection{Distributed Algorithm for Subchannel Assignment}
Similar to the association matching, a distributed algorithm is designed to allocate subchannels of an SeNB to its associated users. The specific details of the algorithm are summarized in Alg.~\ref{Alg:User_Subchannels}. First, we obtain the sets of mobile users $\mathcal{N}_{m}$ from Alg.~\ref{Alg:User_MECServer} and the subchannels $\mathcal{S}$, and initialize the set of unmatched users $\mathcal{N}_{\text{unmatched}}$, the set of potential subchannels for each user $\mathcal{S}_{n}$, the lists of requested users $\mathcal{N}_{s}^{\text{req}}$ and rejected users $\mathcal{N}_{s}^{\text{rej}}$ for each subchannel $s$. Each user $n$ also constructs its preference over all potential subchannels $\mathcal{S}_{n}$ (step 3 in Initialization). Next, each mobile user $n$ selects the best subchannel $s$ (line 5) and sends an access request for subchannel $s$ to SeNB $m$ (line 6). Here, the bid value $b_{ns}^{\text{CA}}$ is set to $1$ if the mobile user $n$ bids for the subchannel $s$ and $0$ otherwise. At SeNB $m$, the list of requested users to each subchannel $s$ is updated (line 9). Then, the MEC server $m$ selects the best user among the $\left| \mathcal{N}_{s}^{\text{req}} \right|$ requested users for each subchannel $s$ (line 11) and assigns the subchannel $s$ to that user (line 12). After that, the list of unmatched users for each subchannel $s$, $\mathcal{N}_{s}^{\text{rej}}$, is updated (line 13) and each subchannel $s$ is removed from the list of potential subchannels of its rejected users $\mathcal{N}_{s}^{\text{rej}}$ (line 14). Based on the list of rejected users for all subchannels, the list of unmatched users is also updated (line 16). The algorithm stops if there is no further bidding between mobile users $\mathcal{N}_{m}$ and subchannels $\mathcal{S}$ (line 3). 
\begin{algorithm}
\caption{Users - subchannels matching algorithm}\label{Alg:User_Subchannels}
\begin{algorithmic}[1]
\State \textbf{Initialization} 
\begin{enumerate}
\item $\mathcal{N}_{m}$, $\mathcal{S}$, $b_{ns}^{\text{CA}} = 0, \forall n \in \mathcal{N}_{m}$.
\item Set $\mathcal{N}_{\text{unmatched}} = \mathcal{N}_{m}$, $\mathcal{S}_{n} = \mathcal{S}, \forall n \in \mathcal{N}_{m}$, the list of requested users $\mathcal{N}_{s}^{\text{req}} = \emptyset$ and the list of rejected users $\mathcal{N}_{s}^{\text{rej}} = \emptyset, \forall s \in \mathcal{S}$.
\item Construct the preference for all mobile users in  $\mathcal{N}_{m}$ via~\eqref{Eq:Preference_CA_User}.
\end{enumerate}
\State \textbf{Find a stable matching $\Omega^{*}$}
\While {$\sum\limits_{s \in \mathcal{S}}\sum\limits_{n \in \mathcal{N}_{m}}b_{ns}^{\text{UA}} \neq 0$}
\For {$n = 1$ to $|\mathcal{N}_{\text{unmatched}}|$}
\State Find $s = \argmax\limits_{s \in \mathcal{S}_{n}}\phi_{n,\text{CA}}(s)$.
\State Send a request to the server $m$ by setting $b_{ns}^{\text{CA}} = 1$.
\EndFor
\For {$s = 1$ to $S$}
\State Update $\mathcal{N}_{s}^{\text{req}} \leftarrow \lbrace n: b_{ns}^{\text{CA}} = 1, \forall n \in \mathcal{N}_{m}\rbrace$.
\State Construct the preference via~\eqref{Eq:Preference_CA_Subchannel}.
\State Find $n = \argmax\limits_{n \in \mathcal{N}_{s}^{\text{req}}}\phi_{s,\text{CA}}(n)$.
\State Assign the subchannel $s$ to the mobile user $n$.
\State Update $\mathcal{N}_{s}^{\text{rej}} \leftarrow \lbrace \mathcal{N}_{s}^{\text{req}} \setminus n \rbrace$.
\State Update $\mathcal{S}_{n} \leftarrow \lbrace \mathcal{S}_{n} \setminus s \rbrace, \forall n \in \mathcal{N}_{s}^{\text{rej}}$.
\EndFor
\State Update $\mathcal{N}_{\text{unmatched}} \leftarrow \mathcal{N}_{\text{unmatched}} \cap \lbrace \mathcal{N}_{1}^{\text{rej}} \cup ... \cup \mathcal{N}_{S}^{\text{rej}}\rbrace \rbrace$.
\EndWhile
\State \textbf{End of the algorithm}: outcome is a stable matching $\Omega^{*}$.
\end{algorithmic}
\end{algorithm}

\begin{theorem} \label{Theo:Convergence_StableMatching_CA}
The matching $\Omega^{*}$ generated by Alg.~\ref{Alg:User_Subchannels} is stable and can achieve a local maximum of the problem~\eqref{Eq:OPT_CA_Constr_UE_OffloadingDecision}.
\end{theorem}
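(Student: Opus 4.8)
The plan is to reuse the two-part template of Theorem~1, now adapted to the one-to-one market between $\mathcal{N}_m$ and $\mathcal{S}$. For the stability part I would first observe that, once the user association $\Psi$ is fixed and the uniform powers $p_n^s = p_n^{\max}/S$ are substituted, the preference values $\phi_{n,\text{CA}}(s)$ in~\eqref{Eq:Preference_CA_User} and $\phi_{s,\text{CA}}(n)$ in~\eqref{Eq:Preference_CA_Subchannel} are fixed scalars, so $\succ_{n,\text{CA}}$ and $\succ_{s,\text{CA}}$ are well-defined strict orders. Lines~5--6 of Alg.~\ref{Alg:User_Subchannels} have each still-unmatched user propose to its currently most preferred subchannel, while lines~11--14 have each subchannel retain only its single most preferred requester and permanently reject the rest; this is exactly the Gale--Shapley deferred acceptance procedure for a one-to-one matching, which is known to terminate in a matching that admits no blocking pair. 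Invoking that standard result (as cited in \cite{Gu2015Matching}) gives that $\Omega^*$ is stable in the sense of Definition~\ref{Def:StableMatching}, and termination follows because each non-final iteration of the while loop strictly deletes at least one subchannel from some $\mathcal{S}_n$, with $\sum_n |\mathcal{S}_n| \le |\mathcal{N}_m|\,S$ initially.

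For the local-optimality claim I would argue by contradiction, mirroring Theorem~1. Index the while-loop iterations by $t$, let $\boldsymbol{Y}^{(t)} = [y_{nm}^{s,(t)}]$ be the assignment after iteration $t$, and let $R^{(t)} = \sum_{s \in \mathcal{S}} y_{nm}^{s,(t)} R_{nm}^{s}$ be the associated value of the objective in~\eqref{Eq:OPT_CA_ObjectiveFunction}. If the terminal matching $\Omega^*$ had a blocking pair $(s_0, n_0)$ in the sense of Definition~\ref{Def:BlockingPair}, then $s_0 \succ_{n_0,\text{CA}} \Omega^*(n_0)$ and $n_0 \succ_{s_0,\text{CA}} \Omega^*(s_0)$, so re-assigning $n_0$ to $s_0$ (displacing the incumbent, if any) would raise both $\phi_{n_0,\text{CA}}$ and $\phi_{s_0,\text{CA}}$ and hence strictly increase the aggregate rate, i.e. $R^{(t+1)} > R^{(t)}$, contradicting convergence of the proposing process. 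By Definition~\ref{Def:StableMatching} the output admits no blocking pair, hence no unilateral re-assignment of a single user onto a single (occupied or free) subchannel can improve the SeNB's objective, which is the desired local maximality of $\Omega^*$ for problem~\eqref{Eq:OPT_CA_Constr_UE_OffloadingDecision}.

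The hard part will be closing the gap between ``no blocking pair'' and ``local maximum of~\eqref{Eq:OPT_CA_ObjectiveFunction}'': the subchannel preference~\eqref{Eq:Preference_CA_Subchannel} is the rate $R_{nm}^s$ minus an inter-cell interference penalty, not the raw rate in the objective, so an improving swap for the rate objective need not be an improving swap for the subchannel, and conversely. I would resolve this either by restating the conclusion as local optimality with respect to the interference-regularized surrogate $\sum_s \phi_{s,\text{CA}}(\Omega(s))$ that the matching actually optimizes, or by assuming the weights $\delta_{m'}^s$ are small enough relative to $\varphi_{\text{CA}}$ that the sign of the change in $\phi_{s,\text{CA}}$ agrees with the sign of the change in $R_{nm}^s$ for every candidate swap, so that blocking-pair-freeness transfers to the rate objective. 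A secondary point to state carefully is the meaning of \emph{local}: stability only rules out moves reachable by reassigning one user with at most one induced displacement, so the claim should be phrased against that neighborhood rather than as optimality against arbitrary multi-user rotations.
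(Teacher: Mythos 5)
Your proof follows essentially the same route as the paper's: stability is obtained by recognizing Alg.~\ref{Alg:User_Subchannels} as deferred acceptance with fixed preference orders, and local optimality is argued via monotone improvement of the objective over the iterations. The paper's own proof is terser on the second part --- it simply asserts that the matching at iteration $t+1$ guarantees $R_{m}(\boldsymbol{Y}^{(t)}) \leq R_{m}(\boldsymbol{Y}^{(t+1)})$ and concludes. The gap you flag at the end is real and is \emph{not} resolved in the paper: the subchannel side accepts according to $\phi_{s,\text{CA}}(n)$ in~\eqref{Eq:Preference_CA_Subchannel}, which is the rate $\varphi_{\text{CA}}R_{nm}^{s}$ minus an inter-cell interference penalty, whereas the objective in~\eqref{Eq:OPT_CA_ObjectiveFunction} is the raw sum rate; whenever a proposer displaces (or is chosen over) a competitor with higher $R_{nm}^{s}$ but a worse interference-adjusted score, the asserted monotonicity of $R_{m}$ can fail. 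Your two proposed repairs --- restating local optimality with respect to the surrogate $\sum_{s}\phi_{s,\text{CA}}(\Omega(s))$ that the market actually optimizes, or assuming $\delta_{m'}^{s}$ small relative to $\varphi_{\text{CA}}$ so that the sign of the change in $\phi_{s,\text{CA}}$ tracks the sign of the change in $R_{nm}^{s}$ --- are both legitimate ways to close it, and either yields a statement strictly more defensible than what the paper proves. Your caveat about the meaning of ``local'' (only single-pair deviations are excluded by stability, not multi-user rotations) is likewise correct and is implicit but unstated in the paper. In short: same approach, with your version being the more honest account of what the argument actually establishes.
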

\begin{proof}
The definitions of a blocking pair and stable matching for the matching problem between mobile users and subchannels are similar to those in Definition~\ref{Def:BlockingPair} and~\ref{Def:StableMatching}, respectively. For a given transmit power allocation, computation resource allocation, and association matching $\Psi^{*}$, the matching in Alg.~\ref{Alg:User_Subchannels} has the nature of deferred acceptance. Thus, a stable matching $\Omega^{*}$ can be found by Alg.~\ref{Alg:User_Subchannels}.
Note that the outcome of Alg.~\ref{Alg:User_Subchannels} at each iteration $ t $ maps to a subchannel assignment $ \boldsymbol{Y}^{\left(t\right)} $ and the objective for a $ \boldsymbol{Y}^{\left(t\right)} $ is $ R_{m}(\boldsymbol{Y}^{\left(t\right)}) = \sum\nolimits_{s = 1}^{S} R_{nm}^{s}(\boldsymbol{Y}^{\left(t\right)}) $. Moreover, the matching at iteration $ t+1 $ guarantees that $ R_{m}(\boldsymbol{Y}^{\left(t\right)}) \leq R_{m}(\boldsymbol{Y}^{\left(t + 1\right)}) $, i.e., the objective is monotonically improved during the matching process. Consequently, Theorem~\ref{Theo:Convergence_StableMatching_CA} is proved. 
\end{proof}

\subsection{Joint CCRA Subproblem} \label{SubSec:JCCRA}
For a given $\mathcal{N}_{\text{off}}$, i.e., $a_{n} = 1, \forall n \in \mathcal{N}_{\text{off}}$, the objective function of the JCCRA subproblem can be rewritten as 
\begin{equation} \label{Eq:JCCRA_New_ObjectiveFunction}
\min_{ \boldsymbol{P}, \boldsymbol{F} } \left( \sum\limits_{m \in \mathcal{M}}\sum\limits_{n \in \mathcal{N}_{m}}\frac{\lambda_{n}^{t}\alpha_{n} + \lambda_{n}^{e}u_{n}p_{n}}{R_{nm}} + \sum\limits_{m \in \mathcal{M}}\sum\limits_{n \in \mathcal{N}_{m}}\frac{\lambda_{n}^{t}\beta_{n}}{f_{nm}} \right),
\end{equation}
where $u_{n} = \alpha_{n} \left( \zeta_{n} \right)^{-1}$. Observe from the objective function~\eqref{Eq:JCCRA_New_ObjectiveFunction} that the first term is for the transmit power allocation of the mobile users and the second term is for the computation resource allocation of the MEC servers. In addition, the constraints C1, C5, and C6 are decoupled in $\boldsymbol{P}$ and $\boldsymbol{F}$. Therefore, it is possible to further decompose the JCCRA subproblem into two subproblems of $\boldsymbol{P}$ and $\boldsymbol{F}$. The two following subsections are devoted to the optimization of transmit power of the mobile users and the computation resource allocation of the MEC servers, respectively. 

\subsubsection{Transmit Power Allocation of Mobile Users}
We consider the optimization of the transmit power allocation $\boldsymbol{P}$ by solving the following problem (OPT-PA):
\begin{align} 
& \min_{ \boldsymbol{P} } \sum\limits_{m \in \mathcal{M}}\sum\limits_{n \in \mathcal{N}_{m}}\frac{\lambda_{n}^{t}\alpha_{n} + \lambda_{n}^{e}u_{n}p_{n}}{R_{nm}} \label{Eq:OPT_PA_ObjectiveFunction} \\
& \text{s.t.} \; 0 < p_{n}^{s} \leq p_{n}^{\max}, \forall n \in \mathcal{N}_{\text{pof}}, s \in \mathcal{S}.
\end{align}
Observe that the OPT-PA subproblem is a nonlinear fractional problem, which is highly complicated because of the existence of inter-cell interference among mobile users that offload to different MEC servers but on the same subchannel. With the offloading decision $\boldsymbol{A}^{*}$ from Algorithms~\ref{Alg:User_MECServer} and~\ref{Alg:User_Subchannels}, the problem~\eqref{Eq:OPT_PA_ObjectiveFunction} can be decomposed into $S$ subproblems, each subproblem can be written as 
\begin{align} 
& \min_{ \boldsymbol{P} } \sum\limits_{\left(m,n\right) \in \mathcal{G}_{s}}\frac{\lambda_{n}^{t}\alpha_{n} + \lambda_{n}^{e}u_{n}p_{n}^{s}}{R_{nm}^{s}} \label{Eq:OPT_PA_Decomposed_ObjectiveFunction} \\
& \text{s.t.} \: 0 < p_{n}^{s} \leq p_{n}^{\max}, \forall n \in \mathcal{G}_{s},
\end{align}
where $\mathcal{G}_{s} = \lbrace \left( m, n \right) \mid n \in \Psi(m), m = \Psi(n), \forall n \in \mathcal{N}_{\text{off}}, m \in \mathcal{M}, s \in \mathcal{S} \rbrace$. In~\eqref{Eq:OPT_PA_ObjectiveFunction} and~\eqref{Eq:OPT_PA_Decomposed_ObjectiveFunction}, we only consider mobile users that offload to different MEC servers but on the same subchannel. Obviously,~\eqref{Eq:OPT_PA_Decomposed_ObjectiveFunction} is still a non-linear fractional and non-convex problem due to the sum-of-ratios form of the objective function and the existence of inter-cell interference $I_{nm}^{s} = \sum\nolimits_{\left( m',n' \right) \in \mathcal{G}_{s}, n' \neq n}p_{n'}^{s}h_{n'm'}^{s}$ among mobile users in $\mathcal{G}_{s}$. One potential approach to the sum-of-ratios problem~\eqref{Eq:OPT_PA_Decomposed_ObjectiveFunction} relies on its transformation into a parametric convex programming problem \cite{Pham2017Energy}. However, in this paper, we find an approximate upper bound\footnote{Upper bound of $I_{nm}^{s}$ is due to the minimization problem~\eqref{Eq:OPT_PA_Decomposed_ObjectiveFunction}.} of $I_{nm}^{s}$ such that~\eqref{Eq:OPT_PA_Decomposed_ObjectiveFunction} can be decomposed into individual subproblems for different offloading users.

Suppose that the transmit power of mobile user $n$ is obtained by solving the following problem:
\begin{align} 
& \min_{ p_{n}^{s} } \frac{\lambda_{n}^{t}\alpha_{n}B_{s}^{-1} + \lambda_{n}^{e}u_{n}B_{s}^{-1}p_{n}^{s}}{\log_{2}\left( 1 + \frac{p_{n}^{s}h_{nm}^{s}}{n_{0} + I_{nm}^{s,0}} \right)} \label{Eq:OPT_PA_Individual_ObjectiveFunction} \\
& \text{s.t.} \: 0 < p_{n}^{s} \leq p_{n}^{\max},
\end{align}
where $I_{nm}^{s,0} =  \sum\nolimits_{\left(m',n'\right) \in \mathcal{G}_{s}, n' \neq n}p_{n'}^{\max}h_{n'm'}^{s}$. The problem~\eqref{Eq:OPT_PA_Individual_ObjectiveFunction} is still not easy to solve due to the fractional form of the objective function. In the following however, we show that the objective function of~\eqref{Eq:OPT_PA_Individual_ObjectiveFunction} is quasi-convex and the solution to~\eqref{Eq:OPT_PA_Individual_ObjectiveFunction} can be achieved using a bisection algorithm.
\begin{theorem}
The objective function of~\eqref{Eq:OPT_PA_Individual_ObjectiveFunction} is quasiconvex.
\end{theorem}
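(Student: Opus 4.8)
The plan is to establish quasiconvexity directly from the definition — by showing that every sublevel set of the objective is convex — rather than attempting a (messy) computation of the second derivative of the ratio.

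First I would fix the feasible set $(0, p_n^{\max}]$, which is an interval and hence convex, and write the objective of~\eqref{Eq:OPT_PA_Individual_ObjectiveFunction} as $g(p_n^s) = N(p_n^s)/D(p_n^s)$, where $N(p_n^s) = \lambda_n^t\alpha_n B_s^{-1} + \lambda_n^e u_n B_s^{-1} p_n^s$ is affine (hence convex) and strictly positive on the interval, and $D(p_n^s) = \log_2\bigl(1 + p_n^s h_{nm}^s/(n_0 + I_{nm}^{s,0})\bigr)$ is the composition of the concave, increasing map $x \mapsto \log_2(1+x)$ with an affine map, hence concave, and strictly positive whenever $p_n^s > 0$. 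The key structural observation is therefore that $g$ is the ratio of a nonnegative convex function to a positive concave function.

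Next, recall that $g$ is quasiconvex on the interval if and only if the sublevel set $\mathcal{L}_\gamma = \{\, p_n^s : g(p_n^s) \le \gamma \,\}$ is convex for every $\gamma \in \mathbb{R}$. For $\gamma \le 0$ the set $\mathcal{L}_\gamma$ is empty, since $g > 0$ on the interval, and hence trivially convex. For $\gamma > 0$, because $D(p_n^s) > 0$ the inequality $g(p_n^s) \le \gamma$ is equivalent to $N(p_n^s) - \gamma D(p_n^s) \le 0$; here $N$ is convex and, since $\gamma > 0$ and $D$ is concave, $-\gamma D$ is convex, so $p_n^s \mapsto N(p_n^s) - \gamma D(p_n^s)$ is convex. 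Thus $\mathcal{L}_\gamma$ is a sublevel set of a convex function and is therefore convex. Since all sublevel sets are convex, $g$ is quasiconvex on $(0, p_n^{\max}]$.

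I expect no serious obstacle; the only points requiring a little care are the degenerate cases — the numerator coefficients are nonnegative under the standing convention $\lambda_n^t, \lambda_n^e \ge 0$ (and if both vanish the objective is identically zero, so quasiconvexity is immediate) — and the fact that one must work on the left-open interval $p_n^s > 0$ so that $D(p_n^s) > 0$ and the reformulation of the sublevel inequality is valid. Once quasiconvexity is in hand, it justifies the subsequent claim that the one-dimensional bisection procedure returns the global minimizer of~\eqref{Eq:OPT_PA_Individual_ObjectiveFunction}.
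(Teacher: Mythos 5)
Your proof is correct and follows essentially the same route as the paper: both arguments multiply the sublevel inequality through by the positive concave denominator to reduce it to the sublevel set of the convex function $N - \gamma D$ (affine minus a positive multiple of a concave function), whose convexity gives quasiconvexity. Your version is marginally tidier in that it invokes the standard fact that sublevel sets of convex functions are convex and explicitly disposes of the case $\gamma \le 0$, but there is no substantive difference.
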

\begin{proof}
Let $\eta_{n}(p_{n}^{s}) = \frac{\lambda_{n}^{t}\alpha_{n}B_{s}^{-1} + \lambda_{n}^{e}u_{n}B_{s}^{-1}p_{n}^{s}}{\log_{2}\left( 1 + \frac{p_{n}^{s}h_{nm}^{s}}{n_{0} + I_{nm}^{s,0}} \right)}$, which is the ratio of a linear function and a concave function, and its sublevel sets $S_{a} = \lbrace p_{n}^{s} \in \left( 0 \phantom{c} p_{n}^{\max}\right] \mid \eta_{n}(p_{n}^{s}) \leq a \rbrace, \forall a \in \mathbf{R}^{+}$. The set $S_{a}$ can be equally expressed as 
\begin{align}
& S_{a} = \lbrace p_{n}^{s} \in \left( 0 \phantom{c} p_{n}^{\max}\right] \mid \lambda_{n}^{t}\alpha_{n}B_{s}^{-1} + \lambda_{n}^{e}u_{n}B_{s}^{-1}p_{n}^{s} \\
& \qquad\qquad - a \log_{2}\left( 1 + \frac{p_{n}^{s}h_{nm}^{s}}{n_{0} + I_{nm}^{s,0}} \right) \leq 0\rbrace, \forall a \in \mathbf{R}^{+}.
\end{align}
Let $f_{n}(p_{n}^{s}) = \lambda_{n}^{t}\alpha_{n}B_{s}^{-1} + \lambda_{n}^{e}u_{n}B_{s}^{-1}p_{n}^{s} - a \log_{2}\left( 1 + \frac{p_{n}^{s}h_{nm}^{s}}{n_{0} + I_{nm}^{s,0}} \right)$. According to \cite{Boyd2004Convex}, $S_{a}$ is a convex set if for any $\rho_{1}, \rho_{2} \in S_{a}$ and any $\theta$ with $0 \leq \theta \leq 1$, we have
\begin{equation} \label{Eq:Condition_Convex}
\theta \rho_{1} + \left( 1 - \theta \right)\rho_{2} \in S_{a}.
\end{equation}
The condition~\eqref{Eq:Condition_Convex} holds when $f_{n}\left( \theta \rho_{1} + \left( 1 - \theta \right)\rho_{2} \right) \leq 0$. Actually, $f_{n}(p_{n}^{s})$ is a convex function due to the subtraction of a linear function and a concave function. By definition, $f_{n}\left( \theta \rho_{1} + \left( 1 - \theta \right)\rho_{2} \right) \leq \theta f_{n}(\rho_{1}) + \left( 1 - \theta \right) f_{n}(\rho_{2})$. Due to $\rho_{1},\rho_{2} \in S_{a}$, we have $f_{n}(\rho_{1}) \leq 0$ and $f_{n}(\rho_{2}) \leq 0$. Therefore, the condition~\eqref{Eq:Condition_Convex} holds, $S_{a}$ is a convex set, and then $\eta_{n}(p_{n}^{s})$ is a quasiconvex function.
\end{proof}

One approach to quasiconvex optimization is a bisection algorithm, which solves a convex feasibility problem at each step \cite{Boyd2004Convex}. However, for further reduction of complexity, we use the approach proposed in \cite{Lyu2017Multiuser} to solve the quasiconvex optimization problem~\eqref{Eq:OPT_PA_Individual_ObjectiveFunction}. The basic idea is that the optimal solution $\tilde{p}_{n}^{s,*}$ either lies at the border of the constraint or satisfies the constraint $\partial \eta_{n}\left( \tilde{p}_{n}^{s,*} \right)/\partial p_{n}^{s} = 0$. We have $ \eta_{n}'\left( p_{n}^{s} \right) = \phi_{n}(p_{n}^{s})/\left( \log_{2}\left( 1 + p_{n}^{s}h_{nm}^{s}/\left(n_{0} + I_{nm}^{s,0}\right) \right) \right)^{2}, $ where 
\begin{align} \label{Eq:bisection_phi}
\phi_{n}(p_{n}^{s}) = & \lambda_{n}^{e}u_{n}B_{s}^{-1}\log_{2}\left( 1 + \frac{p_{n}^{s}h_{nm}^{s}}{n_{0} + I_{nm}^{s,0}} \right) \\ & - \frac{h_{nm}^{s}}{\ln 2}\frac{\lambda_{n}^{t}\alpha_{n}B_{s}^{-1} + \lambda_{n}^{e}u_{n}B_{s}^{-1}p_{n}^{s}}{n_{0} + I_{nm}^{s,0} + p_{n}^{s}h_{nm}^{s}}.
\end{align}
Moreover, the first-order derivative of~\eqref{Eq:bisection_phi} is expressed as
\begin{equation} \label{Eq:bisection_phi_1stderivative}
\phi_{n}'(p_{n}^{s}) = \frac{1}{\ln 2} \frac{\lambda_{n}^{t}\alpha_{n}B_{s}^{-1} + \lambda_{n}^{e}u_{n}B_{s}^{-1}p_{n}^{s}}{\left( \frac{n_{0} + I_{nm}^{s,0}}{h_{nm}^{s}} + p_{n}^{s} \right)^{2}}.
\end{equation}
From~\eqref{Eq:bisection_phi} and~\eqref{Eq:bisection_phi_1stderivative}, we have $\phi_{n}(0) < 0$ and $\phi_{n}'\left( \tilde{p}_{n}^{s} \right) > 0, \forall \tilde{p}_{n}^{s} \in \left( 0 \phantom{c} p_{n}^{\max} \right]$, i.e., $\phi_{n}(\cdot)$ is a monotonically increasing function and diminishes at $\tilde{p}_{n}^{s} = 0$. Therefore, iteratively checking the condition $\phi_{n}(\tilde{p}_{n}^{s}) \leq 0$, we can design an efficient bisection method as in Alg.~\ref{Alg:bisection_powercontrol}.
\begin{algorithm}
\caption{Bisection method for the quasiconvex optimization problem.}\label{Alg:bisection_powercontrol}
\begin{algorithmic}[1]
\State \textbf{Initialization} 
\State Set the tolerance $\varepsilon$, $p_{n}^{s,\text{l}} = 0$, and $p_{n}^{s,\text{u}} = p_{n}^{\max}$.
\State Compute $\phi_{n}\left( p_{n}^{s,\text{u}} \right)$.
\State \textbf{Find the optimal solution $\tilde{p}_{n}^{s,*}$}
\If {$\phi_{n}\left( p_{n}^{s,\text{u}} \right) \leq 0$}
\State $\tilde{p}_{n}^{s,*} = p_{n}^{s,\text{u}}$.
\Else
\Repeat
\State Set $p_{n}^{s,\text{t}} = \left( p_{n}^{s,\text{u}} + p_{n}^{s,\text{l}} \right) / 2$.
\If {$\phi_{n}\left( p_{n}^{s,\text{t}} \right) \leq 0$}
\State $p_{n}^{s,\text{l}} = p_{n}^{s,\text{t}}$.
\Else
\State $p_{n}^{s,\text{u}} = p_{n}^{s,\text{t}}$.
\EndIf
\Until{$p_{n}^{s,\text{u}} - p_{n}^{s,\text{l}} \leq \varepsilon$}
\State Set $\tilde{p}_{n}^{s,*} = \left( p_{n}^{s,\text{u}} + p_{n}^{s,\text{l}} \right) / 2$.
\EndIf
\State \textbf{Output}: the optimal solution $\tilde{p}_{n}^{s,*}$.
\end{algorithmic}
\end{algorithm}

We start the algorithm by introducing an upper-bound $p_{n}^{s,\text{u}}$ and a lower-bound $p_{n}^{s,\text{l}}$ of the transmit power $\tilde{p}_{n}^{s}$ and checking $\phi_{n}(\cdot)$ at the border of the constraint, i.e., $\tilde{p}_{n}^{s} = p_{n}^{\max}$. In each step, the interval is bisected, i.e., $p_{n}^{s,\text{t}} = \left( p_{n}^{s,\text{u}} + p_{n}^{s,\text{l}} \right) / 2$; therefore, the number of iterations required for Alg.~\ref{Alg:bisection_powercontrol} to terminate is $\lceil \log_{2}\left( p_{n}^{s,\text{u}} - p_{n}^{s,\text{l}} \right) / \varepsilon\rceil$. 

Note that the output of Alg.~\ref{Alg:bisection_powercontrol} is the approximate solution $\tilde{p}_{n}^{s,*}$ to the quasiconvex problem~\eqref{Eq:OPT_PA_Individual_ObjectiveFunction}. After finding the optimal solution to~\eqref{Eq:OPT_PA_Individual_ObjectiveFunction} for all mobile users in $\mathcal{G}_{s}$, $I_{nm}^{s}$ can be approximated as $ \tilde{I}_{nm}^{s} = \sum\nolimits_{n' \in \mathcal{G}_{s}, n' \neq n}\tilde{p}_{n'}^{s,*}h_{n'm}^{s}. $ Then, replacing $I_{nm}^{s,0}$ in~\eqref{Eq:OPT_PA_Individual_ObjectiveFunction} with $\tilde{I}_{nm}^{s}$, we obtain approximation problems for the power allocation of mobile users. The transmit power of mobile users is finally achieved by solving the approximation problems via the bisection Alg.~\ref{Alg:bisection_powercontrol}.

\subsubsection{Computation Resource Allocation of MEC servers}
The computation resource allocation $\boldsymbol{F}$ is determined by solving the following optimization problem (OPT-CRA): 
\begin{align} 
& \min_{ \boldsymbol{F} } \sum\limits_{m \in \mathcal{M}}\sum\limits_{n \in \mathcal{N}_{m}}\frac{\lambda_{n}^{t}\beta_{n}}{f_{nm}}  \label{Eq:OPT_CRA_ObjectiveFunction} \\
& \text{s.t.} 
\:\; f_{nm} > 0, \forall n \in \mathcal{N}_{m}, \forall m \in \mathcal{M} \label{Eq:OPT_CRA_Constr_ComputingVariable} \\
& 
\quad\;\; \sum\limits_{n \in \mathcal{N}_{m}}f_{nm} \leq f_{m}^{\max}, \forall m \in \mathcal{M}. \label{Eq:OPT_CRA_Constr_ComputingCapability}
\end{align}
The problem~\eqref{Eq:OPT_CRA_Constr_ComputingVariable} can be decomposed into $M$ individual problems, corresponding to $M$ MEC servers. However, even with~\eqref{Eq:OPT_CRA_Constr_ComputingVariable}, we will show that the optimal computation allocation of a single MEC server merely depends on the set of mobile users offloading to that MEC server. We have the following theorem on convexity of the OPT-CRA subproblem. 
\begin{theorem} \label{Theo:OPT_CRA_Convexity}
The OPT-CRA problem is a convex problem.
\end{theorem}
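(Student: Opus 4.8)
The plan is to check the two hallmarks of a convex program: that the objective in~\eqref{Eq:OPT_CRA_ObjectiveFunction} is a convex function of $\boldsymbol{F}$, and that the feasible set carved out by~\eqref{Eq:OPT_CRA_Constr_ComputingVariable} and~\eqref{Eq:OPT_CRA_Constr_ComputingCapability} is convex. For the objective, I would observe that it is a finite sum of terms of the form $\lambda_{n}^{t}\beta_{n}/f_{nm}$, one term per optimization variable $f_{nm}$. Since $\lambda_{n}^{t}\in[0,1]$ and $\beta_{n}>0$, each coefficient $\lambda_{n}^{t}\beta_{n}$ is nonnegative; and on the admissible domain $f_{nm}>0$ the scalar map $x\mapsto 1/x$ has second derivative $2/x^{3}>0$, hence is convex. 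Multiplying by the nonnegative constant $\lambda_{n}^{t}\beta_{n}$ preserves convexity (and when $\lambda_{n}^{t}=0$ the term is identically zero, still trivially convex), and a sum of convex functions of separate variables is convex, so the objective is convex on the positive orthant.

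For the feasible region, I would note that~\eqref{Eq:OPT_CRA_Constr_ComputingVariable} simply restricts $\boldsymbol{F}$ to the (convex) positive orthant, while each instance of~\eqref{Eq:OPT_CRA_Constr_ComputingCapability}, namely $\sum_{n\in\mathcal{N}_{m}}f_{nm}\le f_{m}^{\max}$, is an affine inequality and therefore defines a half-space. The feasible set is the intersection of these convex sets and is thus convex. Combining the two facts, OPT-CRA minimizes a convex objective over a convex set and is therefore a convex optimization problem; since the variables $\{f_{nm}\}_{n\in\mathcal{N}_{m}}$ of distinct servers appear neither in a common objective term nor in a common constraint, the program in fact decouples into $M$ independent per-server convex problems.

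The argument carries no real obstacle: the only point that deserves care is to make explicit that the constraint $f_{nm}>0$ keeps every $1/f_{nm}$ strictly inside its region of convexity (the reciprocal is neither defined nor convex for $f_{nm}\le 0$), and to record the separability across the $M$ MEC servers, which is what later licenses the per-server Lagrange-dual treatment. Everything else is a direct appeal to the standard composition rules for convex functions and convex sets.
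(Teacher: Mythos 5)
Your proposal is correct and follows essentially the same route as the paper: the paper likewise verifies convexity of the feasible set and then establishes convexity of the separable objective via second-derivative information (it writes out the diagonal Hessian with entries $2\lambda_{n}^{t}\beta_{n}/f_{nm}^{3}$ and checks $\boldsymbol{v}^{T}\nabla^{2}g\,\boldsymbol{v}\ge 0$, which is just the multivariate packaging of your scalar observation that $1/x$ is convex on $x>0$). Your remarks on the strict positivity constraint and the per-server decoupling are consistent with how the paper subsequently splits OPT-CRA into $M$ independent dual problems.
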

\begin{proof}
It is clear that the feasible solution set of the OPT-CRA is convex. The remaining task is to show the convexity of the objective function. We have the following derivatives
\begin{align}
& 
\frac{\partial^{2}}{\partial f_{nm}^{2}} \left( \sum\limits_{m \in \mathcal{M}}\sum\limits_{n \in \mathcal{N}_{m}}\frac{\lambda_{n}^{t}\beta_{n}}{f_{nm}} \right) = \frac{2\lambda_{n}^{t}\beta_{n}}{f_{nm}^{3}}, \forall n \in \mathcal{N}_{\text{off}}, m \in \mathcal{M} \\
&
\frac{\partial^{2}}{\partial f_{nm} \partial f_{kj}} \left( \sum\limits_{m \in \mathcal{M}}\sum\limits_{n \in \mathcal{N}_{m}}\frac{\lambda_{n}^{t}\beta_{n}}{f_{nm}} \right) = 0, \forall (m,m) \neq (k,j).
\end{align}
Let $\nabla^{2} g_{l}(\boldsymbol{F})$ be the Hessian matrix. Then, with all $\boldsymbol{v} \in \mathbb{R}^{N_{\text{off}}}$, $ \boldsymbol{v}^{T} \nabla^{2} g_{l}(\boldsymbol{F}) \boldsymbol{v} = \sum\nolimits_{n \in \mathcal{N}_{\text{off}}}2v_{n}^{2} \lambda_{n}^{t}\beta_{n}/f_{nm}^{3} \geq 0, $
where the equality happens if and only if $ \lambda_{n}^{t} = 0 $, i.e., the mobile user $ n $ is with an energy-hungry application. Therefore, the Hessian matrix is a positive semidefinite matrix. We conclude that OPT-CRA is a convex optimization problem.
\end{proof}

Since OPT-CRA is a convex problem, the optimal solution can be optimally achieved via the duality approach. Let $\boldsymbol{\nu} = \left\lbrace \nu_{m} \right\rbrace_{m \in \mathcal{M}}$ be the dual vector associated with the second constraint. The Lagrange function is given as
\begin{align}
& L_{\text{OPT-CRA}}(\boldsymbol{F}, \boldsymbol{\nu}) \\
& = \sum\limits_{m \in \mathcal{M}}\sum\limits_{n \in \mathcal{N}_{m}} \frac{\lambda_{n}^{t}\beta_{n}}{f_{nm}} + \sum\limits_{m \in \mathcal{M}}\nu_{m}\left( \sum\limits_{n \in \mathcal{N}_{m}}f_{nm} - f_{m}^{\max} \right).
\end{align}
Then, we define the Lagrange dual function $G_{\text{OPT-CRA}}(\boldsymbol{\nu})$ as
\begin{equation} \label{Eq:OPT_CRA_DualFunction}
G_{\text{OPT-CRA}}(\boldsymbol{\nu}) = \min_{\boldsymbol{F} \succ 0} L_{\text{OPT-CRA}}(\boldsymbol{F}, \boldsymbol{\nu}),
\end{equation}
which can be specified as the minimum of the Lagrangian function over the primal vector $\boldsymbol{F}$. Accordingly, this leads to the dual problem $ \max_{\boldsymbol{\nu} \succ 0} G_{\text{OPT-CRA}}(\boldsymbol{\nu}). $

Since the OPT-CRA subproblem is convex, the optimal computation resource $f_{nm}^{*}$ can be achieved by taking the first-order derivative of the Lagrange function $L_{\text{OPT-CRA}}(\boldsymbol{F}, \boldsymbol{\nu})$ with respect to (w.r.t.) $f_{nm}$ and setting the result equal to zero. Accordingly, we have 
\begin{equation} \label{Eq:OPT_CRA_Primal_Variable}
f_{nm}^{*} = \sqrt{\lambda_{n}^{t}\beta_{n}/\nu_{m}}.
\end{equation}
Replacing~\eqref{Eq:OPT_CRA_Primal_Variable} in~\eqref{Eq:OPT_CRA_DualFunction}, we can obtain the dual problem in $ \boldsymbol{\nu} $. This dual problem is also convex, and the optimal dual vector $\nu_{m}^{*}$ is therefore achieved by setting the first-order derivative of $G_{\text{OPT-CRA}}(\boldsymbol{\nu})$ w.r.t. $\nu_{m}^{*}$ equal to zero. We then have $ \nu_{m}^{*} = \left( \sum\nolimits_{n \in \mathcal{N}_{m}}\sqrt{\lambda_{n}^{t}\beta_{n}}/f_{m}^{\max} \right)^{2}. $
Now, substituting $ \nu_{m}^{*} $ back into~\eqref{Eq:OPT_CRA_Primal_Variable}, the optimal computation resource is obtained, as follows: 
\begin{equation}  \label{Eq:OPT_CRA_Optimal_Computation_Resource}
f_{nm}^{*} = \frac{f_{m}^{\max}\sqrt{\lambda_{n}^{t}\beta_{n}}}{\sum\nolimits_{n \in \mathcal{N}_{m}}\sqrt{\lambda_{n}^{t}\beta_{n}}}.
\end{equation}
\begin{remark}
It is revealed from~\eqref{Eq:OPT_CRA_Optimal_Computation_Resource} that the computation resource is determined by the weighted parameter $\lambda_{n}^{t}$, the number of CPU cycles required to complete the task of all mobile users. The weighted parameter, $\lambda_{n}^{t}$, can be interpreted as the importance level of computational time of the mobile user $n$. Specifically, if all mobile users have the same computation task requirement, i.e., $\beta_{n} = \beta_{k}, \forall n \neq k, n, k \in \mathcal{N}_{m}$, the larger the value of the weighted parameter $\lambda_{n}^{t}$ is, the more the computation resource should be assigned to the mobile user $n$ by the corresponding MEC server $m$ in order to minimize the processing time.
\end{remark}

\subsection{Algorithm for JCORA Problem} \label{SubSec:JointAlgorithm}
In this subsection, we propose a joint framework to find the optimal solution to the underlying problem~\eqref{Eq:OPT_Constr_UE_OffloadingDecision}. The specific details of the proposed algorithm are summarized in Alg.~\ref{Alg:JointAlgorithm}, which is referred to as JCORAMS (JCORA Multi-Server). In general, the proposed algorithm is composed of three phases: the pre-computation offloading decision, computation offloading and resource allocation, and post-computation offloading decision. The purpose of the first phase is to filter out mobile users who cannot benefit from computation offloading, i.e., those users who should execute their tasks locally, and to reduce the input dimension for the second phase, i.e., non-offloading users are not taken into account during the second phase.

\begin{algorithm}
\caption{JCORAMS algorithm.}\label{Alg:JointAlgorithm}
\begin{algorithmic}[1]
\State \textbf{Input}: $\mathcal{M}$, $\mathcal{N}$, $\mathcal{S}$. 
\State \textbf{Pre-computation offloading decision} 
\State \quad\; Each user $n$ decides its minimal offloading gain $\Upsilon_{n}$.
\State \quad\; Each user checks the condition~\eqref{Eq:Condition_Offloading} to determine the \phantom{ccic}offloading decision.
\State \textbf{Find the optimal solution}
\State \textit{Matching between mobile users and MEC servers}
\State \quad\; Each user calculates its preference according to~\eqref{Eq:Preference_UA_MobileUser}.
\State \quad\; Each MEC server constructs its preference via~\eqref{Eq:Preference_UA_MECServer}.
\State \quad\; Obtain the optimal matching $\Psi^{*}$ via Alg.~\ref{Alg:User_MECServer}.
\State \textit{Matching for subchannel allocation in a single SeNB $m$}
\For {$m = 1$ to $M$}
\State Each user has its preference over $S$ subchannels.
\State Each subchannel gets its preferences over $N_{m}$ users.
\State Obtain the optimal matching $\Omega^{*}$ via Alg.~\ref{Alg:User_Subchannels}.
\EndFor
\State \textit{Transmit power allocation of mobile users}
\For {$s = 1$ to $S$}
\State Each user in $\mathcal{G}_{s}$ finds $\tilde{p}_{n}^{s,*}$ and $\tilde{I}_{nm}^{s}$.
\State Solve~\eqref{Eq:OPT_PA_Individual_ObjectiveFunction} with $I_{nm}^{s,0} = \tilde{I}_{nm}^{s}$ for mobile user $n$. 
\EndFor
\State \textit{Computation resource allocation of MEC server}
\For {$n = 1$ to $M$}
\State SeNB $m$ collects $f_{n}^{l}$ and $\beta_{n}$ from all associated users.
\State Computation resource is allocated according to~\eqref{Eq:OPT_CRA_Optimal_Computation_Resource}.
\EndFor
\State \textbf{Post-computation offloading decision} 
\For {$n = 1$ to $\vert \mathcal{N}_{\text{pof}} \vert$}
\State Compute the value with $m \in \Psi^{*}(n)$ and $s \in \Omega^{*}(n)$
\begin{equation}
\Upsilon_{n}^{*} = \frac{\lambda_{n}^{t} \alpha_{n} + \lambda_{n}^{e}p_{n}^{*}\alpha_{n}\zeta_{n}^{-1}}{R_{nm^{*}}^{s^{*}}} + \frac{\lambda_{n}^{t}\beta_{n}}{f_{nm^{*}}} - Z_{n}^{l}.
\end{equation}
\If {$\Upsilon_{n}^{*} > 0$}
\State Update $a_{nm^{*}} = 0$ and $a_{nm^{*}}^{s^{*}} = 0$.
\EndIf
\EndFor
\State Repeat the second phase until $\Upsilon_{n}^{*} \leq 0, \forall n \in \mathcal{N}_{\text{pof}}$.
\State \textbf{Output}: the optimal solution $\left( \boldsymbol{A}^{*}, \boldsymbol{P}^{*}, \boldsymbol{F}^{*} \right)$.
\end{algorithmic}
\end{algorithm}

The second phase is further divided into four steps: user-server association, subchannel allocation, transmit power control, and computation resource allocation. 
\begin{itemize}
\item \textit{User-server association}: The mobile users and MEC servers join a one-to-many matching via Alg.~\ref{Alg:User_MECServer}. The preference of a mobile user over potential MEC servers and the preference of an MEC server over the $|\mathcal{N}_{\text{pof}}|$ users are calculated according to~\eqref{Eq:Preference_UA_MobileUser} and~\eqref{Eq:Preference_UA_MECServer}, respectively. The optimal user association is obtained via Alg.~\ref{Alg:User_MECServer}, where a user sends the proposal to offload its task to the most preferred MEC server and an MEC server accepts a number of preferred mobile users based on its quota. Alg.~\ref{Alg:User_MECServer} stops when every user is either accepted by one MEC server or rejected by all preferred MEC servers.

\item \textit{Subchannel allocation}:  After Alg.~\ref{Alg:User_MECServer}, all mobile users which know their associated MEC servers and all mobile users offloading to the same MEC server join a one-to-one matching game. Each user in $\mathcal{N}_{m}$ calculates its preference over all subchannels according to~\eqref{Eq:Preference_CA_User} and the MEC server $m$ computes its preference on all subchannels over its associated users according to~\eqref{Eq:Preference_CA_Subchannel}. A user sends the proposal to the most preferred subchannel and a SeNB assigns a subchannel to the most preferred user, who has the highest preference among requested users, and rejects the proposals of other mobile users on that subchannel. Alg.~\ref{Alg:User_Subchannels} terminates when there is no bidding between mobile users and subchannels.

\item \textit{Transmit power control}: Once two matching algorithms for user association and subchannel allocation terminate, the transmit power of offloading users is allocated. Note that there are $S$ groups $\mathcal{G}_{s}$ and the transmit power of mobile users in the group $\mathcal{G}_{s}$ is achieved by solving the individual problem~\eqref{Eq:OPT_PA_Individual_ObjectiveFunction}. The approximate transmit power of a user in $\mathcal{G}_{s}$ is found via Alg.~\ref{Alg:bisection_powercontrol} by fixing the inter-cell interference at the maximal transmit power of the other users. After that, the inter-cell interference of mobile users can be well approximated, and these approximation problems are solved to find the optimal transmit power of mobile users in $\mathcal{G}_{s}$.

\item \textit{Computation resource allocation}: Computation resource allocation at MEC servers is executed when Alg.~\ref{Alg:User_Subchannels} terminates. Each MEC server $m$ allocates the computation resources to its associated users in $\mathcal{N}_{m}$ according to~\eqref{Eq:OPT_CRA_Optimal_Computation_Resource}.  
\end{itemize} 

The third phase acts as the second filter since we assume that all of the mobile users in $\mathcal{N}_{\text{pof}}$ offload their tasks to the MEC servers. After each iteration $ t $, it is necessary to determine whether or not the mobile users benefit from computation offloading with resource allocation from the second phase (line 28). If mobile users still do not benefit from computation offloading, they are possibly removed from the set of offloading users (line 30). Here, among those users, one with the lowest local computation overhead is selected and removed from the set of offloading users. The proposed algorithm converges and terminates when the matching of two consecutive iterations $ t $ remains unchanged (line 33). 

\subsection{Convergence and Stability} \label{SubSec:Convergence_Stability}
In order to analyze the convergence and stability of the proposed algorithm, let us consider the group $\mathcal{G}_{s}, \forall s \in \mathcal{S}$ and introduce the definition of group stable \cite{LeAnh2017Matching, Bayat2014Distributed}.  
\begin{definition} [Group Stable] \label{Def:GroupStable}
The group $\mathcal{G}_{s}$ is blocked by a group $\mathcal{G}_{s}', \forall s \in \mathcal{S}$, which comprises of at least one MEC server and one mobile user, if there exists another matching $\Psi'$ such that $\forall n,m \in \mathcal{G}_{s}'$,
\begin{enumerate}
\item $\phi_{n,\text{UA}}\left( \Psi'(n) \right) > \phi_{n,\text{UA}}\left( \Psi(n) \right)$,
\item $\phi_{m,\text{UA}}\left( \Psi'(m) \right) > \phi_{m,\text{UA}}\left( \Psi(m) \right)$.
\end{enumerate}
The group $\mathcal{G}_{s}$ is said to be group stable if it is not blocked by any group. In addition, matchings in the proposed algorithm are stable if and only if all groups $\mathcal{G}_{s}, \forall s \in \mathcal{S}$ are group stable. 
\end{definition}
In Definition~\ref{Def:GroupStable}, the first and second conditions express that all MEC servers and mobile users in $\mathcal{G}_{s}'$ prefer their matches in $\Psi'$ to their current matches in $\Psi$. In other words, the group $\mathcal{G}_{s}$ is blocked by a group $\mathcal{G}_{s}'$ if all MEC servers and mobile users in $\mathcal{G}_{s}'$ find a more preferable matching than their current matchings.

\begin{theorem}
Matchings $\mathcal{G}_{s}, \forall s \in \mathcal{S}$ generated by Algorithms~\ref{Alg:User_MECServer} and~\ref{Alg:User_Subchannels} are stable in each iteration of the proposed Alg.~\ref{Alg:JointAlgorithm}.
\end{theorem}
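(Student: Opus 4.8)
The plan is to reduce the stability of the per-iteration matchings to the pairwise stability of $\Psi^{*}$ and $\Omega^{*}$ already established in the preceding theorems, and then to invoke the equivalence stated at the end of Definition~\ref{Def:GroupStable}, which asserts that the matchings produced in an iteration are stable precisely when every group $\mathcal{G}_{s}$, $s \in \mathcal{S}$, is group stable.

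First I would fix an arbitrary iteration $t$ of Alg.~\ref{Alg:JointAlgorithm} and observe that within this iteration all data on which the preference relations $\succ_{n,\text{UA}}$, $\succ_{m,\text{UA}}$, $\succ_{n,\text{CA}}$, and $\succ_{s,\text{CA}}$ depend are frozen: the uniform power split $p_{n}^{s} = p_{n}^{\max}/S$, the uniform computation share $f_{nm} = f_{m}^{\max}/q_{m}$ used during matching, the channel gains $h_{nm}^{s}$, and the potential-offloading set $\mathcal{N}_{\text{pof}}$. Hence the preference lists of every player are well defined and unchanged throughout the execution of Alg.~\ref{Alg:User_MECServer} and Alg.~\ref{Alg:User_Subchannels}. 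By the theorem on Alg.~\ref{Alg:User_MECServer}, which is a deferred-acceptance procedure, the resulting association $\Psi^{*}$ admits no blocking pair $(m_{0},n_{0})$ in the sense of Definition~\ref{Def:BlockingPair}; and by Theorem~\ref{Theo:Convergence_StableMatching_CA}, for each SeNB $m$ the one-to-one matching $\Omega^{*}$ between $\mathcal{N}_{m}$ and $\mathcal{S}$ likewise admits no blocking pair.

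Next I would prove the contrapositive of group stability. Suppose, for some $s$, that $\mathcal{G}_{s}$ is blocked by a group $\mathcal{G}_{s}'$ containing a mobile user $n$ and an MEC server $m$ such that both strictly improve their preference values under some alternative matching $\Psi'$, i.e.\ conditions (1) and (2) of Definition~\ref{Def:GroupStable} hold. I would decompose this deviation into two cases. If $n$'s improvement requires re-associating $n$ with a server $m \neq \Psi^{*}(n)$, then $(m,n)$ is a blocking pair for $\Psi^{*}$ --- $n$ prefers $m$ to its current server, and $m$, being either below its quota or holding a less-preferred user under the quota logic of Alg.~\ref{Alg:User_MECServer}, prefers $n$ --- contradicting the stability of $\Psi^{*}$. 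If instead the improvement is realized at the already-selected server by moving $n$ to a different subchannel, then the analogous pair is blocking for $\Omega^{*}$ on that server, contradicting Theorem~\ref{Theo:Convergence_StableMatching_CA}. In either case we obtain a contradiction, so no blocking group exists, every $\mathcal{G}_{s}$ is group stable, and by Definition~\ref{Def:GroupStable} the matchings of iteration $t$ are stable; since $t$ is arbitrary, this holds in every iteration.

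The main obstacle I anticipate is exactly this case split: Definition~\ref{Def:GroupStable} phrases a blocking group through an abstract alternative matching $\Psi'$ rather than through an elementary swap, so the crux is to argue that any profitable joint deviation of the two-member group $\{n,m\}$ must manifest either as a single server re-association or as a single subchannel re-assignment, which is precisely what deferred acceptance rules out. A secondary and easier point is to check that the post-computation-offloading decision phase only deletes offloading users, an operation that can merely shrink preference lists and relax the quotas, so recomputing the matchings in the next iteration cannot introduce a blocking pair and stability is preserved across iterations.
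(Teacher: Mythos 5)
The paper does not actually supply a proof here: it states only that ``the proof is similar to that in Appendix~A in \cite{LeAnh2017Matching}'' and omits it. Your proposal is therefore necessarily a different route --- a self-contained argument built on results already in the paper rather than an external reference --- and it is the more useful of the two. Your structure is sound: freezing the preference data within an iteration (uniform power split, uniform computation share, fixed channel gains and $\mathcal{N}_{\text{pof}}$) makes the preference lists well defined; the earlier theorems give pairwise stability of $\Psi^{*}$ and of each $\Omega^{*}$ via deferred acceptance; and group stability is then obtained by contradiction. What the paper's citation buys is brevity; what your argument buys is that the reader can verify the claim from Definitions~\ref{Def:BlockingPair}--\ref{Def:StableMatching} and the two preceding theorems without consulting another paper.

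The one soft spot is the step you yourself flag: Definition~\ref{Def:GroupStable} allows a blocking coalition $\mathcal{G}_{s}'$ realized by an arbitrary alternative matching $\Psi'$, possibly involving several users and servers rearranging simultaneously, whereas your case split treats only a single server re-association or a single subchannel re-assignment. The reduction from coalition blocking to pairwise blocking is true here, but it needs one more sentence to be rigorous: because each MEC server's preference over sets of users is \emph{responsive} (induced coordinatewise by $\phi_{m,\text{UA}}(n)$ with a quota $q_{m}$, as in the college-admissions model), any coalition in which every member strictly improves must contain at least one user--server pair $(n,m)$ with $n \succ_{m,\text{UA}} n'$ for some $n' \in \Psi^{*}(m)$ (or $|\Psi^{*}(m)| < q_{m}$) and $m \succ_{n,\text{UA}} \Psi^{*}(n)$, i.e.\ a blocking pair in the sense of Definition~\ref{Def:BlockingPair}; the same extraction works for the one-to-one subchannel matching. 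With that observation made explicit, your contradiction goes through and the proof is complete. Your closing remark about the post-offloading phase only shrinking the player sets is also correct and is what lets the conclusion hold at every iteration, not just the first.
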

\begin{proof}
The proof is similar to that in Appendix A in \cite{LeAnh2017Matching}. It is therefore omitted.
\end{proof}

\begin{theorem}
The proposed algorithm generates a group stable $\mathcal{G}_{s}$ after a finite number of iterations and is guaranteed to converge. 
\end{theorem}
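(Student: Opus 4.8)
The plan is to establish the three ingredients of the statement one at a time: (a) the matchings $\mathcal{G}_{s}$ returned at termination are group stable; (b) Alg.~\ref{Alg:JointAlgorithm} performs only finitely many outer iterations; and (c) the sequence of objective values it produces converges. Ingredient (a) should follow immediately from the preceding theorem together with Definition~\ref{Def:GroupStable}: the preceding theorem asserts that in \emph{every} iteration the matchings of Algorithms~\ref{Alg:User_MECServer} and~\ref{Alg:User_Subchannels} are stable, and Definition~\ref{Def:GroupStable} equates stability of those matchings with group stability of all $\mathcal{G}_{s}$, $s\in\mathcal{S}$; since the algorithm outputs the last iteration's matching, the returned $\mathcal{G}_{s}$ is group stable.

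First I would dispose of the inner routines. Algorithms~\ref{Alg:User_MECServer} and~\ref{Alg:User_Subchannels} are deferred-acceptance procedures in which each rejection permanently deletes one entry from the rejected user's preference list ($\mathcal{M}_{n}$, resp.\ $\mathcal{S}_{n}$); hence the total number of proposals over a run is bounded by $|\mathcal{N}_{\text{pof}}|\,M$, resp.\ $|\mathcal{N}_{m}|\,S$, so both halt. Alg.~\ref{Alg:bisection_powercontrol} halves its search interval at every step and therefore stops after $\lceil \log_{2}\!\big((p_{n}^{s,\text{u}}-p_{n}^{s,\text{l}})/\varepsilon\big)\rceil$ steps, and the computation-resource step is just the closed-form assignment~\eqref{Eq:OPT_CRA_Optimal_Computation_Resource}. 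Thus a full pass through Phase~2 costs only finitely many operations.

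The heart of the argument is a monotone potential. Let $Z^{(t)}$ be the system-wide overhead after iteration $t$. I would verify $Z^{(t+1)}\le Z^{(t)}$ step by step: the user--server matching yields $Z(\boldsymbol{X}^{(t+1)})\le Z(\boldsymbol{X}^{(t)})$ by the theorem on Alg.~\ref{Alg:User_MECServer}; the subchannel matching only increases the offloading rates $R_{nm}^{s}$ (Theorem~\ref{Theo:Convergence_StableMatching_CA}), which decreases the offloading time and hence $Z$; the power step returns the (approximate) minimizer of~\eqref{Eq:OPT_PA_Individual_ObjectiveFunction} and so cannot increase the power term; and the resource step attains the global optimum of the convex problem OPT-CRA (Theorem~\ref{Theo:OPT_CRA_Convexity}). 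Moreover, the post-decision phase removes a user from $\mathcal{N}_{\text{pof}}$ only when $\Upsilon_{n}^{*}>0$, i.e.\ only when local execution is strictly cheaper than its current offloading cost, which again strictly lowers $Z$. Hence $\{Z^{(t)}\}$ is non-increasing and bounded below by $0$, so it converges, giving (c). For (b) I would layer a counting argument on top: on any iteration that does not trigger the stopping rule in line~33 at least one user leaves $\mathcal{N}_{\text{pof}}$, and $|\mathcal{N}_{\text{pof}}|\le N$; thus after at most $N$ outer iterations the offloading set is frozen, the matchings and allocations of two consecutive iterations coincide, and the algorithm terminates.

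The step I expect to be the main obstacle is justifying that last sentence rigorously: once $\mathcal{N}_{\text{pof}}$ stabilizes one still needs the matchings of two consecutive iterations to be \emph{identical} so that the line~33 criterion actually fires, yet the preferences feeding those matchings depend on the continuously updated $\boldsymbol{P}$ and $\boldsymbol{F}$. I would close this gap by noting that, with the offloading set fixed, the power update (Alg.~\ref{Alg:bisection_powercontrol} with the interference approximation) and the resource update~\eqref{Eq:OPT_CRA_Optimal_Computation_Resource} become deterministic maps of that set, hence stabilize, after which the preference orders of~\eqref{Eq:Preference_UA_MobileUser} and~\eqref{Eq:Preference_CA_Subchannel} are constant and the deferred-acceptance outcome---being the unique stable matching for fixed preferences---is reproduced; alternatively, one may replace the line~33 criterion by ``$Z^{(t)}$ unchanged up to a tolerance,'' which is immediate from the convergence of $\{Z^{(t)}\}$. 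Either route completes the proof.
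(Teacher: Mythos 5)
Your proof is correct and its skeleton matches the paper's: both rest on finiteness (finitely many players, hence finitely many preference relations, with something strictly decreasing across iterations) plus the already-established per-iteration stability of the matchings from Algorithms~\ref{Alg:User_MECServer} and~\ref{Alg:User_Subchannels}. But you go considerably further than the paper does. The paper's proof is four sentences: it notes the preference relations are finite, that "the number of preference relations reduces after each iteration," and that the power and computation-resource steps use "the simple approximation approach and convex technique," and from this concludes convergence. You supply two ingredients the paper omits: (i) an explicit monotone potential, namely that $Z^{(t)}$ is non-increasing and bounded below, verified step by step through each of the four stages of Phase~2 and through the post-decision removal of users with $\Upsilon_{n}^{*}>0$; and (ii) an explicit outer-iteration bound of $N$, obtained by counting departures from $\mathcal{N}_{\text{pof}}$. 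The obstacle you flag --- that the line-33 criterion requires the matchings of two consecutive iterations to be \emph{identical}, while the preferences feeding those matchings depend on the continuously updated $\boldsymbol{P}$ and $\boldsymbol{F}$ --- is a genuine gap in the paper's own argument, which it papers over with the phrase about the "simple approximation approach and convex technique." Your two proposed repairs (showing the updates become deterministic maps of a frozen offloading set, or weakening the stopping rule to a tolerance on $Z^{(t)}$) are both reasonable; the paper offers neither. The one place where your argument inherits the paper's level of informality is the claim $Z^{(t+1)}\le Z^{(t)}$ across the matching steps: the iteration-$(t+1)$ matchings are computed under different $\boldsymbol{P},\boldsymbol{F}$ than iteration $t$'s, so the cross-iteration comparison is not literally immediate from the per-stage monotonicity; but this is no worse than what the paper itself asserts.
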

\begin{proof}
The number of preference relations of the mobile users, MEC servers, and subchannels, i.e., $\succ_{n,\text{UA}}$, $\succ_{m,\text{UA}}$, $\succ_{n,\text{CA}}$, and $\succ_{s,\text{CA}}$, in each iteration is finite since the numbers of mobile users, MEC servers, and subchannels is finite. Additionally, matchings in each iteration are proved to be stable and the number of preference relations reduces after each iteration. Therefore, the group $\mathcal{G}_{s}, \forall s \in \mathcal{S}$, generated by the user association and channel allocation phases of the proposed algorithm, are all stable. Moreover, the transmit power of the offloading users and the computation resources at the MEC servers are derived based on the simple approximation approach and convex technique. As a result, the proposed algorithm is guaranteed to converge.
\end{proof}

The optimality property of a stable matching can be observed by \textit{weak Pareto optimality} (PO) \cite{Manlove2013Algorithmics}.  Denote by $ Z\left(\mathcal{G}\right) $ the total computation overhead obtained by the matching $ \mathcal{G} $ and the corresponding $ \left( \boldsymbol{P}, \boldsymbol{F} \right)$, where  $ \mathcal{G} = \left(\Psi, \Omega\right) $. The matching $ \mathcal{G} $ is weak PO if there is no other matching $ \mathcal{G}' $ with $ Z\left(\mathcal{G}'\right) \preceq Z\left(\mathcal{G}\right) $, which is strict for one user \cite{Jorswieck2011Stable}. 
\begin{theorem}
	The JCORAMS algorithm procudes a weak PO solution to the underlying problem.
\end{theorem}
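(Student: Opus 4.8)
The plan is to prove the statement by contradiction, leaning on the group stability already established for the matchings produced by Alg.~\ref{Alg:User_MECServer} and Alg.~\ref{Alg:User_Subchannels}, together with the optimality of the inner resource-allocation steps. First I would fix notation: let $\mathcal{G} = \left(\Psi^{*}, \Omega^{*}\right)$ with companion allocation $\left(\boldsymbol{P}^{*}, \boldsymbol{F}^{*}\right)$ denote the output of Alg.~\ref{Alg:JointAlgorithm}, and suppose there exists another matching $\mathcal{G}' = \left(\Psi', \Omega'\right)$ with its own optimal companion allocation such that $Z\left(\mathcal{G}'\right) \preceq Z\left(\mathcal{G}\right)$ with the inequality strict for at least one user $n_{0}$ (and, for the weak~PO case that actually has to be excluded, strict for every offloading user).

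Second, I would reduce the problem to the matching layer. For a fixed offloading decision, no $\left(\boldsymbol{P}, \boldsymbol{F}\right)$ can lower a user's overhead component below what JCORAMS attains: the individual power subproblem \eqref{Eq:OPT_PA_Individual_ObjectiveFunction} is quasiconvex and is solved to (approximate) optimality by the bisection Alg.~\ref{Alg:bisection_powercontrol}, and OPT-CRA is convex by Theorem~\ref{Theo:OPT_CRA_Convexity} with the closed-form optimum \eqref{Eq:OPT_CRA_Optimal_Computation_Resource}. Hence any improvement embodied in $\mathcal{G}'$ must come from a genuine change of assignment, not merely from a different resource vector. Moreover, the pre-offloading test \eqref{Eq:Condition_Offloading} and the post-offloading test $\Upsilon_{n}^{*} \le 0$ make $\mathcal{G}$ individually rational: every offloading user is at least as well off as executing locally, and every local user keeps the fixed overhead $Z_{n}^{l}$, which no rematching can alter. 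Consequently a dominating $\mathcal{G}'$ must coincide with $\mathcal{G}$ on the locally-executing users and can differ only on the offloading population.

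Third, I would invoke group stability. Let $\mathcal{C}$ be the set of players whose assignment differs between $\mathcal{G}$ and $\mathcal{G}'$; by the reduction above $\mathcal{C}$ is non-empty, and restricting it to a common subchannel group $\mathcal{G}_{s}$ yields a candidate group $\mathcal{G}_{s}'$ containing at least one MEC server and one mobile user. Because $Z\left(\mathcal{G}'\right) \preceq Z\left(\mathcal{G}\right)$, every mobile user in $\mathcal{G}_{s}'$ weakly prefers its match under $\Psi'$ (condition~1 of Definition~\ref{Def:GroupStable}); and since an MEC server's preference in \eqref{Eq:Preference_UA_MECServer} is precisely the computation overhead it incurs --- smaller meaning more preferred --- the servers also weakly prefer the lower-overhead reassignment (condition~2). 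When the dominance is strict for every user, both conditions become strict, so $\mathcal{G}_{s}'$ blocks $\mathcal{G}_{s}$, contradicting the group stability of $\mathcal{G}$ proved in the convergence/stability theorem; hence no strictly dominating $\mathcal{G}'$ exists and $\mathcal{G}$ is weak~PO. As independent corroboration I would also note the classical fact that a proposer-optimal deferred-acceptance outcome is weakly Pareto optimal for the proposing side among individually rational matchings \cite{Manlove2013Algorithmics, Jorswieck2011Stable}, applied in turn to the user--server game (Alg.~\ref{Alg:User_MECServer}) and the user--subchannel game (Alg.~\ref{Alg:User_Subchannels}).

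The hard part will be that the preference profiles here are \emph{endogenous}: through the interference term in \eqref{Eq:SINR} and the shared computation budget behind \eqref{Eq:OPT_CRA_Optimal_Computation_Resource}, each party's preference depends on the whole matching, so the textbook weak-PO theorem for deferred acceptance does not transfer verbatim. The delicate step is to justify that evaluating preferences at the \emph{candidate} matching $\mathcal{G}'$ (rather than at $\mathcal{G}$) still yields a bona fide blocking group in the sense of Definition~\ref{Def:GroupStable}; this is exactly where the convergence of the outer loop of Alg.~\ref{Alg:JointAlgorithm} to a group-stable fixed point, together with the monotone decrease of $Z$ along the outer iterations already exploited in the convergence theorem, is needed to close the gap.
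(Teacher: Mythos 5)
Your overall strategy---contradiction via stability---is the same family as the paper's, but the route is genuinely different in structure: you try to show that a dominating matching $\mathcal{G}'$ would induce a blocking group against the algorithm's output $\mathcal{G}$, contradicting the group stability already established, whereas the paper instead assumes the dominating competitor $\mathcal{G}'$ is \emph{unstable} and shows that repairing its instability (restoring individual rationality, or satisfying a blocking triple $(n,m,s)$) strictly lowers the system-wide overhead, so that $Z(\mathcal{G}) < Z(\mathcal{G}')$. Your version is closer to the textbook argument and, unlike the paper's, does not silently exclude the possibility of a \emph{stable} competitor that dominates. However, there is a concrete gap at the decisive step. You infer condition~1 of Definition~\ref{Def:GroupStable} --- that every user in the candidate group prefers its match under $\Psi'$ --- from the assumption $Z_{n}(\mathcal{G}') \le Z_{n}(\mathcal{G})$. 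But the preferences that Definition~\ref{Def:GroupStable} and the stability theorems are stated in are the proxy utilities \eqref{Eq:Preference_UA_MobileUser} and \eqref{Eq:Preference_CA_Subchannel}: a weighted combination of $\log_{2}(1+\sum_{s}\Gamma_{nm}^{s})$ and the resource share $f_{nm}$ for users, and a rate-minus-interference-penalty for subchannels. A strictly lower overhead $Z_{n}$ does not imply a strictly larger value of $\phi_{n,\text{UA}}$ or $\phi_{s,\text{CA}}$ (the weights $\varphi_{\text{UA}},\varepsilon_{\text{UA}},\varphi_{\text{CA}},\delta_{m'}^{s}$ are arbitrary design constants), so the set of reassigned players need not be a blocking group in the sense required, and the contradiction with group stability does not follow. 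The paper's proof sidesteps this mismatch by arguing directly in terms of overheads (at the cost of informality); your proof, as written, needs either an additional lemma tying overhead improvement to preference improvement, or a restatement of the blocking notion in terms of $Z_{n}$.

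A second, smaller overstatement is your reduction to the matching layer: you assert that for a fixed assignment no $(\boldsymbol{P},\boldsymbol{F})$ can do better than JCORAMS. That is true for $\boldsymbol{F}$ (OPT-CRA is convex and \eqref{Eq:OPT_CRA_Optimal_Computation_Resource} is exactly optimal), but not for $\boldsymbol{P}$: Alg.~\ref{Alg:bisection_powercontrol} solves the per-user problem \eqref{Eq:OPT_PA_Individual_ObjectiveFunction} under an \emph{approximated, fixed} inter-cell interference, which is not the joint optimum of \eqref{Eq:OPT_PA_Decomposed_ObjectiveFunction}. Hence a competitor $\mathcal{G}'$ could in principle dominate while using the same assignment and only a better joint power vector, and your argument would produce an empty ``blocking group.'' To your credit, you explicitly flag the endogeneity of preferences as the delicate point, and that diagnosis is accurate --- it is precisely where both your proof and the paper's remain incomplete.
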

\begin{proof}
	Let us consider $ \mathcal{G} $ to be a stable matching obtained by Alg.~\ref{Alg:JointAlgorithm} and assume that there is a unstable matching $ \mathcal{G}' $, which is PO to $ \mathcal{G} $. There are two reasons behind instability of  $ \mathcal{G}' $; it is either 1) lack of individual rationality or 2) blocked by another matching. 
	
	For case 1, assume that the user $ n $ is not individually rational, its computation overhead can be reduced by matching $ n $ with another $ \mathcal{G}(n) $ instead of the currently matching $ \mathcal{G}'(n) $. This decreases the computation overhead of the user $ n $ and then $ Z\left(\mathcal{G}\right) < Z\left(\mathcal{G}'\right) $ since the computation overhead of other users is left unchanged in $ \mathcal{G}' $. For case 2, we assume that the unstable matching $ \mathcal{G}' $ is blocked by $ \left(n,m,s\right) $, i.e., the user $ n $, server $ m $, and subchannel $ s $. The second case happens when $ n $ strictly prefers $ \left(m,s\right) $ to $ \mathcal{G}'\left(n\right) $. We can construct a new stable matching $ \mathcal{G} $ by assigning $ n $ to $ \left(m,s\right) $ instead of  $ \mathcal{G}'\left(n\right) $. That leads to $ Z_{n}\left(\mathcal{G}\right) < Z_{n}\left(\mathcal{G}'\right)$, and then since the other preferences remain unchanged, $ U\left(\mathcal{G}\right) < U\left(\mathcal{G}'\right)$. From both cases, we conclude that there is no unstable matching that can generate smaller system-wide computation overhead compared to the stable matching $ \mathcal{G} $. As a consequence, the matching produced by Alg.~\ref{Alg:JointAlgorithm} is stable and weak PO. 
\end{proof}

\section{Numerical Simulation}
In this section, we present numerical simulations to evaluate the performance of the proposed algorithm. 
\label{Sec:Simulation}
\subsection{Simulation Settings}
In order to evaluate our proposed algorithm, we use the following simulation settings for all simulations. We first consider the scenario where $9$ SeNBs are randomly deployed in a small indoor area of $250 \times 250$ m$^{2}$ to serve 36 mobile users. Each SeNB consists of 4 subchannels and has a quota of 4 users ($q_{m} = 4, \forall m \in \mathcal{M}$) . The bandwidth of each subchannel is $B_{s} = 5$ MHz, each mobile user has the maximum transmit power $p_{n}^{\max} = 100$ mW, and $n_{0} = -100$ dBm. The path-loss model is $-140.7 - 36.7\log_{10}\left(d\right)$, where $d$ (kilometers) is the distance from the user to the serving SeNB. For the computation task, we adopt the face recognition application in \cite{Lyu2017Multiuser, Chen2016Efficient, Soyata2012Cloud}, where the computation input data size is 420 KB and the total required number of CPU cycles is 1000 Megacycles. The CPU computational capability $f_{n}^{l}$ of the mobile user $n$ is randomly assigned from the set $\lbrace 0.5, 0.8, 1.0 \rbrace$ GHz \cite{Chen2016Efficient, Chen2015Decentralized} and the computational capability of each MEC server $4.0$ GHz. The weighted parameters of the computational time and energy consumption are both $0.5$, i.e., $\lambda_{n}^{t} = \lambda_{n}^{e} = 0.5, \forall n \in \mathcal{N}$. Finally, we set the values of $\varphi_{\text{UA}}$, $\varepsilon_{\text{UA}}$, $\varphi_{\text{CA}}$, and $\delta_{m}^{s}$ ($\forall m \in \mathcal{M}, s \in \mathcal{S}$) to $8 \times 10^{6}$, $0.2$, $1$, and $0.1$, respectively. For all the results, each plot is the average of 100 channel realizations and in each realization, the mobile user and MEC server locations are uniformly distributed randomly.

\subsection{Simulation Results}
In the following, we will present the performance of our proposed approach compared with several representative benchmark methods. For existing frameworks, the following solutions are considered:
\begin{enumerate}
	\item \textit{Local computing only}: there is no computation offloading. All mobile users perform computations locally, i.e., $a_{n} = 0, \forall n \in \mathcal{N}$.
	
	\item \textit{Offloading only}: all mobile users offload their computation tasks to the MEC servers i.e., $a_{n} = 1, \forall n \in \mathcal{N}$. This is achieved by running our algorithm without the pre-computation offloading decision and post-computation offloading decision steps. Note that when the number of mobile users exceeds the system capacity, some requested users are rejected by the algorithm, i.e., $\exists n \in \mathcal{N} \vert a_{n} = 0$.
	
	\item \textit{HODA} \cite{Lyu2017Multiuser}: the offloading decision, transmit power, and computation resources are determined in each cell independently. 
\end{enumerate}
To allow fair comparison between algorithms for single MEC server and multiple MEC servers, the final results, i.e., percentage of offloading users and system-wide computation overhead, do not take into account the inter-cell interference among offloading users. 

\begin{figure}[!ht]
\centering
\includegraphics[width=0.46\linewidth]{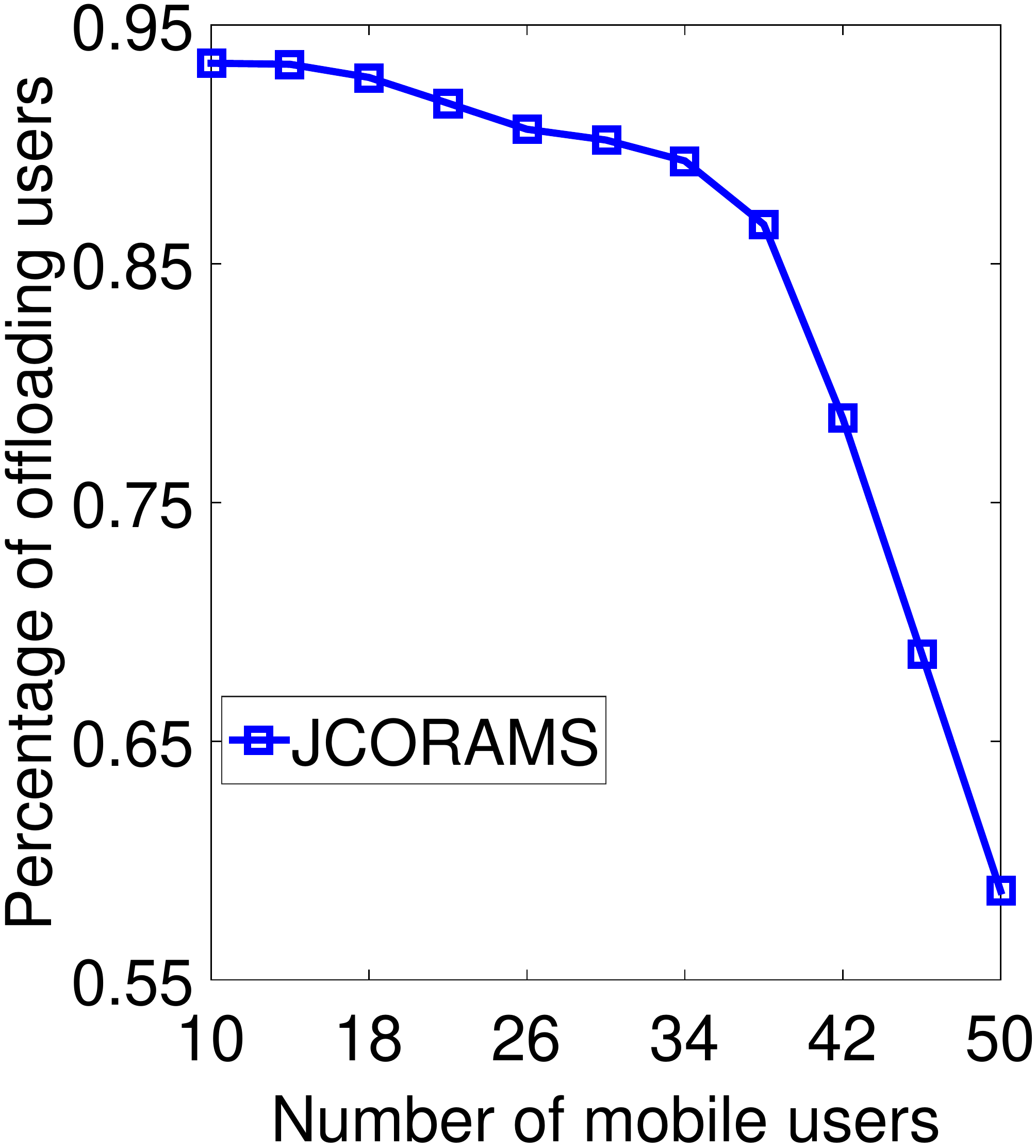}
\caption{Percentage of offloading users.}
\label{Fig:Percentage_of_OffloadingUsers}
\end{figure}
In the first experiment, we vary the number of mobile users from $10$ to $50$ with a step deviation of $ 4 $ and examine the percentage of offloading users. From Fig.~\ref{Fig:Percentage_of_OffloadingUsers}, the percentage of offloading users is relatively high\footnote{The percentage of offloading users should be $1$ when the number of mobile users is relatively small. However, mobile user and MEC server locations are both randomly generated in each simulation realization, so a user may not offload its computation task due to bad channel connections with MEC servers.} when the total number of mobile user is small. However, the percentage of offloading users gradually decreases when the total number of mobile users increases. This is reasonable since 1) each user might have a high probability to associate with its preferred MEC server and offload its computation task over a good subchannel and 2) small intercell interference makes mobile users profit more from computation offloading. In addition, when the number of mobile users keeps increasing, each mobile user needs to compete with the others for using radio resource and computation resource, and due to the limited number of MEC servers, number of subchannels in each cell, and quota of each MEC server, a portion of requested users must be rejected by the proposed algorithm.

\begin{figure}[!ht]
  \centering
    \subfloat[Percentage of offloading users.\label{Fig:Percentage_of_OffloadingUsers_vs_NoOfUsers_Comparison}]{\includegraphics[width=0.48\linewidth]{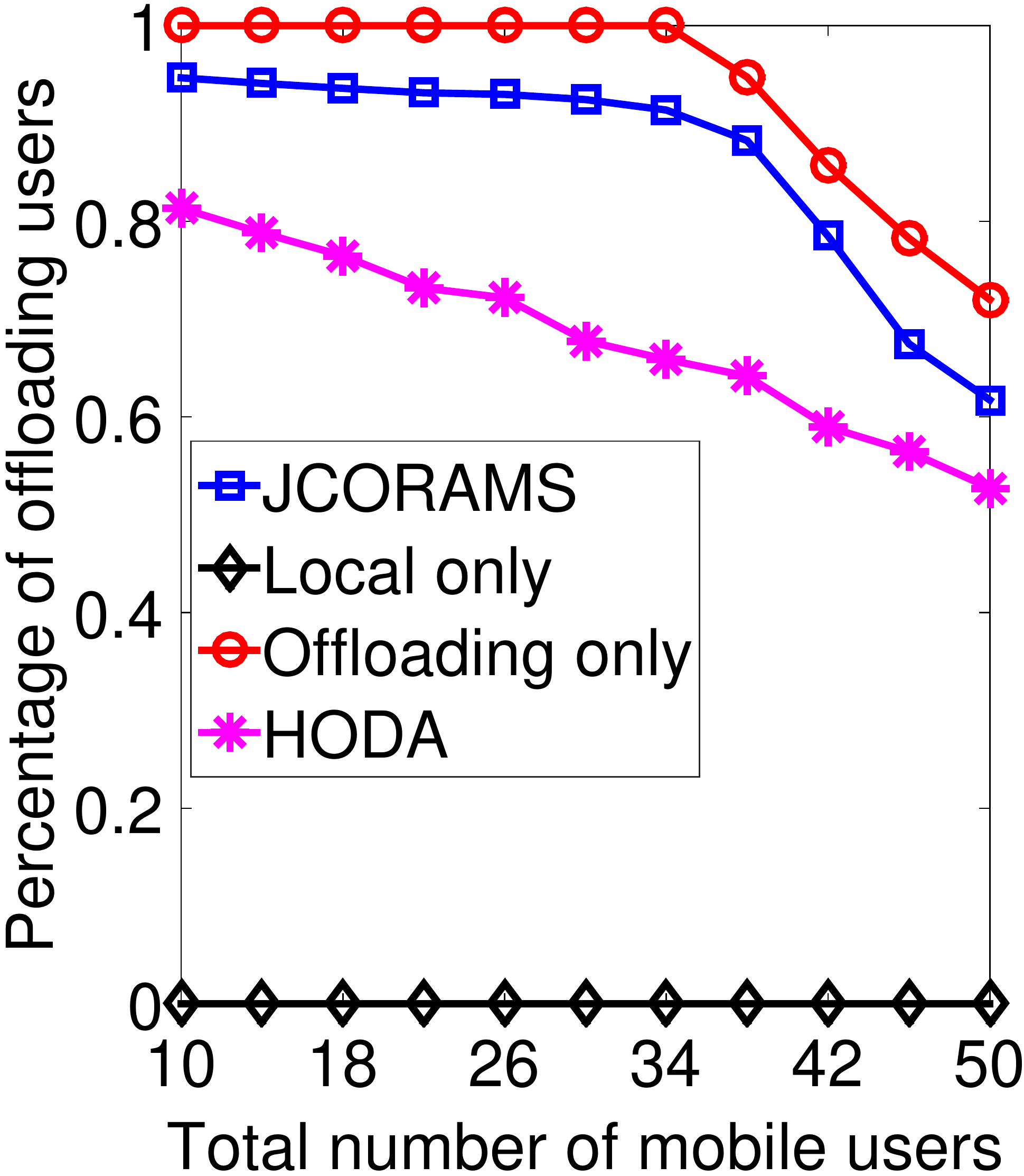}}\;
    \subfloat[Total computation overhead.\label{Fig:Computation_Overhead_vs_NoOfUsers_Comparison}]{\includegraphics[width=0.49\linewidth]{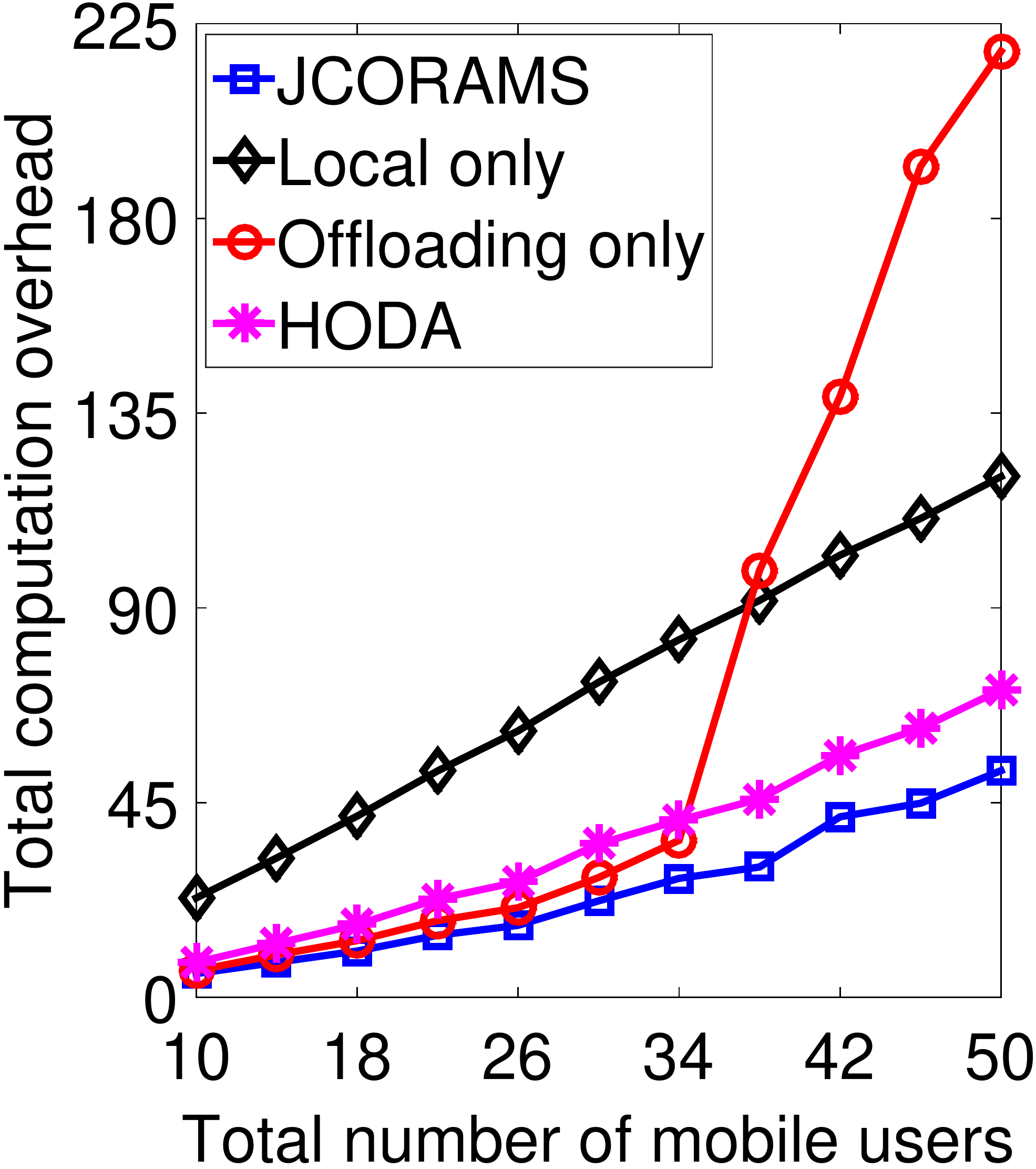}}
  \caption{Comparison of JCORAMS and three baseline frameworks under different numbers of mobile users.}
  \label{Fig:PerformanceComparison_vs_NoOfUsers}
\end{figure}
In the second experiment, we vary the number of mobile users and examine the performances in terms of the percentage of offloading users and system-wide computation overhead for our proposed approach and the above three existing frameworks. It is observed from Fig.~\ref{Fig:Percentage_of_OffloadingUsers_vs_NoOfUsers_Comparison} that the percentage of offloading users in the local computing only method is $0$ while that of the offloading only approach is $1$, which starts to decrease as the total number of mobile users become smaller and greater than $36$, respectively. This is due to the quota of each SeNB, the number of subchannels of each cell, and the number of MEC servers, and hence the system capacity in terms of the number of admitted offloading users is limited by $M\times\min\lbrace q, S  \rbrace$. As the total number of mobile users increases, the percentage of offloading users tends to decrease and the system-wide computation overhead increases. This can be explained as follows. First, the larger the total number of mobile users is, the lower the probabilities for each user to connect with its preferred MEC server and subchannel are, and the intercell interference among offloading users in different MEC servers becomes more severe. Second, only a fraction of mobile users are able to offload their computation tasks to the MEC servers while the remaining users do not benefit from computation offloading and thus must execute their tasks locally. Fig.~\ref{Fig:Computation_Overhead_vs_NoOfUsers_Comparison} also reveals that the performance of the offloading only algorithm becomes worse than that for the local computing only method when the total number of mobile users gets larger. This is due to competition among mobile users for the limited radio and computation resources. Compared with three baseline schemes, i.e., offloading only, local only, and HODA (for single MEC server), our proposed algorithm can achieve better performance in terms of the percentage of offloading users and yield a lower computation overhead. 

Similar to the second experiment, the third experiment compares the performances of our proposed algorithm with the existing alternative frameworks under different computation task profiles. It is shown in Figs.~\ref{Fig:Percentage_of_OffloadingUsers_vs_Alpha_Comparison}-\ref{Fig:Computation_Overhead_vs_Alpha_Comparison} that when the input data size $\alpha$ is large enough ($1$ MB in this case), the computation overhead of the offloading only method can reach that of the local computing only scheme. It is therefore better to offload fewer computation tasks to the MEC servers as the input data size $\alpha$ increases, i.e., a computation task with small data size is more preferable to computation offloading than one with high data size. The reason for this is that by increasing the input data size, the time cost and energy cost for offloading computation tasks become higher, as seen from Eqs.~\eqref{Eq:overhead_r_offloadingtime} and~\eqref{Eq:overhead_r_offloadingenergy}. This observation agrees with the performance lines of JCORAMS and HODA, where fewer users benefit from computation offloading and the system-wide computation overhead increases as the input data size $\alpha$ increases. From Figs.~\ref{Fig:Percentage_of_OffloadingUsers_vs_Beta_Comparison}-\ref{Fig:Computation_Overhead_vs_Beta_Comparison}, we can observe that the percentage of offloading users and the system-wide computation overhead increase with the number of CPU cycles $\beta$ required to accomplish the computation tasks. This is reasonable since both the local completion time and remote execution time increase as $\beta$ increases; however, the computational capability of a mobile user is often limited and an MEC server can offer offloading users with higher computational capability, i.e., mobile users therefore benefit from computation offloading if their computation tasks are executed by the MEC servers. From the above reasons, we conclude that it is better to offload a computation task with small input data size and large computation intensity rather than one with large input data size and small computation intensity. Obviously, the proposed algorithm achieves the better performance than the baseline solutions. 
\begin{figure}[!ht]
  \centering
    \subfloat[Percentage of offloading users.\label{Fig:Percentage_of_OffloadingUsers_vs_Alpha_Comparison}]{\includegraphics[width=0.48\linewidth]{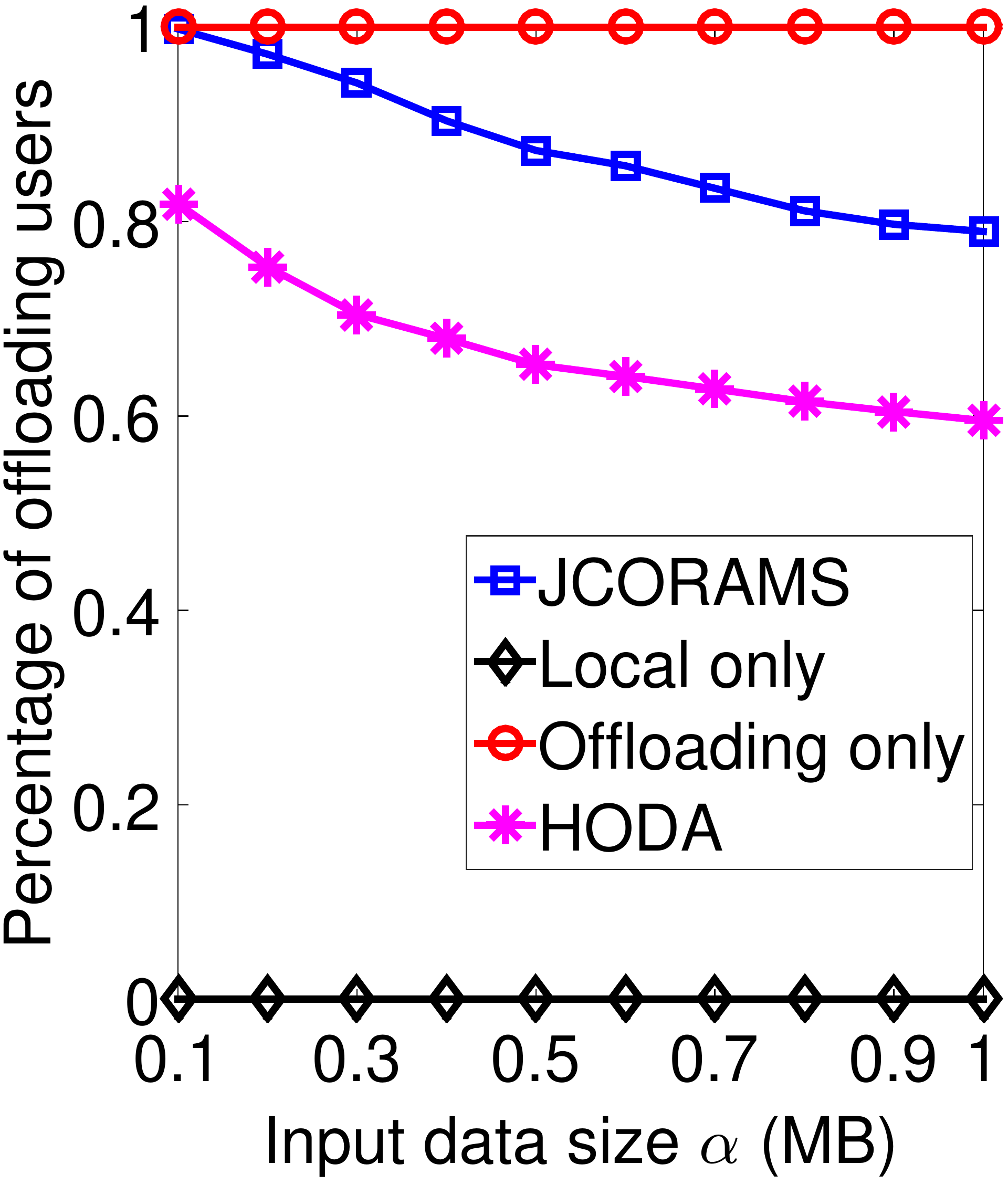}}\;
    \subfloat[Total computation overhead.\label{Fig:Computation_Overhead_vs_Alpha_Comparison}]{\includegraphics[width=0.475\linewidth]{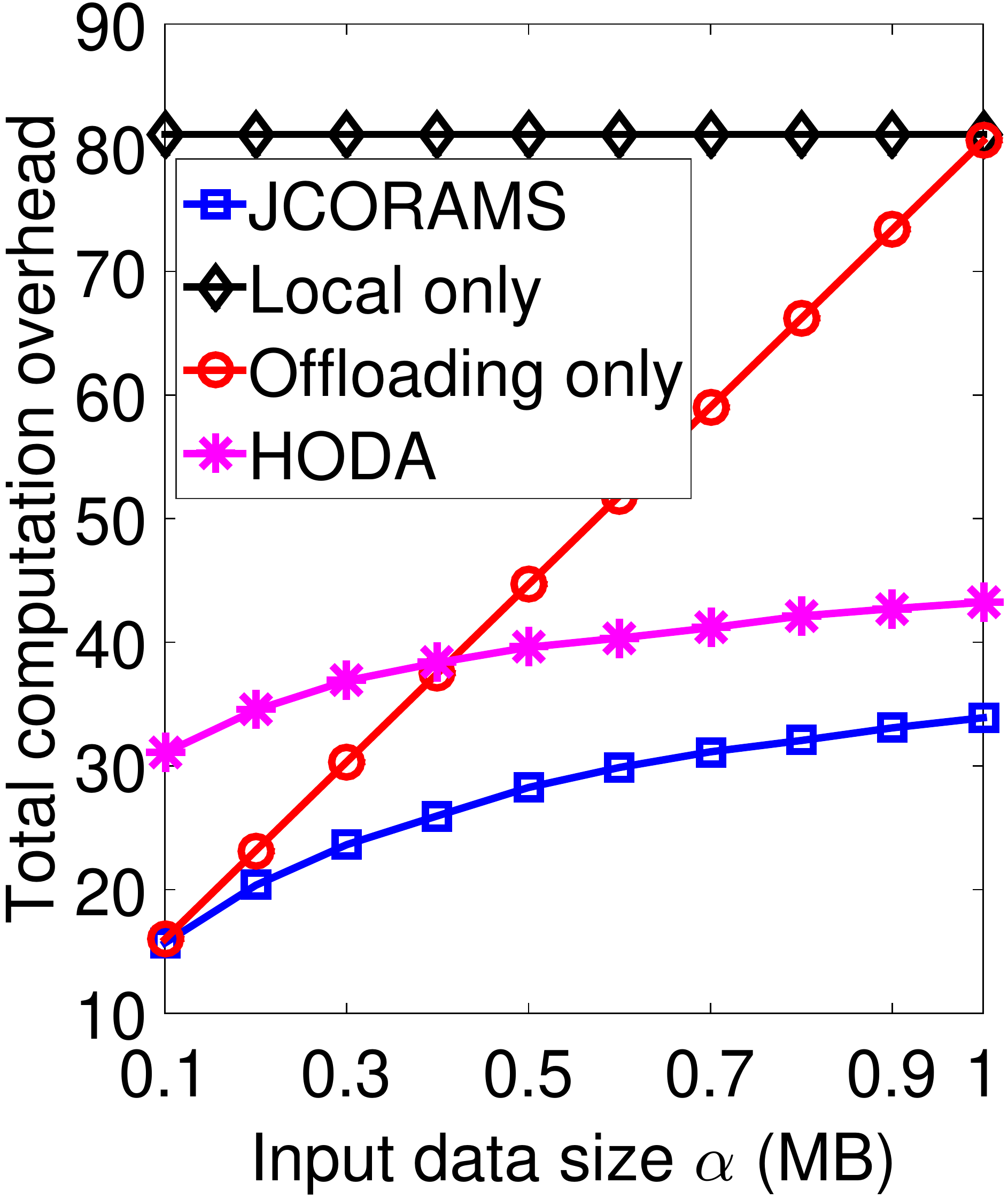}}\;
    \subfloat[Percentage of offloading users.\label{Fig:Percentage_of_OffloadingUsers_vs_Beta_Comparison}]{\includegraphics[width=0.48\linewidth]{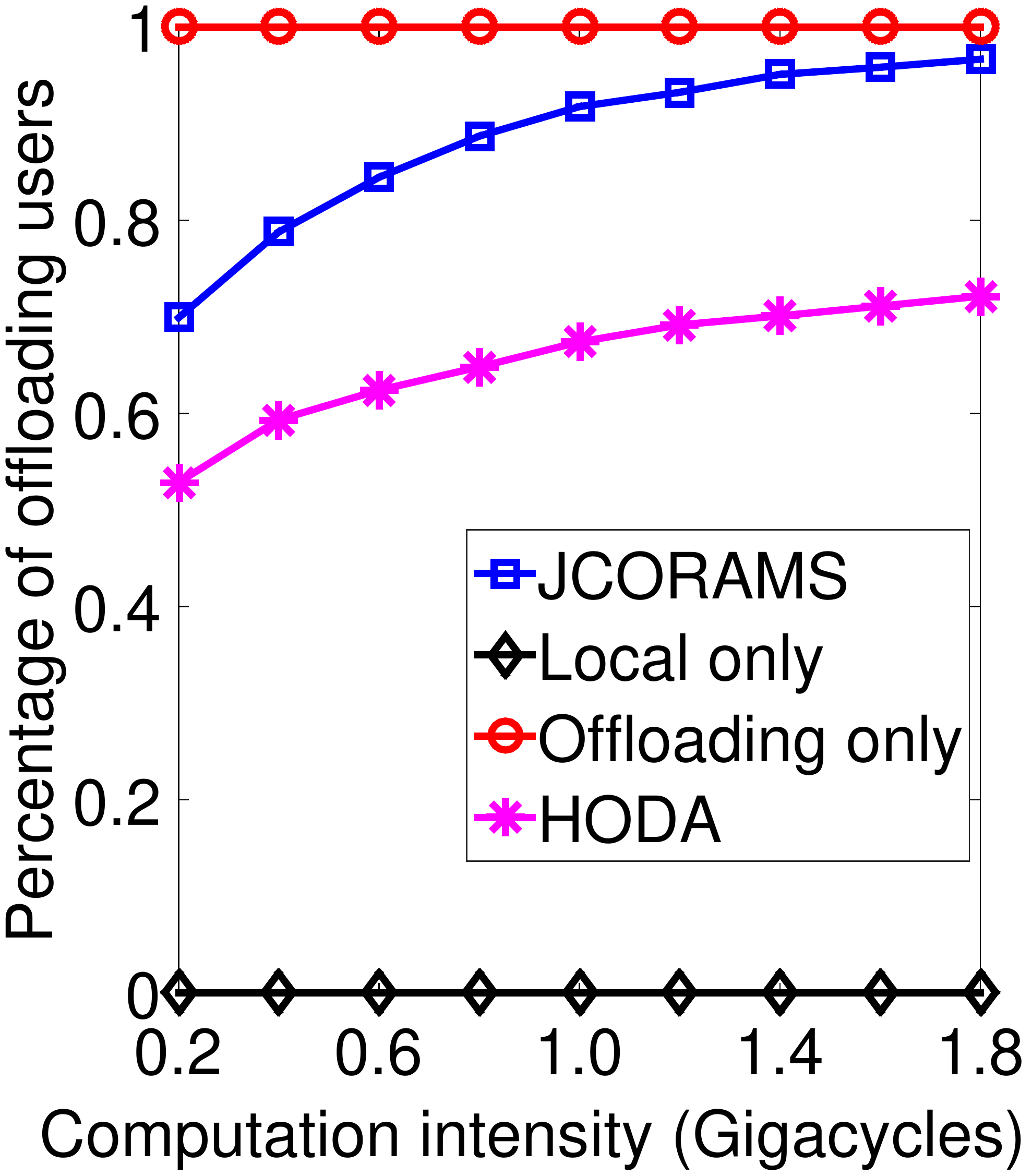}}\;
    \subfloat[Total computation overhead.\label{Fig:Computation_Overhead_vs_Beta_Comparison}]{\includegraphics[width=0.48\linewidth]{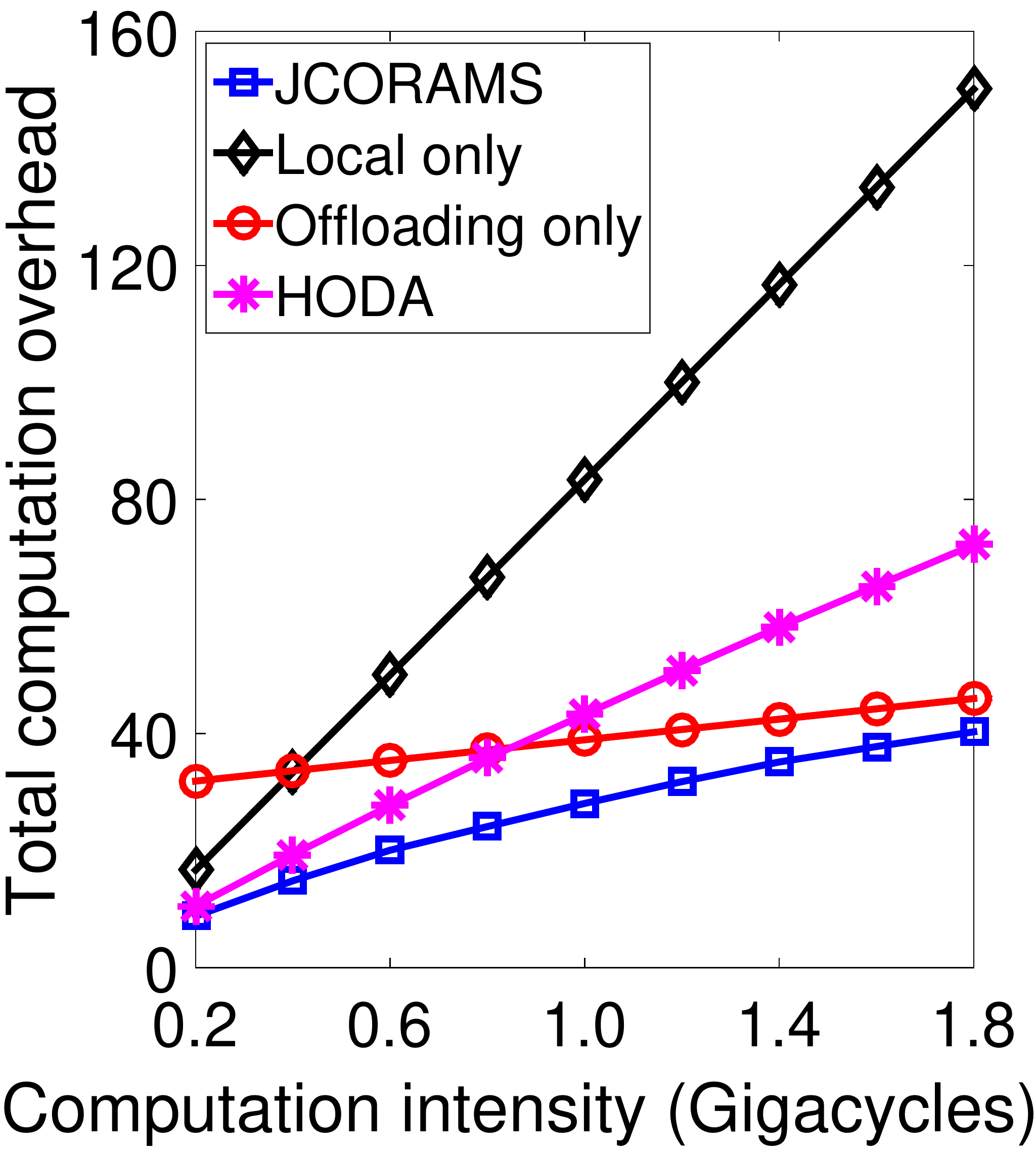}}
  \caption{Comparison of our proposed approach and three alternatives under different computation task profiles.}
  \label{Fig:PerformanceComparison_vs_AlphaBeta}
\end{figure}

Next, by varying the weighted parameter of the computational time $\lambda_{t}$ from $0.1$ to $0.9$ with a deviation step of $0.1$ and setting the weighted parameter of the energy consumption $\lambda_{e} $ to $1 - \lambda_{t}$, we further explore the performance comparison between our proposed algorithm and existing ones. It is worth noting that the weighted parameters are the same for all mobile users; however, the extension to different weighted parameters for different users does not affect the comparison among algorithms. Selecting a mobile user with $f_{l} = 0.8$ GHz as an example, we have $t^{l} = 1000 \times 10^{6}/0.8 \times 10^{9} = 1.25$ (seconds) and $E^{l} = 5 \times 10^{-27} \times 1000 \times 10^{6} \times \left( 0.8 \times 10^{9} \right)^{2} = 3.20$ (Joules). It is obvious that for local computing, the energy consumption is nearly three times larger than the completion time. As a result, with the increment of the weighted parameter of computation time $\lambda_{t}$ as well as the decrement of the weighted parameter of energy consumption $\lambda_{e}$, the system-wide computation overhead by the local only and offloading only schemes decreases and increases, respectively, as observed from Fig.~\ref{Fig:Computation_Overhead_vs_Weights_Comparison}. In addition, there are fewer users that tend to offload their computation tasks to the MEC servers due to the lower computation overhead from local computing, and the percentage of offloading users reduces, as shown in Fig.~\ref{Fig:Percentage_of_OffloadingUsers_vs_Weights_Comparison}. Here, the system-wide computation overheads by JCORAMS and HODA still increase and only start to decline when $\lambda_{t}$ is large enough. The main reason for this is the dominance of the offloading and execution time ($t^{\text{off}} + t^{\text{exe}}$) over the offloading energy consumption ($E^{\text{off}}$). Therefore, selection of the weighted parameters plays an important role in the achieved performances. Again, our proposed algorithm is superior to the baseline solutions.
\begin{figure}[!ht]
  \centering
    \subfloat[Percentage of offloading users.\label{Fig:Percentage_of_OffloadingUsers_vs_Weights_Comparison}]{\includegraphics[width=0.48\linewidth]{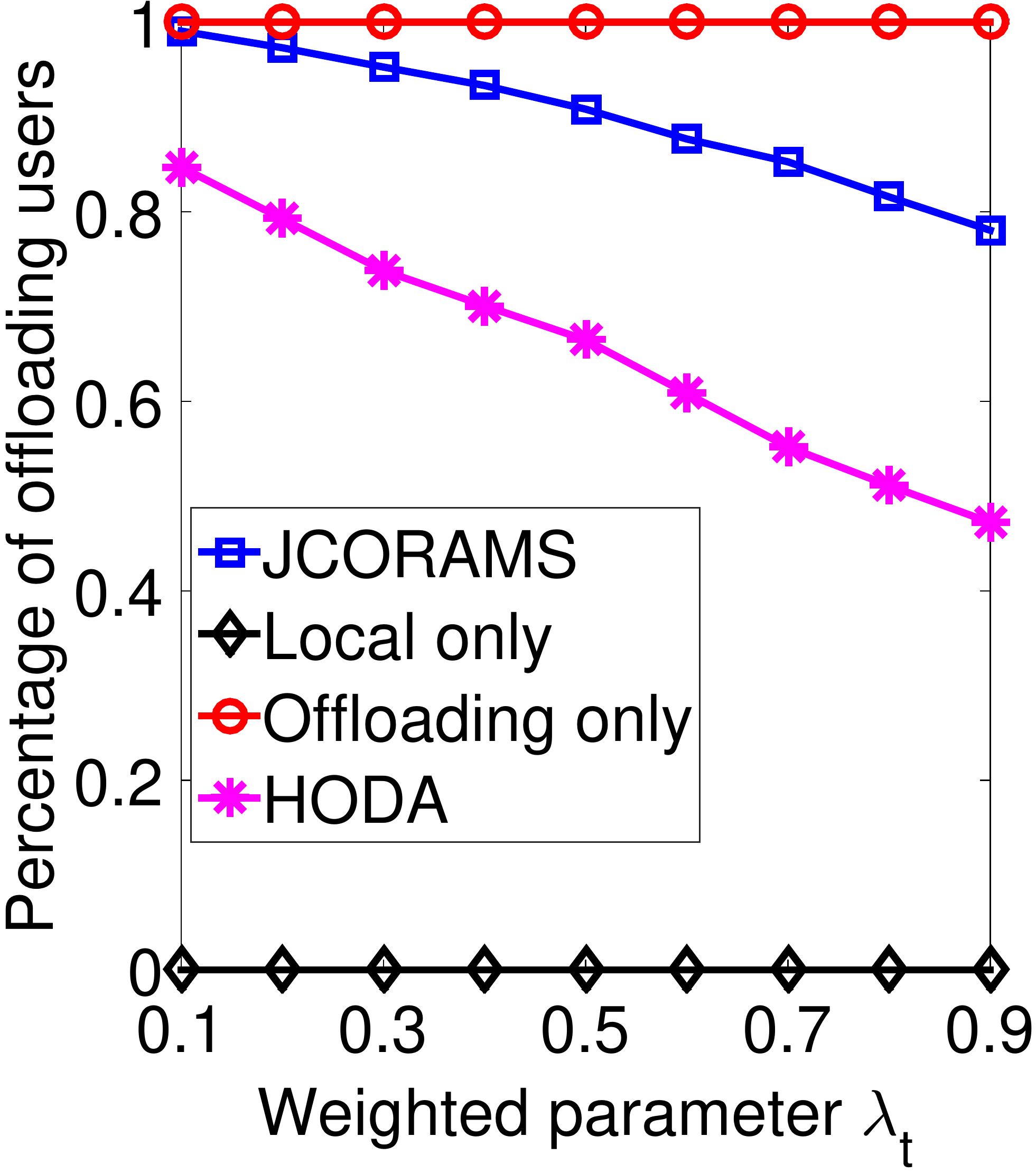}}\;
    \subfloat[Total computation overhead.\label{Fig:Computation_Overhead_vs_Weights_Comparison}]{\includegraphics[width=0.49\linewidth]{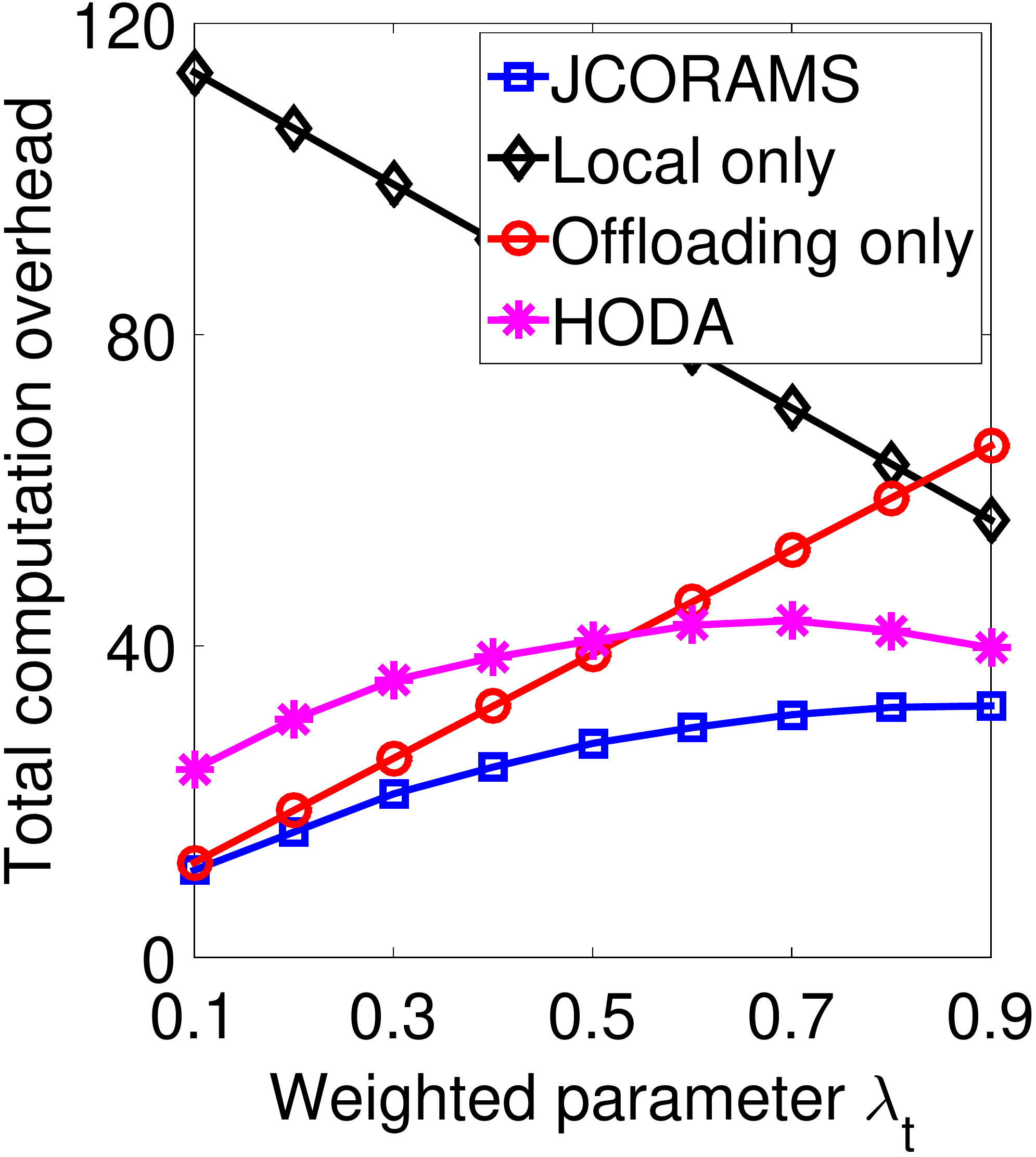}}
  \caption{Comparison of our proposed approach and three alternative frameworks under different weighted parameters.}
  \label{Fig:PerformanceComparison_vs_Weights}
\end{figure}

In the fifth experiment, we discuss the impacts of the maximum transmit power on the performances of the considered approaches. From the Fig.~\ref{Fig:Performance_vs_maxTransmitPower}, it is seen that when the maximum transmit power $p_{n}^{\max}$ increases, the percentage of offloading users and system-wide computation overhead increases and decreases, respectively, and all become saturated when $p_{n}^{\max}$ is sufficiently large. For example, with $N = 36$ and $p_{n}^{\max} = 0.55$ (W) for all mobile users, the percentage of offloading users is $100\%$, i.e., the performances of our proposed and the offloading only schemes are the same, and the computation overhead is $17.9$. This is due to the fact that increasing the offloading rates makes the time and energy costs for offloading the computation tasks smaller and as a consequence, there are more mobile users that tend to offload their computation tasks to the MEC servers. 
\begin{figure}[!ht]
  \centering
    \subfloat[Percentage of offloading users.\label{Fig:Percentage_of_OffloadingUsers_vs_maxTransmitPower_Comparison}]{\includegraphics[width=0.48\linewidth]{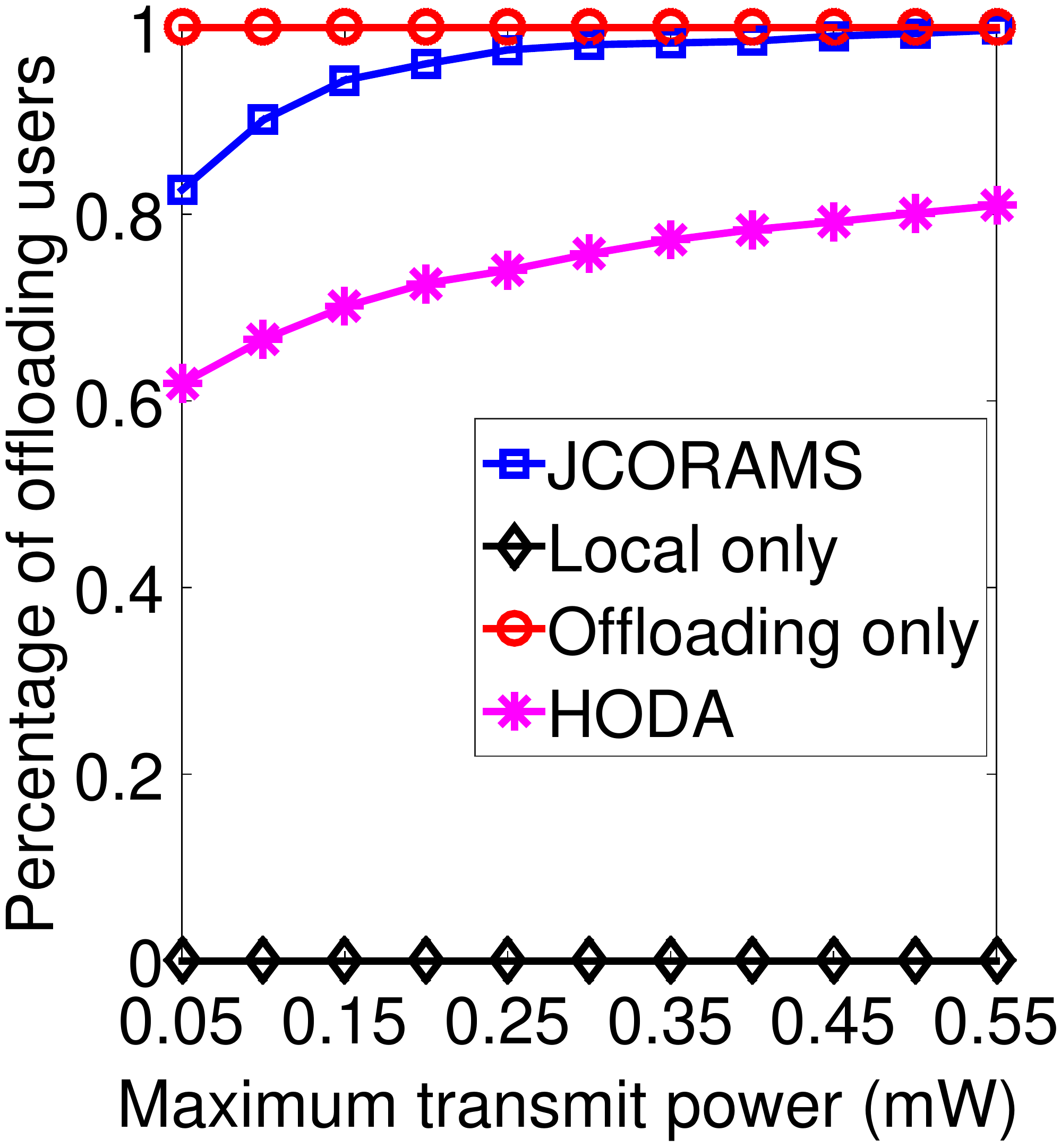}}\;\;
    \subfloat[Total computation overhead.\label{Fig:Computation_Overhead_vs_maxTransmitPower_Comparison}]{\includegraphics[width=0.48\linewidth]{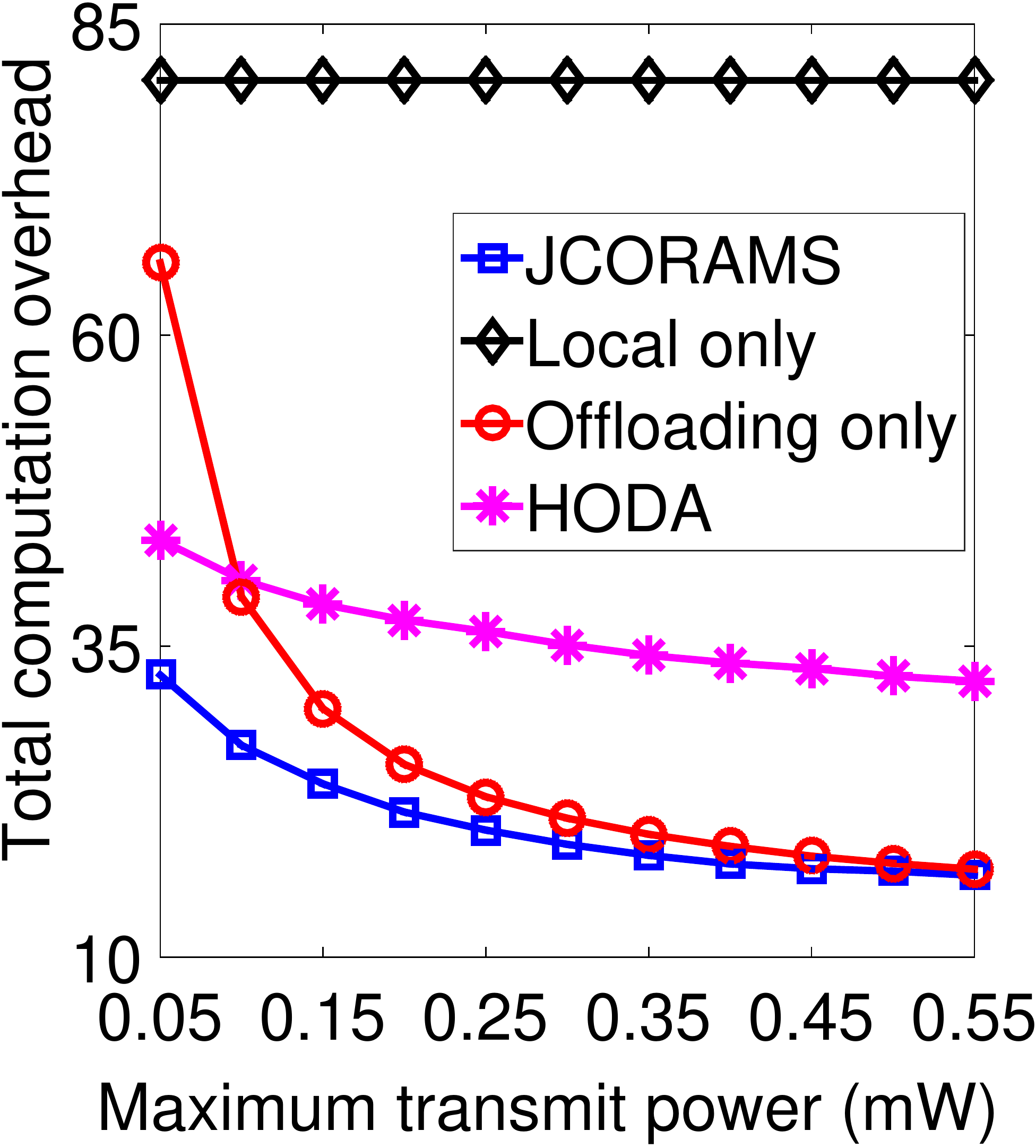}}
  \caption{Comparison of our proposed approach and three existing frameworks under variant maximum transmit power.}
  \label{Fig:Performance_vs_maxTransmitPower}
\end{figure} 

The final experiment represents the number of offloading users and system-wide computation overhead for the different algorithms when the maximum computational capability $ f_{0} $ of MEC servers varies from $0.5$ GHz to $4$ GHz. It is shown in Fig.~\ref{Fig:Percentage_of_OffloadingUsers_vs_CC_f0} that the percentage of offloading users monotonically increases with the computational capability of the MEC servers, and the increasing rate gradually decreases, i.e., increasing the computational capability of the MEC servers from $0.5$ GHz to $1.0$ GHz makes more users benefit from computation offloading than that from $1.0$ GHz to $1.5$ GHz. The reason is that when the computational capability of MEC servers is small, the execution time is high and so the remote computation overhead becomes higher than the local computation overhead. At the same time, because more mobile users benefit from computation offloading, the system-wide computation overhead decreases. In order to evaluate the optimality of the proposed algorithm, we compare JCORAMS with the hJTORA algorithm proposed in \cite{Tuyen2017Joint}, where at each step, MEC server and subchannel selections are heuristically found by solving all possible resource allocation problems, and the algorithm ends when there is no feasible way to increase the objective value. Observe from Fig.~\ref{Fig:Computation_Overhead_vs_CC_f0} that at the maximum computational capability $ f_{0} = 3.5 $ GHz the proposed algorithm generates the total computation overhead of $ 63.579 $, which is close to that of hJTORA with the gap of $ 8.86\% $. Fig.~\ref{Fig:Computation_Overhead_vs_CC_f0} also depicts that when the inter-cell interference is taken into consideration, offloading all computation tasks to the MEC servers is very inefficient. This is due to the fact that i) the locations of mobile users and MEC servers are randomly distributed in each simulation realization, so some mobile users may have very bad connections to the MEC servers, and ii) when the inter-cell interference exists and become severe, the offloading rates of offloading users are relatively low and the offloading time becomes much higher. In this scenario mobile users, with the bad connections and severe interference, should locally handle their computations, while the other send requests to the MEC servers for computation offloading. As a result, a joint optimization of offloading decision, resource allocation, and interference management is highly needed to improve the network performance, which is clearly demonstrated by the comparison between our proposed algorithm and hJTORA  with the local and offloading only schemes in Fig.~\ref{Fig:Performance_vs_CC_f0}.
\begin{figure}[!ht]
	\centering
	\subfloat[Percentage of offloading users.\label{Fig:Percentage_of_OffloadingUsers_vs_CC_f0}]{\includegraphics[width=0.475\linewidth]{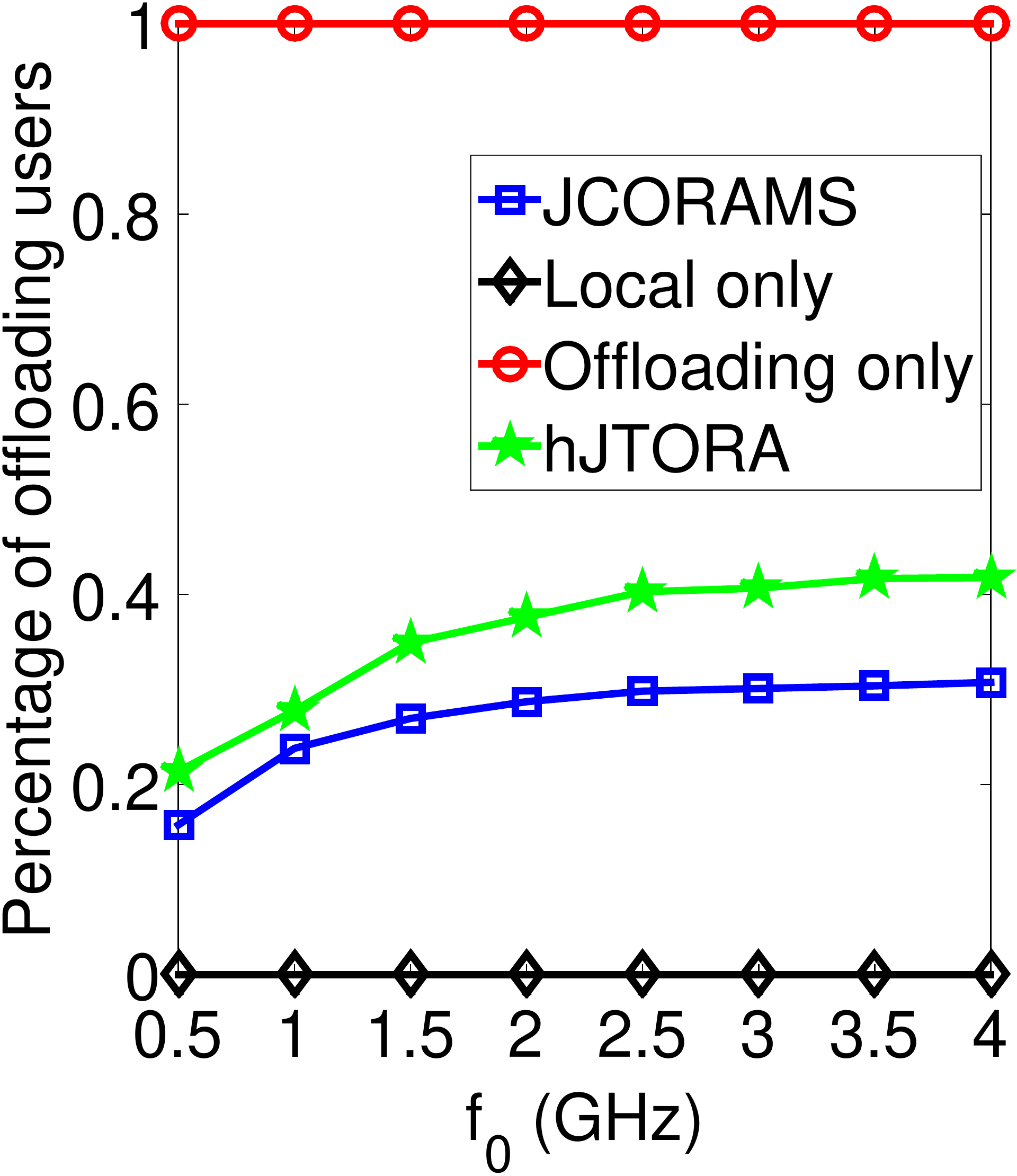}}\;\;
	\subfloat[Total computation overhead.\label{Fig:Computation_Overhead_vs_CC_f0}]{\includegraphics[width=0.46\linewidth]{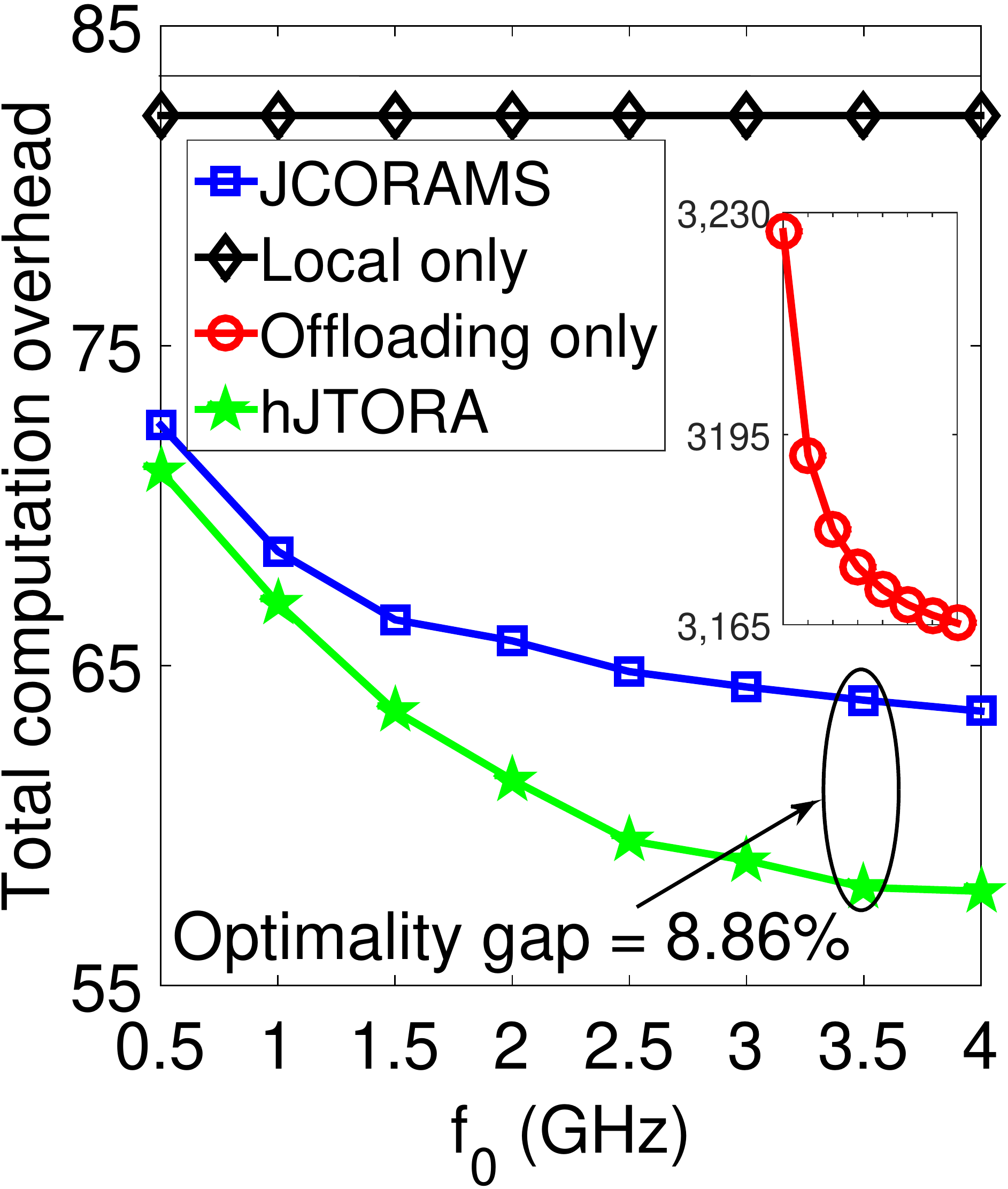}}
	\caption{Performance of the proposed algorithm under different computational capability of MEC servers.}
	\label{Fig:Performance_vs_CC_f0}
\end{figure}

\section{Conclusion and Future Work}
\label{Sec:Conclusion}
In this paper, we proposed an optimization problem for jointly determining the computation offloading decision and allocating the transmit power of mobile users and computation resources at the MEC servers. Our proposed framework is different from existing ones in that 1) we consider HetNets with multiple MEC servers and 2) propose a decentralized computation offloading scheme. The simulation results validated that the proposed algorithm can achieve better performances than alternative frameworks.

A joint framework of resource allocation and server selection in collocation edge computing systems is currently under investigation of our ongoing work. Moreover, we will take into account the effects of computation offloading to the quality of service of macrocell users. Finally, we will consider of hierarchical MEC systems for differentiated applications of mobile users where users with latency-sensitive applications offload their tasks to the first tier at small cells while users with latency-tolerant applications offload their tasks to the second MEC server tier at macrocells. 

%\appendices

%\bibliographystyle{IEEEtran}
%\bibliography{Matching_Theory_MEC}

% Generated by IEEEtran.bst, version: 1.13 (2008/09/30)

% that's all folks
\end{document}